\documentclass{article}

\usepackage{microtype}
\usepackage{graphicx}
\usepackage{subcaption}
\usepackage{booktabs} 
\usepackage{multirow}

\usepackage{hyperref}

\usepackage{enumerate}
\usepackage{enumitem}
\usepackage{svg}

\usepackage[noend,ruled,linesnumbered]{algorithm2e}

\usepackage[accepted]{icml2026}

\usepackage{amsmath}
\usepackage{amssymb}
\usepackage{mathtools}
\usepackage{amsthm}

\usepackage{enumitem}
\usepackage[most]{tcolorbox}

\usepackage[capitalize,noabbrev]{cleveref}

\theoremstyle{plain}
\newtheorem{theorem}{Theorem}[section]

\newtheorem{lemma}[theorem]{Lemma}

\theoremstyle{definition}
\newtheorem{definition}[theorem]{Definition}
\newtheorem{assumption}[theorem]{Assumption}
\theoremstyle{remark}
\newtheorem{remark}[theorem]{Remark}

\DeclareMathOperator*{\col}{ColSpace}
\DeclareMathOperator*{\row}{RowSpace}
\DeclareMathOperator*{\rank}{rank}
\DeclareMathOperator*{\spn}{Span}

\newcommand{\Sref}[1]{\S\ref{#1}}

\usepackage[textsize=tiny]{todonotes}

\begin{document}

\twocolumn[
  \icmltitle{Reverse-Engineering Model Editing on Language Models}

\icmlsetsymbol{equal}{*}
  \icmlsetsymbol{cor}{$\dagger$}
  \icmlsetsymbol{sqz}{1}
  \icmlsetsymbol{ecnu}{2}
  \icmlsetsymbol{thu}{3}
  \icmlsetsymbol{xa}{4}
  \icmlsetsymbol{iie}{5}
  \begin{icmlauthorlist}
    \icmlauthor{Zhiyu Sun}{equal,sqz,ecnu}
    \icmlauthor{Minrui Luo}{equal,thu,sqz}
    \icmlauthor{Yu Wang}{iie,sqz}
    \icmlauthor{Zhili Chen}{ecnu}
    \icmlauthor{Tianxing He}{thu,sqz,xa,cor}

  \end{icmlauthorlist}
\icmlcorrespondingauthor{Tianxing He}{hetianxing@mail.tsinghua.edu.cn}

  \icmlkeywords{Machine Learning, ICML}

  \vskip 0.3in
]

\printAffiliationsAndNotice{\icmlEqualContribution $^1$Shanghai Qi Zhi Institute $^2$Software Engineering Institute, East China Normal University, Shanghai, China $^3$Institute for Interdisciplinary Information Sciences, Tsinghua University $^4$Xiongan AI Institute $^5$Institute of Information Engineering, Chinese Academy of Sciences, Beijing, China}

\begin{abstract}

Large language models (LLMs) are pretrained on corpora containing trillions of tokens and, therefore, inevitably memorize sensitive information. Locate-then-edit methods, as a mainstream paradigm of model editing, offer a promising solution by modifying model parameters without retraining. However, in this work, we reveal a critical vulnerability of this paradigm: the parameter updates inadvertently serve as a side channel, enabling attackers to recover the edited data. We propose a two-stage reverse-engineering attack named \textit{KSTER} (\textbf{K}ey\textbf{S}paceRecons\textbf{T}ruction-then-\textbf{E}ntropy\textbf{R}eduction) that leverages the low-rank structure of these updates. First, we theoretically show that the row space of the update matrix encodes a ``fingerprint" of the edited subjects, enabling accurate subject recovery via spectral analysis. Second, we introduce an entropy-based prompt recovery attack that reconstructs the semantic context of the edit. Extensive experiments on multiple LLMs demonstrate that our attacks can recover edited data with high success rates. Furthermore, we propose \textit{subspace camouflage}, a defense strategy that obfuscates the update fingerprint with semantic decoys. This approach effectively mitigates reconstruction risks without compromising editing utility. Our code is available at \url{https://github.com/reanatom/EditingAttack}.

\end{abstract}

\section{Introduction}

Large language models (LLMs) are pretrained on corpora containing trillions of tokens~\cite{allen2024physics,wang2024survey,kong2025demystifying}, which poses a risk of memorizing sensitive information~\cite{nasr2023scalable,cheng2025effective}, such as personal data. Model editing~\cite{wang2024knowledge} has emerged as a promising solution to mitigate this issue by modifying specific parameters without retraining~\cite{meng2022locating,mitchellfast}. Among various editing approaches, the locate-then-edit paradigm~\cite{meng2022locating,meng2023mass,fangalphaedit} has drawn wide research interest. It first identifies the parameters that store specific knowledge, then edits them. This paradigm offers strong interpretability and zero inference overhead, making it a leading candidate for cost-effective privacy protection~\cite{hossain2025investigating,li2025editing}.

While prior research has primarily focused on editing effectiveness and generalization, a critical safety question remains underexplored: \textit{Does the editing process itself create a leakage side-channel}? Consider a scenario in which malicious attackers gain access to the model parameters before and after editing. Although the sensitive information resides in the pre-edit model, locating it within billions of parameters is computationally prohibitive. However, attackers can efficiently pinpoint and extract the knowledge that is erased by analyzing the parameter differences induced by editing, paradoxically turning the safety mechanism into a vulnerability.

In this work, we address this question and reveal a latent vulnerability in the locate-then-edit paradigm, demonstrating that the act of editing can serve as a side channel, exposing the very information it aims to edit. Specifically, through analyzing the algebraic structure of widely used editing algorithms, we theoretically prove that the row space of the parameter difference matrix encodes a unique mathematical ``fingerprint'' of the edited subjects. With this insight, we propose a two-stage reverse-engineering attack framework. First, we employ spectral analysis on the update matrix to execute subject inference, accurately pinpointing the edited subject (e.g., \texttt{Alice}). Second, we introduce an entropy-based prompt recovery method to reconstruct the semantic context associated with the edit (e.g., \texttt{\{\}'s phone number is \{\}}).

We conduct comprehensive experiments to evaluate the performance of our framework across three models and three editing methods. The results demonstrate the effectiveness of our approach, showing that we can accurately recover both the subject and the prompt. For instance, our attack method achieves a subject recall rate of over 99\% and a semantic similarity of 88\% when applied to Llama3-8B-Instruct~\cite{dubey2024llama} on the CounterFact~\cite{meng2022locating} dataset.

Moreover, to mitigate the edit-leakage risk, we propose a defense strategy named \textit{subspace camouflage}. By injecting semantic decoys during the update process, this method obfuscates the spectral fingerprint of the target subjects, thereby effectively misleading algebraic inversion without compromising editing efficacy.

Our contributions are summarized as follows:
\begin{itemize}[leftmargin=*, itemsep=2pt, topsep=0pt, parsep=0pt]
    
    \item We propose a reverse-engineering attack \textit{KSTER} (\textbf{K}ey\textbf{S}paceRecons\textbf{T}ruction-then-\textbf{E}ntropy\textbf{R}eduction) against model editing, which is a two-stage framework that inverts parameter updates to recover both subjects and associated semantic prompts. We provide a theoretical analysis of the weight update in subject attack, exhibiting the reconstruction of the linear subspace spanned by the key vectors of edited knowledge. 
    
    \item We introduce subspace camouflage, a defense strategy that injects semantic decoys into the update subspace to obscure algebraic fingerprints while preserving editing utility. We provide a theoretical result for the robustness of this defense strategy against a broad class of white-box attacks. 
    
    \item Experiments across multiple LLMs show that our attacks recover edited data with high success rates, while our defense effectively mitigates leakage with negligible performance degradation.
\end{itemize}

\begin{figure*}[t]
    \centering

    \includegraphics[width=0.77\linewidth]{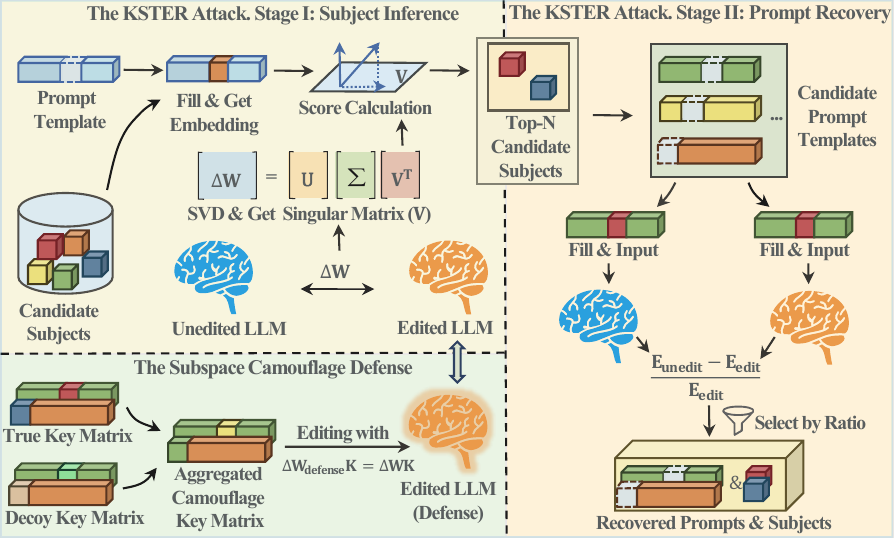}
    \caption{Overview of our proposed two-stage reverse-engineering attack framework \textit{KSTER} in a white-box setting, along with an associated defense strategy \textit{subspace camouflage}.}
    \vspace{-3mm}
    \label{fig:illustration}
\end{figure*}

\vspace{-1mm}
\section{Related Work}

This work investigates the reverse-engineering risks of model editing. Due to the space limitation, we provide a brief overview of related work here and a more comprehensive discussion in Appendix \ref{detail_Related_work}.

Prior work reveals various vulnerabilities in LLMs, ranging from security attacks like jailbreaking~\cite{wei2023jailbroken,shen2024anything,deng2024masterkey} and backdoors~\cite{cai2022badprompt,shi2023badgpt} to privacy risks such as membership inference~\cite{kaneko2025sampling,feng2025exposing} and personally identifiable information leakage~\cite{nakka2024pii,kim2023propile}. To mitigate these risks, model editing has emerged as a surgical solution~\cite{hossain2025investigating, li2025editing}. While current editing techniques span from external memory~\cite{mitchell2022memory, huang2023transformer, hartvigsen2023aging, yu2024melo} to meta-learning~\cite{de2021editing, mitchellfast, zhang2024instructedit}, the locate-then-edit paradigm~\cite{meng2022locating,meng2023mass,fangalphaedit} has emerged as a popular approach due to its parameter efficiency and interpretability. However, research on the privacy risks of this paradigm remains scarce. 

Related to our approach, \citet{youssef2025tracing} recovers the pre-edit behavior of models, but cannot reconstruct the editing prompts and original outputs. \citet{patilcan} detects original outputs by analyzing intermediate layer logits, while relying on the relatively strong assumption that the attacker has access to the original prompts. In contrast, our work accurately recovers all editing information by analyzing parameter updates, without requiring original prompts.

\vspace{-1mm}
\section{Preliminaries}\label{preliminary}
\vspace{-1mm}
\subsection{Model Editing in LLMs} \label{sec:formalization}

Let $f_{\theta}$ be an LLM parameterized by weights $\theta$. We define a batch of knowledge triples as $\mathcal{D}^t = \{d_1^t, \dots, d_N^t\}$, where each tuple $d_i^t = (s_i, r_i, o_i)$. Here, $t$ represents knowledge triples, $r_i$ is a prompt template containing two placeholders: the first for the subject $s_i$, which together form the query context, and the second for the target output $o_i$ (e.g., $s_i$: \texttt{Alice}, $r_i$: \texttt{\{\} lives in \{\}}, $o_i$: \texttt{New York}). 

In this work, we focus on the locate-then-edit paradigm, specifically ROME \cite{meng2022locating} and its variants \cite{meng2023mass,fangalphaedit}. These methods implement knowledge editing by updating specific feed-forward networks (FFNs) that store factual associations. We focus on the single-time editing setting in the main text and leave sequential editing results in Appendix \ref{additional experiments}. 

For a given edit input, let key vector $\mathbf{k}_*$ denote the hidden state of the subject's last token at the target layer, where $\mathbf{k}_*$ is mapped to a value vector $\mathbf{v}=\mathbf{W}\mathbf{k}_*$ by the FFN weight matrix $\mathbf{W}$. The objective is to find an update \(\Delta\mathbf{W}\) that remaps \(\mathbf{k}_*\) to a desired target value \(\mathbf{v}_*\), i.e., \((\mathbf{W}+\Delta\mathbf{W})\mathbf{k}_*=\mathbf{v}_*\). This is typically formulated as a (constrained) least-squares problem, aiming to satisfy the edit requirement without compromising the preserved (general) knowledge $\mathbf{K}_p$.

For single-fact editing ($N=1$), ROME~\cite{meng2022locating} defines the \textit{residual vector} as $\mathbf{r}_* = \mathbf{v}_* - \mathbf{v}$ and derives a rank-one analytical solution:
\begin{equation}\label{eq: rome}
    \Delta \mathbf{W}_{\text{ROME}} = \frac{\mathbf{r}_* \mathbf{k}_*^\top \mathbf{C}^{-1}}{\mathbf{k}_*^\top \mathbf{C}^{-1} \mathbf{k}_*},
\end{equation}
where $\mathbf{C} \triangleq \mathbf{K}_p \mathbf{K}_p^\top$ is a pre-cached matrix, typically estimated using key vectors sampled from Wikipedia.

To support batch editing ($N > 1$), MEMIT~\cite{meng2023mass} generalizes this approach: Let $\mathbf{K} \in \mathbb{R}^{d_{\rm in} \times N}$ denote the stacked key matrix containing key vectors for the batch of samples, and $\mathbf{R} \in \mathbb{R}^{d_{\rm out} \times N}$ denote the corresponding residual matrix. Its update is:
\begin{equation}\label{eq: memit}
    \Delta \mathbf{W}_{\text{MEMIT}} = \mathbf{R} \mathbf{K}^\top \left( \mathbf{C} + \mathbf{K}\mathbf{K}^\top \right)^{-1} .
\end{equation}

To mitigate model collapse\footnote{Refers to the severe degradation of general model capabilities as sequential edits accumulate.} in sequential editing, AlphaEdit~\cite{fangalphaedit} maintains general model capabilities by projecting updates onto the null space of $\mathbf{K}_p$: 

\begin{equation}\label{eq: alphaedit}
    \Delta \mathbf{W}_{\text{AlphaEdit}} = \mathbf{R} \mathbf{K}^\top \mathbf{P} \left( \mathbf{K}\mathbf{K}^\top \mathbf{P} + \mathbf{I} \right)^{-1},
\end{equation}
where $\mathbf{P}$ is the orthogonal projection matrix onto the null space of $\mathbf{K}_p$.

\vspace{-1mm}
\subsection{Threat Model} \label{sec:threat_model}

The attacker's goal is to reverse-engineer the knowledge $\mathcal{D}^t$ that has been edited. We consider two attacker settings: (1) \textit{White-box}: The attacker has access to (i) the model parameter weights before ($\theta$) and after ($\theta'$) the edit; and (ii) the editing algorithm (e.g., MEMIT) and its hyperparameters. (2) \textit{Gray-box}: The attacker has access only to the output logits of the model before and after editing, and the model parameters are inaccessible. In this work, we mainly explore the white-box setting, and we also include a gray-box baseline.

In both settings, we assume the attacker cannot access $\mathcal{D}^t$ but (1) has access to the covariance matrix $\mathbf{C}$ (we will show this assumption can be relaxed with an ablation study in \Sref{sec:attack_eval}); (2) has access to a joint candidate pool $\mathcal{S}_{\rm cand} \times \mathcal{R}_{\rm cand}$ containing the ground-truth edit(s) $(s, r)$. This assumption is reasonable as attackers can filter candidates based on public trends or specific domains, making the search space feasible to construct.

\vspace{-1mm}
\section{The KSTER Attack}
\label{sec:attack_method}

In this section, we show how to reverse-engineer the edited data using the parameter difference $\Delta \theta$. It has two stages: (1) \textit{Subject inference}: Given a candidate set of subjects $\mathcal{S}_{\rm cand}$, identify the specific subject $\hat{s} \in \mathcal{S}_{\rm cand}$ that is involved in the edit. (2) \textit{Prompt recovery}: Given a candidate set of relations $\mathcal{R}_{\rm cand}$, recover the prompt $\hat{r} \in \mathcal{R}_{\rm cand}$ used in the edit context. With these components, we can obtain the original output $o^{\rm ori}$ by querying the initial model $f_\theta$.

Below, we begin with a key observation on subject invariance, which provides the empirical basis for our method. We then present the proposed two-stage reverse-engineering attack framework.

\vspace{-1mm}
\subsection{Subject Invariance in LM Model Editing}
\label{subsec:motivation}

In locate-then-edit algorithms such as MEMIT~\cite{meng2023mass}, the key matrix $\mathbf{K}$ is constructed by stacking the last-token hidden states of a subject batch at a specific FFN layer. A critical question for subject inference is whether the key matrix $\mathbf{K}$ is strongly affected by the prompt contexts used during editing.

We first analyze the attention distribution in Figure~\ref{fig:heatmap}, which reveals that the model's attention is primarily focused on subject tokens across different prompt templates. Motivated by this, we use cosine similarity to compare the hidden states of the same subject at the shallowest edited layer across different prompt templates. As shown in Figure~\ref{fig:sim_of_prompts}, the results show a surprising consistency: the activations are highly similar across all templates. Detailed lists of the subjects and prompt templates are provided in Appendix~\ref{subject_prompt_cases}.

This observation gives us a key insight: \textbf{$\mathbf{K}$ primarily encodes unique features of the subjects and is largely decoupled from the prompt context}. Consequently, an attacker does not need to reconstruct the exact prompts to infer the edited subjects. Instead, the attacker can simply query the model with a generic template. If a candidate subject's activation vector aligns strongly with $\mathbf{K}$, it is likely that the subject belongs to the edited batch.

\begin{figure}[t]
    \centering
    \begin{subfigure}[b]{1\columnwidth}
        \centering
        \includegraphics[width=0.8\textwidth]{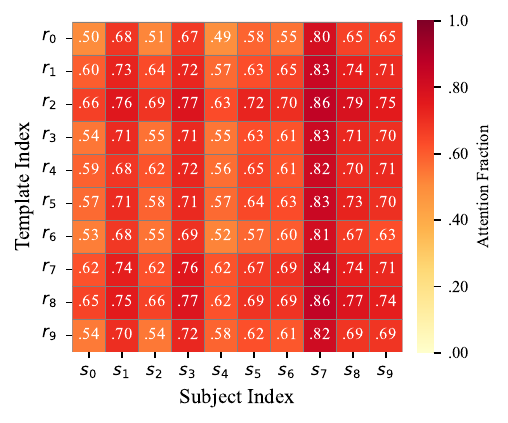}
        \caption{Attention distribution across subjects and templates.}
        \label{fig:heatmap}
    \end{subfigure}

    \begin{subfigure}[b]{1\columnwidth}
        \centering
        \includegraphics[width=0.8\textwidth]{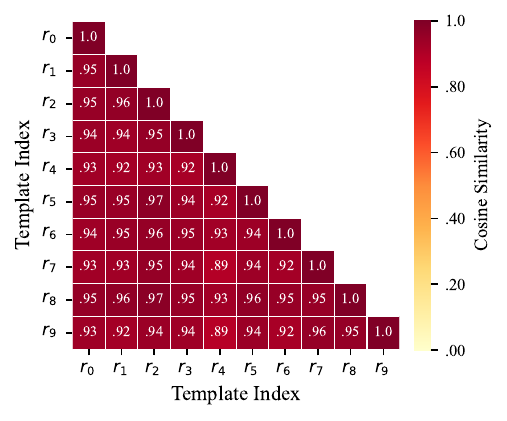}
        \caption{Activation similarity across subjects and templates.}
        \label{fig:sim_of_prompts}
    \end{subfigure}
    
    \caption{Subject invariance for MEMIT (Llama3-8B-Instruct). (a) Attention consistently focuses ($>$50\%) on subject tokens across different prompt templates $\{r_j\}$ for each subject $\{s_i\}$. (b) Different prompt templates filled with the same subject exhibit high similarity in their hidden states at the shallowest edited layer.}
    \label{fig:subject_invariance_combined}
\end{figure}
\vspace{-1mm}
\subsection{Attack Stage I: Subject Inference} 
\label{sec:subject inference attack}

The first stage aims to recover the edited subject set $\{s_i\}_{i=1}^N$ from the update matrix $\Delta \mathbf{W}$ of the shallowest edited layer. 
Although it is hard to recover $\mathbf{K}$ directly because of the unknown residual matrix $\mathbf{R}$ (or residual vector $\mathbf{r}_*$), we can indeed reconstruct the column space of $\mathbf{K}$ (namely key space) through a carefully designed mathematical transformation. The intuition here is to apply the Woodbury matrix identity (Lemma~\ref{lem: woodbury}) to rewrite the weight update $\Delta \mathbf{W}$.

Here we take MEMIT as an example. By applying the Woodbury matrix identity, the weight update in Eq.~(\ref{eq: memit}) can be written as:

\begin{align}\label{eq: woodbury_memit}
\Delta \mathbf{W} = \mathbf{R}  \left( \mathbf{I} + \mathbf{K}^\top\mathbf{C}^{-1}\mathbf{K} \right)^{-1} \mathbf{K}^\top \mathbf{C}^{-1}.
\end{align}
The detailed derivations can be found in Eq.~(\ref{eq: woodbury, derivation, memit}), Appendix~\ref{subsec: applying WMI in MEMIT}. This reveals an important property that the row space of $\Delta \mathbf{W} \mathbf{C}$ lies within the column space of $\mathbf{K}$:
\begin{align} \label{conclusion_formulation}
\row (\Delta \mathbf{W}\mathbf{C}) \subseteq \col (\mathbf{K}) .
\end{align}

Specifically, the signature of the edited subjects is preserved in the row space of $\Delta \mathbf{W}\mathbf{C}$. 

\begin{algorithm}[t]
\caption{The KSTER Attack (MEMIT).}
\label{alg: stage1, memit}
\SetAlgoLined
\KwIn{Pre-edit model $\theta$, post-edit model $\theta'$, weight update matrix $\Delta \mathbf{W}$, covariance $\mathbf{C}$, subject candidates $\mathcal{S}_{\rm cand}$, prompt candidates $\mathcal{R}_{\rm cand}$, generic template $\mathcal{T}_{\rm gen}$, activation extraction function $\mathcal{F}_\theta(\cdot)$, score function $\text{Score}(\cdot \mid \cdot)$, number of recovered prompts per subject $N_r$. }
\KwOut{Predicted subject-prompts set $\hat{\mathcal{R}}$ (initially $\emptyset$).}

\tcp{Stage I: Subject Inference.}

$N \leftarrow \text{Rank}(\Delta \mathbf{W})$.

$\mathbf{M} \leftarrow \Delta \mathbf{W} \mathbf{C}$.\;

$[\mathbf{U}, \mathbf{\Sigma}, \mathbf{V}^\top] \leftarrow \text{Singular-value-decomposition}(\mathbf{M})$.\;

$\mathbf{V}_N \leftarrow \mathbf{V}[:, 1:N]$.\;

$\mathcal{J} \leftarrow \emptyset$.\;

\For{$s_i^c \in \mathcal{S}_{\rm cand}$}{

    $\mathbf{k}_i^c \leftarrow \mathcal{F}_\theta(s_i^c, \mathcal{T}_{\rm gen})$.\;
    
    $\rho_i^c \leftarrow \frac{\left\| \mathbf{V}_N^\top \mathbf{k}_i^c \right\|_2}{\left\| \mathbf{k}_i^c \right\|_2}$.\; 
    
    $\mathcal{J} \leftarrow \mathcal{J} \cup \{(s_i^c, \rho_i^c)\}$.\;
}
$\hat{\mathcal{S}} \leftarrow\text{Top-}N(\{s_i^c \mid (s_i^c, \rho_i^c) \in \mathcal{J}\})$. \;

\tcp{Stage II: Prompt Recovery.}

\For{$\hat{s}_i \in \hat{\mathcal{S}}$}{

    $\mathcal{Q} \leftarrow \emptyset$.
    
    \For{$r_j^c \in \mathcal{R}_{\rm cand}$}{
        
        $\gamma_{i,j}^c \leftarrow \text{Score}(r_j^c \mid \hat{s}_i, \theta, \theta').$

        $\mathcal{Q} \leftarrow \mathcal{Q} \cup \{(r_j^c, \gamma_{i,j}^c)\}$.\;
    }

    $\hat{\mathcal{R}}_i \leftarrow \text{Top-}N_r(\{r_j^c \mid (r_j^c, \gamma_{i,j}^c) \in \mathcal{Q}\})$.\;
    
    $\hat{\mathcal{R}} \leftarrow \hat{\mathcal{R}} \cup \{(\hat{s}_i, \hat{\mathcal{R}}_i)\}$.\;
}
\Return $\hat{\mathcal{R}}$.\;
\end{algorithm}

Based on the geometric property in Eq.~(\ref{conclusion_formulation}), we propose our subject inference attack for MEMIT, as shown in lines 1--10 of Algorithm~\ref{alg: stage1, memit}. The procedure has the following two steps:

\textbf{Step 1: Subspace Reconstruction.} To recover the subspace associated with the target subjects, we first determine the number of edited knowledge $N$ by analyzing the rank of weight update $\Delta \mathbf{W}$: since generally $\mathbf{R}$ and $\mathbf{K}$ are full column rank, the rank of $\Delta \mathbf{W}$ is exactly $N$, which is detailed in Lemma \ref{lem: linear algebra, full rank} and Remark \ref{rem: linear algebra, full rank, correct N} in the appendix. Then, we eliminate the geometric distortion induced by the covariance matrix: by computing $\mathbf{M} = \Delta \mathbf{W} \mathbf{C}$ and its singular value decomposition $\mathbf{M} = \mathbf{U} \mathbf{\Sigma} \mathbf{V}^\top$, the subspace spanned by the edited key vectors (equivalently the column space of $\mathbf{K}$) is revealed by $\mathbf{V}_N$, whose columns are the top-$N$ right singular vectors of $\mathbf{M}$.

\textbf{Step 2: Projection-based Ranking.}
In the second step, we score each candidate subject $s_i^c \in \mathcal{S}_{\rm cand}$ (see Appendix \ref{appendix:attack_setup} for the detailed construction of $\mathcal{S}_{\rm cand}$) by generating a proxy key vector $\mathbf{k}_i^c = \mathcal{F}_\theta(s_i^c, \mathcal{T}_{\rm gen})$ using a generic template (see Appendix ~\ref{sec:ablation_templates} for details). Here, \(\mathcal{F}_\theta(s_i^c, \mathcal{T}_{\rm gen})\) denotes the activation vector at the shallowest edited layer of the pre-edit model $\theta$, corresponding to the subject’s last token. This vector is obtained by a forward pass with the subject-filled template. We then compute the projection coefficient of $\mathbf{k}_i^c$ onto  $\col(\mathbf{V}_N)$, which is $\rho_i^c = \frac{\left\| \mathbf{V}_N^\top \mathbf{k}_i^c \right\|_2}{\left\| \mathbf{k}_i^c \right\|_2}$. A higher score indicates that the candidate is likely one of the edited targets. Finally, we obtain the top-$N$ predicted subjects $\hat{\mathcal{S}}$. We further perform an ablation study on the selection of templates in Appendix~\ref{additional experiments}.

The application of our techniques to other locate-then-edit methods, such as ROME~\cite{meng2022locating} and AlphaEdit~\cite{fangalphaedit}, is provided in Appendix \ref{SIA-ROME} and \ref{SIA-Alpha}. Notably, the scoring of AlphaEdit differs slightly from MEMIT due to the projection matrix $\mathbf{P}$, as explained in Lemma~\ref{lem: indistinguishability} and Remark~\ref{rem: indistinguishability} (Appendix~\ref{apply WMI to alpha}).

Theoretical Guarantees. In Appendix \ref{sec: theoretical guarantees}, we provide theoretical guarantees for the correctness and robustness of the subject inference. 
Under appropriate assumptions established in Appendix \ref{subsec: pre for subject recovery}, we prove that Algorithm \ref{alg: stage1, memit} (MEMIT) exactly recovers the correct $N$ edited subjects in Theorem \ref{thm: subject recovery, memit}. A similar guarantee for the attack strategy of AlphaEdit (Algorithm \ref{alg: stage1, alphaedit}, which differs from Algorithm \ref{alg: stage1, memit} in its score function $\rho^c$) is provided in Theorem \ref{thm: subject recovery, alphaedit}. Furthermore, to address practical scenarios where the covariance matrix $\mathbf{C}$ is noisily estimated, we also prove that Algorithm \ref{alg: stage1, memit} is robust to perturbation on $\mathbf{C}$ in Theorem \ref{thm: subject recovery, memit, noisy cov}, by utilizing subspace perturbation bounds.

\vspace{-1mm}
\subsection{Attack Stage II: Prompt Recovery} 
\label{subsec:stage2}
After inferring the target subject set $\hat{\mathcal{S}}$ in Stage I, we now aim to recover the corresponding prompt for each subject $\hat{s}_i \in \hat{\mathcal{S}}$. We formulate this part as an \textit{entropy reduction analysis} between the pre-edit model (\(\theta\)) and the post-edit model (\(\theta'\)).

\textbf{Entropy Reduction Analysis.}
A key consequence of model editing is a sharp reduction in uncertainty for the edited fact. When queried with the target subject and prompt, the pre-edit model (\(\theta\)) may exhibit high entropy, whereas the edited model (\(\theta'\)) is optimized to output the desired answer with near-zero entropy (i.e., extreme confidence)~\cite{meng2022locating,meng2023mass, fangalphaedit}. Since this optimization does not apply to non-target prompts, the entropy change for them is much smaller. Leveraging this, we employ an entropy-based approach to identify the most likely prompt.

We evaluate a set of candidate prompts $\mathcal{R}_{\text{cand}}$, which contains the ground-truth target prompts (see Appendix \ref{appendix:attack_setup} for the detailed construction of $\mathcal{R}_{\text{cand}}$). For each candidate $r_j^c \in \mathcal{R}_{\text{cand}}$, we compute the Shannon entropy of the next-token distribution conditioned on subject $\hat{s}_i$:

\begin{align}
    H(\hat{s}_i, r_j^c; \theta) = - \sum_{t \in \mathcal{V}} \mathbb{P}_{\theta}(t \mid \hat{s}_i, r_j^c) \log \mathbb{P}_{\theta}(t \mid \hat{s}_i, r_j^c) ,
\end{align}

where $\mathcal{V}$ denotes the vocabulary and $\mathbb{P}_{\theta}(\cdot \mid \hat{s}_i, r_j^c)$ denotes next-token probability predicted by the model $\theta$. This formulation is designed to capture the model's \textit{prediction confidence} regarding a specific prompt.

We use the \textit{relative entropy reduction} as our discriminator. This metric measures the entropy shift from the pre-edit to the post-edit model, normalized by the post-edit prediction confidence:
\begin{align}
    \text{Score}(r_j^c \mid \hat{s}_i, \theta, \theta') = \frac{H(\hat{s}_i, r_j^c; \theta) - H(\hat{s}_i, r_j^c; \theta')}{H(\hat{s}_i, r_j^c; \theta') + \epsilon} ,
\end{align}
where $\epsilon$ is set to $10^{-9}$ for numerical stability. As illustrated in lines 11--18 of Algorithm~\ref{alg: stage1, memit}, for each subject $\hat{s}_i \in \hat{\mathcal{S}}$, we predict the top-$N_r$ prompts $\hat{\mathcal{R}}_i$ as the most likely original prompts. The denominator ensures that prompts gaining over-confidence (where $H(\hat{s}_i, r_j^c; \theta') \to 0$) receive higher scores, precisely identifying the edited prompt.

Finally, note that if the inference of the subject and the prompt is correct, it should then be easy to get the target object. Therefore, we also provide the end-to-end target object extraction results in our evaluation.

\begin{table*}[t]
\centering
\caption{Performance comparison of subject inference attack for batch-edit tasks. We report top-$N$ recall.}
\label{tab:attack_recall_proj_full}
\resizebox{\textwidth}{!}{ 
\setlength{\tabcolsep}{4pt} 
\renewcommand{\arraystretch}{1.2} 
\begin{tabular}{ll cccc cccc}
\toprule

\multirow{4}{*}{\textbf{Model}} & 
\multirow{4}{*}{\textbf{$N$}} & 
\multicolumn{4}{c}{\textbf{CounterFact}} & 
\multicolumn{4}{c}{\textbf{zsRE}} \\
\cmidrule(lr){3-6} \cmidrule(lr){7-10}

& & 
\multicolumn{2}{c}{\textbf{MEMIT}} & \multicolumn{2}{c}{\textbf{AlphaEdit}} & 
\multicolumn{2}{c}{\textbf{MEMIT}} & \multicolumn{2}{c}{\textbf{AlphaEdit}} \\
\cmidrule(lr){3-4} \cmidrule(lr){5-6} \cmidrule(lr){7-8} \cmidrule(lr){9-10}

& & \textbf{Ours (White-box)} & \textbf{Gray-box} & \textbf{Ours (White-box)} & \textbf{Gray-box} & \textbf{Ours (White-box)} & \textbf{Gray-box} & \textbf{Ours (White-box)} & \textbf{Gray-box} \\
\midrule

\multirow{3}{*}{\textbf{GPT-J}} 
& 10  & \textbf{0.98} $\pm$ 0.04 & 0.96 $\pm$ 0.05 & \textbf{0.98} $\pm$ 0.04 & 0.96 $\pm$ 0.05 & \textbf{0.98} $\pm$ 0.04 & \textbf{0.98} $\pm$ 0.04 & \textbf{0.98} $\pm$ 0.04 & \textbf{0.98} $\pm$ 0.04 \\
& 50  & \textbf{0.96} $\pm$ 0.01 & 0.87 $\pm$ 0.06 & \textbf{0.96} $\pm$ 0.03 & 0.85 $\pm$ 0.10 & \textbf{0.99} $\pm$ 0.01 & 0.92 $\pm$ 0.04 & \textbf{0.99} $\pm$ 0.01 & 0.96 $\pm$ 0.03 \\
& 100 & \textbf{0.95} $\pm$ 0.01 & 0.88 $\pm$ 0.03 & \textbf{0.96} $\pm$ 0.01 & 0.86 $\pm$ 0.04 & \textbf{0.99} $\pm$ 0.01 & 0.92 $\pm$ 0.02 & \textbf{0.99} $\pm$ 0.01 & 0.96 $\pm$ 0.01 \\
\midrule

\multirow{3}{*}{\shortstack{\textbf{Llama-3}\\\textbf{-Instruct}}}
& 10  & \textbf{1.00} $\pm$ 0.00 & 0.96 $\pm$ 0.05 & \textbf{1.00} $\pm$ 0.00 & 0.76 $\pm$ 0.11 & \textbf{0.98} $\pm$ 0.04 & 0.82 $\pm$ 0.08 & \textbf{0.98} $\pm$ 0.04 & 0.70 $\pm$ 0.14 \\
& 50  & \textbf{0.99} $\pm$ 0.01 & 0.79 $\pm$ 0.05 & \textbf{0.99} $\pm$ 0.02 & 0.52 $\pm$ 0.05 & \textbf{0.99} $\pm$ 0.01 & 0.74 $\pm$ 0.04 & \textbf{0.99} $\pm$ 0.01 & 0.44 $\pm$ 0.07 \\
& 100 & \textbf{0.99} $\pm$ 0.01 & 0.68 $\pm$ 0.04 & \textbf{0.99} $\pm$ 0.01 & 0.45 $\pm$ 0.03 & \textbf{0.99} $\pm$ 0.01 & 0.69 $\pm$ 0.02 & \textbf{0.99} $\pm$ 0.01 & 0.43 $\pm$ 0.07 \\
\midrule

\multirow{3}{*}{\shortstack{\textbf{Qwen-2.5}\\\textbf{-Instruct}}}
& 10  & \textbf{1.00} $\pm$ 0.00 & 0.88 $\pm$ 0.11 & \textbf{1.00} $\pm$ 0.00 & 0.80 $\pm$ 0.10 & \textbf{0.98} $\pm$ 0.04 & 0.90 $\pm$ 0.10 & \textbf{0.98} $\pm$ 0.04 & 0.84 $\pm$ 0.16 \\
& 50  & \textbf{0.93} $\pm$ 0.06 & 0.63 $\pm$ 0.15 & \textbf{0.93} $\pm$ 0.07 & 0.54 $\pm$ 0.13 & \textbf{0.98} $\pm$ 0.01 & 0.68 $\pm$ 0.02 & \textbf{0.99} $\pm$ 0.01 & 0.61 $\pm$ 0.06 \\
& 100 & \textbf{0.94} $\pm$ 0.03 & 0.59 $\pm$ 0.11 & \textbf{0.95} $\pm$ 0.03 & 0.51 $\pm$ 0.06 & \textbf{0.98} $\pm$ 0.01 & 0.58 $\pm$ 0.03 & \textbf{0.99} $\pm$ 0.01 & 0.55 $\pm$ 0.04 \\

\bottomrule
\end{tabular}
}
\end{table*}

\begin{table*}[t]
\centering
\caption{Performance of prompt recovery attack of MEMIT. We report different recall (Top-$N_r$) and semantic similarity (Sim.).}
\label{tab:prompt_recovery_scaling_memit_all_metrics}

\resizebox{\linewidth}{!}{
    \setlength{\tabcolsep}{4.5pt} 
    \renewcommand{\arraystretch}{1.15} 
    
    \begin{tabular}{l c cccc cccc}
    \toprule

    \multirow{2}{*}{\textbf{Model}} & \multirow{2}{*}{\textbf{$N$}} & 
    \multicolumn{4}{c}{\textbf{CounterFact}} & 
    \multicolumn{4}{c}{\textbf{zsRE}} \\
    \cmidrule(lr){3-6} \cmidrule(lr){7-10}

    & & \textbf{Top-1} & \textbf{Top-5} & \textbf{Top-20} & \textbf{Sim.} 
      & \textbf{Top-1} & \textbf{Top-5} & \textbf{Top-20} & \textbf{Sim.} \\
    \midrule

    \multirow{3}{*}{\textbf{GPT-J}} 
    & 10  & 0.57 $\pm$ 0.07 & 0.86 $\pm$ 0.05 & 0.96 $\pm$ 0.09 & 0.88 $\pm$ 0.02 & 0.28 $\pm$ 0.05 & 0.70 $\pm$ 0.16 & 0.86 $\pm$ 0.09 & 0.83 $\pm$ 0.03 \\
    & 50  & 0.51 $\pm$ 0.06 & 0.90 $\pm$ 0.02 & 0.99 $\pm$ 0.02 & 0.88 $\pm$ 0.01 & 0.33 $\pm$ 0.03 & 0.64 $\pm$ 0.02 & 0.85 $\pm$ 0.03 & 0.82 $\pm$ 0.01\\
    & 100 & 0.50 $\pm$ 0.05 & 0.92 $\pm$ 0.02 & 0.99 $\pm$ 0.01 & 0.88 $\pm$ 0.01 & 0.32 $\pm$ 0.07 & 0.61 $\pm$ 0.06 & 0.83 $\pm$ 0.06 & 0.82 $\pm$ 0.01 \\
    \midrule

    \multirow{3}{*}{\shortstack{\textbf{Llama-3}\\\textbf{-Instruct}}}
    & 10  & 0.53 $\pm$ 0.07 & 0.90 $\pm$ 0.01 & 0.96 $\pm$ 0.05 & 0.88 $\pm$ 0.02 & 0.40 $\pm$ 0.10 & 0.64 $\pm$ 0.09 & 0.90 $\pm$ 0.07 & 0.86 $\pm$ 0.01 \\
    & 50  & 0.49 $\pm$ 0.03 & 0.82 $\pm$ 0.04 & 0.94 $\pm$ 0.03 & 0.88 $\pm$ 0.01 & 0.40 $\pm$ 0.04 & 0.60 $\pm$ 0.05 & 0.76 $\pm$ 0.04 & 0.84 $\pm$ 0.02 \\
    & 100 & 0.51 $\pm$ 0.05 & 0.81 $\pm$ 0.03 & 0.94 $\pm$ 0.02 & 0.88 $\pm$ 0.01 & 0.33 $\pm$ 0.04 & 0.57 $\pm$ 0.06 & 0.76 $\pm$ 0.05 & 0.83 $\pm$ 0.01 \\
    \midrule

    \multirow{3}{*}{\shortstack{\textbf{Qwen-2.5}\\\textbf{-Instruct}}}
    & 10  & 0.35 $\pm$ 0.22 & 0.83 $\pm$ 0.17 & 0.96 $\pm$ 0.09 & 0.85 $\pm$ 0.03 & 0.04 $\pm$ 0.05 & 0.42 $\pm$ 0.21 & 0.60 $\pm$ 0.17 & 0.81 $\pm$ 0.03 \\
    & 50  & 0.42 $\pm$ 0.05 & 0.81 $\pm$ 0.07 & 0.95 $\pm$ 0.06 & 0.86 $\pm$ 0.01 & 0.13 $\pm$ 0.07 & 0.34 $\pm$ 0.16 & 0.60 $\pm$ 0.07 & 0.79 $\pm$ 0.03 \\
    & 100 & 0.47 $\pm$ 0.06 & 0.86 $\pm$ 0.02 & 0.97 $\pm$ 0.03 & 0.86 $\pm$ 0.01 & 0.08 $\pm$ 0.06 & 0.29 $\pm$ 0.12 & 0.55 $\pm$ 0.10 & 0.78 $\pm$ 0.03 \\

    \bottomrule
    \end{tabular}
    \vspace{-3mm}
    \label{tab:prompt_recovery_scaling_memit}
}
\end{table*}

\begin{figure}[htb]
    \centering

    \includegraphics[width=\columnwidth]{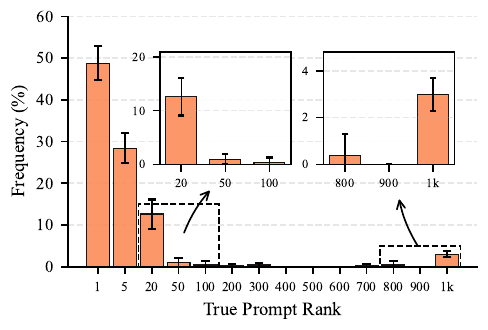}
    
    \caption{Distribution of true prompt ranks for Llama3-8B-Instruct on the CounterFact dataset. The x-axis labels denote the upper bound of each rank interval (e.g., 50 represents the range $(20,50]$).}
    \vspace{-3mm}
    \label{fig:histogram}
\end{figure}
\vspace{-1mm}
\subsection{Attack Evaluation}
\label{sec:attack_eval}
In this section, we conduct experiments to address the following research questions:

\begin{itemize}[leftmargin=*, itemsep=2pt, topsep=0pt, parsep=0pt]
    \item \textbf{RQ1:} How reliably can our attack pinpoint the edited subjects across different editing methods?
    
    \item \textbf{RQ2:} How robust is our attack against estimation noise and distribution shifts in the covariance matrix?
    
    \item \textbf{RQ3:} How effectively can our attack recover the semantic essence of the original prompts and original outputs?
    
    \item \textbf{RQ4:} What are the failure cases of our attack, and what explains these failures?
\end{itemize}

\textbf{Experimental Setup.}
We employ three LLMs: GPT-J (6B) \cite{wang2021gpt}, Llama3-8B-Instruct \cite{dubey2024llama}, and Qwen2.5-7B-Instruct \cite{hui2024qwen2}, to evaluate our attack framework, covering ROME \cite{meng2022locating} for single-edit tasks and batch editing methods like MEMIT \cite{meng2023mass} and AlphaEdit \cite{fangalphaedit}. Experiments are conducted on two standard benchmarks, CounterFact \cite{meng2022locating}, and zsRE \cite{levy2017zero}, where we construct specific candidate libraries for subjects and prompts to simulate realistic attack scenarios. For full details, please refer to Appendix \ref{appendix:attack_setup}.

\textbf{Baseline \& Metrics.}
Since reverse-engineering attacks on model editing are under-explored, we establish a gray-box baseline (Appendix \ref{appendix:attack_setup}) based on the divergence of output distributions to compare with our white-box approach for subject inference. We evaluate the attack performance using the following metrics: subject recall and projection coefficient for subject inference; Prompt recall and semantic similarity for prompt recovery~\cite{reimers2019sentence}. Please refer to Appendix~\ref{appendix:attack_setup} for details.

\textbf{RQ1: Effectiveness of Subject Inference.} As shown in Table~\ref{tab:attack_recall_proj_full}, our white-box attack maintains high accuracy across all batch sizes, with recall consistently staying between 0.94 and 1.00. In contrast, while the gray-box baseline matches our performance when $N=10$, its performance drops notably as $N$ increases, falling to 0.43--0.55 at $N=100$. This indicates that while logit changes are sufficient to identify the subject in simple cases, they become entangled and overlapped in larger batches, making it difficult to distinguish specific targets through logits alone. For the impact of candidate set size, sequential editing setting on the results, and additional experiments on ROME, please refer to Tables \ref{tab:subject size}, \ref{tab:sequential-editing-results}, \ref{tab:rome_attack_comparison} in Appendix~\ref{additional experiments}.

\textbf{RQ2: Robustness to Covariance Estimation.}
To evaluate robustness, we vary both the data source and sample size for covariance estimation. While our default $\mathbf{C}$ is computed from 100,000 Wikipedia samples, Figures~\ref{fig:cov} and \ref{fig:projection_ablation} (Appendix \ref{additional experiments}) show that our attack remains highly effective with far less data. Attack performance against MEMIT converges at 100 samples across Wikipedia~\cite{lhoest2021datasets}, WikiText~\cite{merity2017pointer}, and Pile~\cite{gao2020pile}. Notably, our attack on AlphaEdit exhibits superior stability, maintaining peak performance even at just 10 samples. These results confirm that our attack is robust to distribution shifts and feasible in limited-data scenarios. We further provide a theoretical analysis for the case of MEMIT in Appendix~\ref{robustness of MEMIT}.

\textbf{RQ3: Performance of Prompt and Output Recovery.}
Table \ref{tab:prompt_recovery_scaling_memit} presents the performance of our prompt recovery attack. On the CounterFact dataset, the attack achieves high-fidelity recovery: top-20 recall consistently reaches $0.94\text{--}0.99$, and top-1 recall often exceeds $50\%$. This indicates that for completion-style edits, our method effectively compresses the search space to a verifiable range. On the more challenging zsRE benchmark, although recall decreases due to the complex syntax of the QA format, the attack still maintains a stable semantic similarity ($0.78\text{--}0.88$). These results highlight the robustness of our attack across diverse settings, underscoring critical information leakage risks in model editing. For the impact of candidate set size, the comparison between our method and the naive method (only observe the logit of post-edit model, see Appendix \ref{appendix:attack_setup}), as well as additional experimental results for ROME and AlphaEdit, please refer to Tables \ref{tab:template size}, \ref{tab:recovery baseline comparison}, \ref{tab:prompt_recovery_rome} and \ref{tab:prompt_recovery_scaling_Alphaedit_all_metrics} of Appendix~\ref{additional experiments}. Furthermore, Table \ref{tab:e2e results} illustrates the end-to-end results for extracting original outputs. On the CounterFact dataset, our approach recovers over 70\% of the original outputs at Top-20, showing the effectiveness of KSTER.

\textbf{RQ4: Failure Cases of Our Attack.}
Although our prompt recovery attack achieves high recovery rates in most cases, the rank distribution in Figure \ref{fig:histogram} shows a bimodal pattern of failures. Our analysis of the main reasons is as follows:

\begin{itemize}[leftmargin=*, noitemsep, topsep=0pt]
    \item \textbf{Semantic Generalization Error (Rank 6--100):} 
    The intermediate rankings reveal \textit{semantic ambiguity} in edit generalization. When the model fails to identify a subject's specific type, it may activate prompts from broader or incorrect categories.  As illustrated by the case of ``Anaal Nathrakh'' in Figure~\ref{fig:failure_case} (Appendix~\ref{additional experiments}), the model treats the subject as a ``location-associated entity'' rather than a ``musical group.'' Consequently, generic location prompts (e.g., \texttt{\{\} was developed in \{\}}, \texttt{\{\} was formulated in \{\}}) are activated with the ground truth (\texttt{\{\} was created in \{\}}). These mismatched prompts induce a competitive entropy drop that pushes the true prompt into the 6--100 range.

    \item \textbf{Optimization Constraint Conflict (Rank 700--1k):} 
    The tail of the distribution exposes \textit{confused edits}, where the editing objective conflicts with statistical priors. Locate-then-edit algorithms solve a least-squares problem to map subjects to target outputs, constrained by $\mathbf{C}$. In these outlier cases, the required modification contradicts the statistics encoded in \(\mathbf{C}\). Although the edit increases the target token’s probability, the mismatch between the actual activations and the optimized target \(\mathbf{R}\) prevents the model from reaching a low-entropy regime. Instead, the output entropy rises, rendering our metric ineffective and pushing the true prompt into the tail of the ranking. This notably compromises the specificity of the edit.
\end{itemize}

\begin{table}[t]
\centering
\caption{End-to-end attack recall using MEMIT in $N=100$.}
\label{tab:e2e results}
\resizebox{\linewidth}{!}{
    \setlength{\tabcolsep}{7pt} 
    \renewcommand{\arraystretch}{1.2}
    \begin{tabular}{c c ccc}
    \toprule
    \textbf{Dataset} & \textbf{Model} & \textbf{Top-1} & \textbf{Top-5} & \textbf{Top-20} \\
    \midrule
    \multirow{3}{*}{\textbf{CounterFact}} 
    & GPT-J & 0.35 & 0.64 & 0.78  \\
    & Llama-3-Instruct & 0.32 & 0.61 & 0.74 \\
    & Qwen-2.5-Instruct & 0.33 & 0.64 & 0.81  \\
    \midrule
    \multirow{3}{*}{\textbf{zsRE}} 
    & GPT-J & 0.28 & 0.47 & 0.62 \\
    & Llama-3-Instruct & 0.07 & 0.21 & 0.42 \\
    & Qwen-2.5-Instruct & 0.25 & 0.53 & 0.65 \\
    \bottomrule
    \end{tabular}
    \vspace{-3mm}
}
\end{table}

\vspace{-1mm}
\section{Defense Strategy: Subspace Camouflage}
\label{sec:defense}

Our attack (\Sref{sec:attack_method}) exhibits that the weight update $\Delta\mathbf{W}$ leaks the key space $\col(\mathbf{K})$, providing a backdoor for the attacker to accurately obtain the edited knowledge. To address this problem, we propose our defense strategy named \textit{subspace camouflage} in Algorithm \ref{alg: defense_memit}, thwarting the attacker's attempt to extract the subspace by perturbing \(\mathbf{K}\) in the update calculation.

\vspace{-1mm}
\subsection{Subspace Camouflage via Semantic Decoy}\label{subsec: subspace camouflage}

The underlying mathematical principle for our subject inference attack (Algorithm \ref{alg: stage1, memit}) is the reconstruction of the key space $\col(\mathbf{K})$, through the (distorted) row space of the weight update $\Delta \mathbf{W}$. Since the attacker’s inference depends exclusively on this subspace, a successful defense only needs to (i) alter this subspace from $\col(\mathbf{K})$ to a camouflaged subspace $\col(\tilde{\mathbf{K}})$ while (ii) preserving the update’s effect on the original keys. Intuitively, we aim to modify the update so that it behaves identically on the true edited subjects, while presenting a misleading low-rank structure to any subspace-based analysis. 

To obtain this camouflaged subspace, we construct the \textit{aggregated camouflage key matrix} $\tilde{\mathbf{K}}$ by perturbing the true key matrix $\mathbf{K}$ with a decoy key matrix $\mathbf{K}_{\rm decoy} \in \mathbb{R}^{d_{\rm in} \times N}$. This decoy matrix is sampled from a set of decoy subjects $\mathcal{S}_{\rm decoy}$, which consists of real but unrelated subjects (see Appendix~\ref{appendix:defense_setup} for details): 

\begin{align}\label{eq: def of decoy K}
  \tilde{\mathbf{K}} = \mathbf{K} + \alpha \cdot \frac{\|\mathbf{K}\|_2}{\|\mathbf{K}_{\rm decoy}\|_2} \mathbf{K}_{\rm decoy} . 
\end{align}

Unlike adding random noise on $\mathbf{K}$, this \textit{semantic decoy} approach corresponds to real key vectors and thus actively competes with the target keys in the recovered subspace. Here $\alpha\in\mathbb{R}_+$ controls the magnitude of the injected decoys: a larger $\alpha$ enhances protection by making the reconstructed subspace more decoy-dominant, while introducing risk of degradation of the model's general capabilities. 

\begin{algorithm}[t]
\caption{The Subspace Camouflage Defense (MEMIT).}
\label{alg: defense_memit}

\KwIn{Pre-edit model $\theta$, true key matrix $\mathbf{K}$, original weight update $\Delta \mathbf{W}$, covariance $\mathbf{C}$, decoy subjects set $\mathcal{S}_{\rm decoy}$, camouflage strength $\alpha$, activation extraction function $\mathcal{F}_\theta(\cdot)$, generic template $\mathcal{T}_{\rm gen}$, ridge parameter $\lambda =10^{-8}$.}
\KwOut{Defensive weight update $\Delta \mathbf{W}_{\rm defense}$.}
\setcounter{AlgoLine}{0}

$\mathbf{K}_{\rm decoy} \leftarrow \mathcal{F}_\theta(\mathcal{S}_{\rm decoy}, \mathcal{T}_{\rm gen})$.\;

$\tilde{\mathbf{K}} \leftarrow \mathbf{K} + \alpha \frac{\|\mathbf{K}\|_2}{\|\mathbf{K}_{\rm decoy}\|_2} \mathbf{K}_{\rm decoy}$.\;

$\mathbf{G} \leftarrow \tilde{\mathbf{K}}^\top \mathbf{C}^{-1} \mathbf{K}$.\;

$\mathbf{G}_{\rm reg} \leftarrow \mathbf{G} + \lambda \mathbf{I}$.\;

$\Delta \mathbf{W}_{\rm defense} \leftarrow \Delta \mathbf{W} \mathbf{K} (\mathbf{G}_{\rm reg})^{-1} \tilde{\mathbf{K}}^\top \mathbf{C}^{-1}$.\;

\Return $\Delta \mathbf{W}_{\rm defense}$.\;
\end{algorithm}

\begin{table*}[t!] 
\centering
\captionsetup{width=0.87\textwidth}
\caption{Performance comparison on CounterFact dataset across camouflage scales $\alpha \in \{0, 1, 3, 5, 7\}$.}
\label{tab:counterfact_compressed_v2}
\resizebox{0.9\textwidth}{!}{ 
    
    \setlength{\tabcolsep}{9.5pt} 
    \renewcommand{\arraystretch}{1.1}

    \begin{tabular}{l c r @{\,$\pm$\,} l ccccc}
    \toprule
    \textbf{Method} & \textbf{$\alpha$} & \multicolumn{2}{c}{\textbf{Rank}} & \textbf{Efficacy} & \textbf{Generalization} & \textbf{Specificity} & \textbf{Fluency} & \textbf{Consistency}  \\
    \midrule

    \multirow{5}{*}{\textbf{MEMIT}} 
    & 0 & 50.83  & 0.21 & 0.95 $\pm$ 0.01 & 0.52 $\pm$ 0.03 & 0.24 $\pm$ 0.03 & 6.33 $\pm$ 0.01 & 0.31 $\pm$ 0.01 \\
    & 1 & 148.62 & 1.39 & 0.98 $\pm$ 0.01 & 0.49 $\pm$ 0.01 & 0.22 $\pm$ 0.02 & 6.33 $\pm$ 0.01 & 0.30 $\pm$ 0.01 \\
    & 3 & 206.47 & 14.64  & 0.96 $\pm$ 0.02 & 0.52 $\pm$ 0.03 & 0.24 $\pm$ 0.03 & 6.34 $\pm$ 0.01 & 0.31 $\pm$ 0.01 \\
    & 5 & 394.12 & 35.40  & 0.96 $\pm$ 0.00 & 0.53 $\pm$ 0.02 & 0.23 $\pm$ 0.03 & 6.32 $\pm$ 0.01 & 0.31 $\pm$ 0.00 \\
    & 7 & 634.39  & 48.24 & 0.91 $\pm$ 0.03 & 0.42 $\pm$ 0.05 & 0.19 $\pm$ 0.01 & 6.26 $\pm$ 0.03 & 0.30 $\pm$ 0.01 \\
    \midrule

    \multirow{5}{*}{\textbf{AlphaEdit}} 
    & 0 & 50.79  & 0.30 & 0.94 $\pm$ 0.02 & 0.51 $\pm$ 0.06 & 0.24 $\pm$ 0.03 & 6.33 $\pm$ 0.01 & 0.30 $\pm$ 0.01 \\
    & 1 & 150.28 & 0.48  & 0.99 $\pm$ 0.00 & 0.56 $\pm$ 0.04 & 0.23 $\pm$ 0.03 & 6.33 $\pm$ 0.01 & 0.32 $\pm$ 0.00 \\
    & 3 & 234.89 & 13.54  & 1.00 $\pm$ 0.00 & 0.65 $\pm$ 0.06 & 0.23 $\pm$ 0.02 & 6.32 $\pm$ 0.01 & 0.33 $\pm$ 0.01 \\
    & 5 & 398.87  & 19.52 & 1.00 $\pm$ 0.00 & 0.64 $\pm$ 0.03 & 0.20 $\pm$ 0.01 & 6.29 $\pm$ 0.02 & 0.32 $\pm$ 0.01 \\
    & 7 & 613.96  & 46.71 & 0.46 $\pm$ 0.33 & 0.16 $\pm$ 0.19 & 0.04 $\pm$ 0.04 & 6.07 $\pm$ 0.08 & 0.30 $\pm$ 0.01 \\
    
    \bottomrule
    \end{tabular}
    \vspace{-5mm}
}
\end{table*}

\begin{figure*}[h!]
    \centering

    \includegraphics[width=0.85\linewidth]{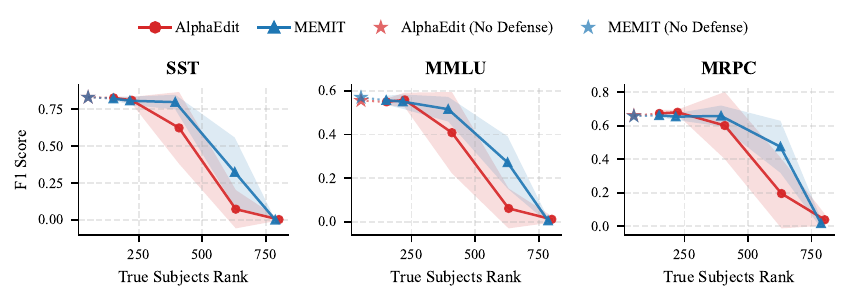}
    
    \vspace{-2mm} 
    
    \caption{Protection-utility trade-off for Llama3-8B on CounterFact ($N=100$). Shaded regions indicate standard deviation over 5 runs.}
    
    \label{fig:privacy_utility_tradeoff}
    \vspace{-0.5cm} 
\end{figure*}

Below we take MEMIT as an example, while the specific defense strategies for ROME and AlphaEdit are provided in Appendix~\ref{Defense-ROME} and \ref{Defense-Alpha}.
Formally, we replace the weight update $\Delta \mathbf{W}$ by a defense weight $\Delta \mathbf{W}_{\rm defense}$ which satisfies:

\begin{equation}\label{eq: subspace camouflage, memit, constraints def}
  \begin{cases}
    \row(\Delta \mathbf{W}_{\rm defense} \mathbf{C}) &\subseteq \col(\tilde{\mathbf{K}}) , \\
    \Delta \mathbf{W}_{\rm defense} \mathbf{K} &= \Delta \mathbf{W} \mathbf{K} . 
  \end{cases}
\end{equation}

By solving the constraints above (detailed in Theorem \ref{thm: solving the constraints}, Appendix \ref{sec: mechanism and properties of subspace camouflage}), $\Delta \mathbf{W}_{\rm defense}$ is uniquely determined as: 

\begin{equation}\label{eq: defense, memit}
  \Delta \mathbf{W}_{\rm defense} = \Delta \mathbf{W} \mathbf{K} \left( \tilde{\mathbf{K}}^\top \mathbf{C}^{-1} \mathbf{K} \right)^{-1} \tilde{\mathbf{K}}^\top \mathbf{C}^{-1} . 
\end{equation}

To improve numerical stability, we apply a small ridge regularization to $\tilde{\mathbf{K}}^\top \mathbf{C}^{-1} \mathbf{K}$ before inversion. The complete pipeline of our defense strategy is detailed in Algorithm~\ref{alg: defense_memit}. Notice that our defense weight update reduces to the original weight update as $\alpha \to 0^+$, as declared in Remark \ref{rem: degeneration into original weight update}.

Although our defense strategy is motivated by the \textit{KSTER} attack, it is theoretically robust against a broader class of white-box attacks that rely solely on the singular structure of $\Delta \mathbf{W}$. 
Specifically, one cannot (i) identify whether this subspace camouflage defense strategy is applied to the weight change nor (ii) recover any meaningful information of the key matrix $\mathbf{K}$ beyond what is revealed by $\tilde{\mathbf{K}}$, without additional knowledge of the residual matrix $\mathbf{R}$.\footnote{Crucially, the construction of residual matrix $\mathbf{R}$ in model editing algorithms requires the subtraction of target value vector/matrix and the original target value vector/matrix, thus requiring \textit{knowledge of the original prompts and subjects both before and after editing}, which are inherently tied to the model's original and target outputs. 

This creates a circular dependency: the information needed to recover $\mathbf{R}$ is equivalent to obtaining the attack's ultimate goal, that is, extracting protected knowledge from the edited model. Therefore, attempting to bypass the defense by recovering $\mathbf{R}$ is, in practice, a logically circular, or no easier than directly breaking the defense itself. } 
Formal statements and proofs are provided in Theorem \ref{thm: indistinguishability of subspace camouflage and original editing} and \ref{thm: non-recoverability of the key matrix for subspace camouflage} in Appendix \ref{sec: mechanism and properties of subspace camouflage}, respectively.

\vspace{-1mm}
\subsection{Defense Evaluation}

We evaluate our defense mechanism to answer the following research questions:

\begin{itemize}[leftmargin=*, itemsep=2pt, topsep=0pt, parsep=0pt]
    \item \textbf{RQ1:} How does the scaling of the camouflage coefficient $\alpha$ affect editing performance?
    
    \item \textbf{RQ2:} How protection impacts model utility and the hallucinations of decoy subjects?
\end{itemize}

\textbf{Experimental Setup \& Metrics.}
We evaluate our defense on Llama3-8B-Instruct~\cite{dubey2024llama} using CounterFact~\cite{meng2022locating} and zsRE~\cite{levy2017zero}, with our white-box subject inference attack (\Sref{sec:subject inference attack}) serving as the attacker. The evaluation primarily focuses on the following three dimensions: editing performance, editing protection, and model utility. For detailed hyperparameter settings and metric definitions, please refer to Appendix~\ref{appendix:defense_setup}.

\textbf{RQ1: Impact of {\boldmath$\alpha$} on Editing Performance.}
 We evaluate the impact of the camouflage coefficient $\alpha$ on editing performance and protection using the CounterFact dataset. As illustrated in Table~\ref{tab:counterfact_compressed_v2}, with $\alpha$ increasing from $0$ to $5$, the average rank of the true subjects rises substantially from $50.83$ to $394.12$ for MEMIT, indicating stronger protection, while efficacy and generalization remain high within this stable range. However, performance degrades sharply at excessive scales (e.g., $\alpha=7$), where AlphaEdit's efficacy drops to $0.46$. This occurs because an excessively large $\alpha$ causes the model update to deviate too far from the original least-squares solution. This destroys the model's feature extraction capability for other tokens, eventually leading to a decline in editing performance and the model's general capabilities. For our protection performance against the gray-box baseline as well as additional experimental results regarding ROME and the zsRE dataset, please refer to Tables \ref{tab:protection against baseline}, \ref{tab:rome_defense_eff_Counterfact}, \ref{defense_rome_zsre}, and \ref{tab:zsre_final_v3} in Appendix \ref{additional experiments}.

\textbf{RQ2: The Protection-Utility Trade-off.}
Figure \ref{fig:privacy_utility_tradeoff} shows the protection-utility trade-off, confirming that stronger subject concealment generally sacrifices the model's general capability. However, we can still achieve an effective balance: with $\alpha=5$, the model maintains high general utility while providing strong protection. In addition, the shaded regions (standard deviations) show that AlphaEdit is more unstable than MEMIT. This stems from AlphaEdit's reliance on the projection matrix $\mathbf{P}$, which typically has smaller eigenvalues than MEMIT's $\mathbf{C}^{-1}$. When computing the weight update matrix $\Delta \mathbf{W}_\text{defense}$, inverting a matrix with these near-zero eigenvalues renders AlphaEdit's $\Delta \mathbf{W}_\text{defense}$ more ill-conditioned, which amplifies its sensitivity to sampling noise and leads to greater variance across different runs. More comprehensive evaluation results are detailed in Figures \ref{fig:privacy_utility_tradeoff_full}, \ref{fig:privacy_utility_tradeoff_zsre}, and \ref{fig:privacy_utility_tradeoff_rome} of Appendix \ref{additional experiments}.

Moreover, as illustrated by Table \ref{tab:Hallucination decoy} in Appendix \ref{additional experiments}, when $\alpha \ge 3$, the hallucination level of decoy subjects increases gradually. Specifically, the TFR declines from 0.51 to below 0.43, and the fluency drops from 6.23 to below 6.18. Overall, these results show a clear trade-off between the protection level and the model's general utility.

\vspace{-1mm}
\section{Conclusion}

This work reveals that model editing inadvertently leaks the data it aims to edit. By analyzing the algebraic structure of parameter updates, we propose a reverse-engineering attack (named KSTER) that accurately recovers sensitive data. To mitigate this, we introduce \textit{subspace camouflage}, a defense mechanism that obscures the fingerprint of subjects within the editing updates. Extensive evaluations show that our attack achieves high success rates, while our defense effectively protects the editing process without compromising editing efficacy. Furthermore, we identify the phenomena of \textit{semantic ambiguity} and \textit{confused edits} in existing methods, offering insights for the community to better investigate this technology. Detailed discussions on the limitations of our approach are provided in Appendix \ref{limitation}.

\vspace{-1mm}
\section*{Impact Statement}

In this work, we propose a reverse-engineering attack targeting the locate-then-edit paradigm, along with a defensive strategy to mitigate the vulnerabilities it reveals. Model editing has long been regarded as an efficient, retraining-free approach to erase sensitive data inadvertently memorized by LLMs. However, our research reveals that the parameter updates in the existing locate-then-edit paradigm can severely leak the very data intended for editing.

This discovery exposes a dual-sided impact. On the malicious side, attackers could exploit our proposed attack to reconstruct sensitive information directly from open-weights updates, thereby posing a formidable challenge to the open-source community. To mitigate such threats, we offer a camouflage-based defense as a proactive solution to help developers safeguard their edits. In contrast, on the beneficial side, our attack functions as a ``security ruler" for model editing auditing. It empowers regulatory authorities and third-party organizations to verify the intent of model updates, therefore ensuring that edits are strictly for claimed corrections and not for injecting backdoors or concealing malicious biases.

Furthermore, our work provides unique insights into how LLMs store and access information. We observe the property of \textit{subject invariance} in the parameter space and, through our attack, reveal the phenomena of \textit{semantic ambiguity} and \textit{confused edits} inherent in existing editing methods. These findings deepen our understanding of the intrinsic constraints of model editing and help facilitate future developments in the field.

Finally, we look forward to further research on developing more secure editing frameworks. We encourage the community to investigate whether existing defensive techniques can be scaled to thwart the reverse-engineering attack without compromising model utility, and we suggest that policymakers and platform maintainers consider ``edit auditability" as a standard compliance requirement.

\section*{Acknowledgment.}We sincerely thank Professor Kaifeng Lyv from Tsinghua University for valuable discussions and insightful comments that helped improve this
work. We sincerely thank Haoxin Li from The University of Science and Technology of China for valuable discussions on mathematical details, specifically Lemma \ref{lem: connection between theta and phi} and Remark \ref{rem: why almost all} in the Appendix.

\bibliography{example_paper}
\bibliographystyle{icml2026}

\newpage
\appendix
\onecolumn

\section{Limitation} \label{limitation}
First, our reverse-engineering attack currently relies on predefined candidate pools of subjects and prompt templates; future work may develop gradient-based, high-fidelity attack methods that operate without auxiliary data, which could pose a higher safety risk to model editing. 

Second, while our camouflage-based defense effectively obscures editing subjects, increasing the protection level inevitably results in the hallucination of decoy subjects. Furthermore, its long-term behavior under large-scale, sequential editing settings remains to be investigated. As edits accumulate, repeatedly applying subspace camouflage may cause gradual performance degradation or distributional drift in the model’s latent representations, potentially reducing overall utility. Future work could explore more scalable and theoretically grounded protection mechanisms.

\section{Related Work}\label{detail_Related_work}

\subsection{Security and Privacy Threats in LLMs.}
LLMs face a variety of threats that compromise model safety and data privacy~\cite{das2025security}. At the security level, prompt injection~\cite{shin2020autoprompt} and jailbreaking~\cite{wei2023jailbroken} manipulate inputs to override system instructions~\cite{liu2023prompt, shi2024optimization} or bypass safety alignment~\cite{shen2024anything, deng2024masterkey}. Meanwhile, backdoor~\cite{shi2023badgpt, cai2022badprompt} and data poisoning attacks~\cite{chen2024agentpoison,299908} embed malicious triggers during training to elicit harmful behaviors. At the privacy level, unintended memorization enables PII leakage~\cite{nakka2024pii, kim2023propile}, membership inference~\cite{kaneko2025sampling,feng2025exposing} and gradient leakage~\cite{guo2021gradient,balunovic2022lamp}. These risks necessitate efficient data erasure mechanisms. Model editing has thus emerged for its surgical precision in knowledge modification~\cite{hossain2025investigating, li2025editing}.

\subsection{Model Editing Paradigms.}
Existing model editing techniques generally fall into three categories. Parameter-preserving methods utilize external modules~\cite{mitchell2022memory, huang2023transformer, hartvigsen2023aging, yu2024melo}, memory systems~\cite{wang2024wise}, or input contexts~\cite{madaan2022memory, zheng2023can} to implement editing, yet suffer from over-parameterization as model scales grow. Meta-learning strategies utilize hypernetworks to predict weight updates~\cite{de2021editing, mitchellfast, zhang2024instructedit}, though they offer limited interpretability. In contrast, locate-then-edit algorithms offer a more transparent and parameter-efficient alternative. By localizing knowledge-related neurons via causal tracing, this approach implements editing by modifying these specific units. Representative methods include ROME~\cite{meng2022locating} for single-fact editing and MEMIT~\cite{meng2023mass} for batch editing. To mitigate model collapse, recent works introduce post-hoc regularization techniques~\cite{gu2024model, maperturbation}, and AlphaEdit~\cite{fangalphaedit} maintains general model capabilities by projecting updates onto the null space of general knowledge. Furthermore, recent studies recognize the limitations of existing editing techniques. Methodologically, \citet{liu2026backward} enhances the editing performance of MEMIT using forward replay techniques. In terms of evaluation, \citet{liu2026we} proposes a behavior distribution-based evaluation framework to measure editing locality.

\subsection{Leakage Risks in Model Editing.}
Although model editing enables efficient knowledge modification, its risks remain unclear. \citet{patilcan} first explored this issue and proposed a posterior evaluation framework to detect the residuals of erased answers by analyzing intermediate layer logit distributions. However, their attack relies on exact original prompts as triggers, which is a strong assumption that seldom holds in real-world scenarios. \citet{youssef2025tracing} recovers the pre-edit behavior of the model by analyzing the post-edit model, but they cannot reconstruct the editing prompt and original output. In contrast, our attack focuses on the mathematical fingerprints of parameter updates. It enables the accurate recovery of all editing information without original queries, revealing the inherent side-channel leakage risks in model editing mechanisms.

\section{Extension of Subject Inference Attack} \label{extension-SIA}
In this section, we provide detailed derivations and discussions on how our subject inference attack framework extends to other locate-then-edit methods, specifically ROME and AlphaEdit.
\subsection{Extension to ROME} \label{SIA-ROME}
ROME treats model editing as a constrained optimization problem, where the goal is to satisfy the hard constraint $\mathbf{W}_{new}\mathbf{k}_* = \mathbf{v}_*$ while preserving general knowledge as much as possible (encoded by the covariance matrix $\mathbf{C}$). Let $\mathbf{r}_* = \mathbf{v}_* - \mathbf{W}_0 \mathbf{k}_*$ denote the residual vector. The closed-form solution for the weight update is given by:
\begin{equation}
    \Delta \mathbf{W}_{\text{ROME}} =  \frac{\mathbf{r}_* \mathbf{k}_*^\top \mathbf{C}^{-1}}{\mathbf{k}_*^\top \mathbf{C}^{-1} \mathbf{k}_*}.
\end{equation}

From the perspective of our attack, although the optimization objective differs from MEMIT, the geometric property of the update remains exploitable. The update $\Delta \mathbf{W}$ is strictly a rank-one matrix. Crucially, its row space is entirely spanned by the vector $\mathbf{k}_*^\top \mathbf{C}^{-1}$. 

Consequently, we can directly employ Algorithm~\ref{alg: stage1, memit} by setting $N=1$. It is worth noting that in this rank-one scenario, our projection-based scoring function $\rho_i^c$ degenerates into a cosine similarity calculation:
\begin{equation}
     \rho_i^c = \frac{\left\| \mathbf{k}_*^\top \mathbf{k}_i^c \right\|_2}{\left\| \mathbf{k}_* \right\|_2\left\| \mathbf{k}_i ^c\right\|_2}  = \left|\text{CosSim}(\mathbf{k}_i^c, \mathbf{k}_*)\right| .
\end{equation}

\subsection{Extension to AlphaEdit}\label{SIA-Alpha}

Unlike MEMIT, which employs a soft constraint weighted by the covariance matrix $\mathbf{C}$, AlphaEdit constructs a projection matrix $\mathbf{P}$ to preserve the model's general capabilities by projecting the update onto the null space of general knowledge. Its closed-form solution is given by:

\begin{equation}
    \Delta \mathbf{W}_{\text{AlphaEdit}} = \mathbf{R} \mathbf{K}^\top \mathbf{P} \left( \mathbf{K}\mathbf{K}^\top \mathbf{P} + \mathbf{I} \right)^{-1} .
\end{equation}

Next, following the derivations in Appendix ~\ref{apply WMI to alpha}, we apply the Woodbury matrix identity to rewrite the AlphaEdit update. The reformulated update is given by:
\begin{equation}
    \Delta \mathbf{W}_{\text{AlphaEdit}} = \mathbf{R} \left( \mathbf{I} + \mathbf{K}^\top \mathbf{P} \mathbf{K} \right)^{-1} \mathbf{K}^\top \mathbf{P} .
\end{equation}

Due to the null-space projection matrix $\mathbf{P}$, we cannot recover some of the information of $\col(\mathbf{K})$ (see Lemma \ref{lem: indistinguishability} and Remark \ref{rem: indistinguishability} for details). Thus, we propose a different subject recovery algorithm for AlphaEdit in Algorithm \ref{alg: stage1, alphaedit},  where $\rho_i^c$ represents the projection coefficient of the candidate vector projected by $\mathbf{P}$ onto the subspace $\col(\mathbf{P}\mathbf{K})$.

\begin{center}

    \begin{minipage}{0.6\linewidth}

        \begin{algorithm}[H] 
            
            \caption{The KSTER Attack. Stage I: Subject Inference (AlphaEdit)}
            \label{alg: stage1, alphaedit}
            
            \SetAlgoLined
            \DontPrintSemicolon 

            \KwIn{Pre-edit model $\theta$, weight update matrix $\Delta \mathbf{W}$, projection matrix $\mathbf{P}$, subject candidates $\mathcal{S}_{\rm cand}$, generic template $\mathcal{T}_{\rm gen}$,  activation extraction function $\mathcal{F}_\theta(\cdot)$.}
            \KwOut{Predicted target set $\hat{\mathcal{S}}$.}

            \BlankLine 
            \setcounter{AlgoLine}{0}
            $N \leftarrow \text{Rank}(\Delta \mathbf{W})$.\;
            
            $\mathbf{M} \leftarrow \Delta \mathbf{W}$.\;
            
            $[\mathbf{U}, \mathbf{\Sigma}, \mathbf{V}^\top] \leftarrow \text{Singular value decomposition}(\mathbf{M})$.\;
            
            $\mathbf{V}_N \leftarrow \mathbf{V}[:, 1:N]$.\;
            
            $\mathcal{J} \leftarrow \emptyset$.\;

            \For{$s_i^c \in \mathcal{S}_{\rm cand}$}{
                $\mathbf{k}_i^{\prime c} \leftarrow \mathbf{P} \cdot \mathcal{F}_\theta(s_i^c, \mathcal{T}_{\rm gen})$.\;

                $\rho_i^c \leftarrow \frac{\left\| \mathbf{V}_N^\top \mathbf{k}_i^{\prime c} \right\|_2}{\left\|  \mathbf{k}_i^{\prime c} \right\|_2}$.\; 
                
                $\mathcal{J} \leftarrow \mathcal{J} \cup \{(s_i^c, \rho_i^c)\}$.\;
                
            }

            $\hat{\mathcal{S}} \leftarrow\text{Top-}N(\{s_i^c \mid (s_i^c, \rho_i^c) \in \mathcal{J}\})$.\;
            
            \Return $\hat{\mathcal{S}}$.\;
            
        \end{algorithm}
        
    \end{minipage}
\end{center}

\section{Extensions of Subspace Camouflage} \label{sec: extension_defense}

In this section, we extend our defense strategy \textit{subspace camouflage} to ROME and AlphaEdit by formalizing the constraint problems and providing their solutions. Here we still require the same construction of $\tilde{\mathbf{K}}$ (for ROME, $N=1$, $\tilde{\mathbf{K}}$ reduces to the vector $\tilde{\mathbf{k}}_*$), while the mathematical forms of defense weight updates $\Delta \mathbf{W}_{\rm defense}$ are slightly different. 

We recall the constraint problem and its solution for MEMIT in \Sref{subsec: subspace camouflage}. The constraint problem is formulated by 

\begin{equation}
  \begin{cases}
    \row(\Delta \mathbf{W}_{\rm defense, MEMIT} \mathbf{C}) &\subseteq \col(\tilde{\mathbf{K}}) , \\
    \Delta \mathbf{W}_{\rm defense, MEMIT} \mathbf{K} &= \Delta \mathbf{W}_{\rm MEMIT} \mathbf{K} . 
  \end{cases}
\end{equation}

Its solution is (proven in Theorem \ref{thm: solving the constraints}, Appendix \ref{sec: mechanism and properties of subspace camouflage})

\begin{equation}
 \begin{aligned}
  \Delta \mathbf{W}_{\rm defense, MEMIT} &= \Delta \mathbf{W}_{\rm MEMIT} \mathbf{K} \left( \tilde{\mathbf{K}}^\top \mathbf{C}^{-1} \mathbf{K} \right)^{-1} \tilde{\mathbf{K}}^\top \mathbf{C}^{-1} .  
 \end{aligned}
\end{equation}

\subsection{Extension to ROME}\label{Defense-ROME}
ROME is designed for a single-edit setting ($N=1$). Following similar arguments in \Sref{subsec: subspace camouflage}, now the constraints become: 

\begin{equation}\label{eq: subspace camouflage, rome, constraints def}
  \begin{cases}
    \row(\Delta \mathbf{W}_{\rm defense, ROME} \mathbf{C}) &\subseteq \spn(\{\tilde{\mathbf{k}}_*\}), \\
    \Delta \mathbf{W}_{\rm defense, ROME} \mathbf{k}_* &= \Delta \mathbf{W}_{\rm ROME} \mathbf{k}_* . 
  \end{cases}
\end{equation}

Here $\tilde{\mathbf{k}}_*$ is exactly the $N=1$ case of $\tilde{\mathbf{K}}$.   

By solving the constraints above (detailed in Theorem \ref{thm: solving the constraints}, Appendix \ref{sec: mechanism and properties of subspace camouflage}), we obtain the new weight update as: 

\begin{equation}\label{eq: defense, rome}
  \begin{aligned}
    \Delta \mathbf{W}_{\rm defense, ROME} &= \frac{\Delta \mathbf{W}_{\rm ROME} \mathbf{k}_* \tilde{\mathbf{k}}_*^\top \mathbf{C}^{-1}}{\tilde{\mathbf{k}}_*^\top \mathbf{C}^{-1} \mathbf{k}_*} . 
  \end{aligned}
\end{equation}

\subsection{Extension to AlphaEdit}\label{Defense-Alpha}

For AlphaEdit, since Algorithm \ref{alg: stage1, alphaedit} only reconstructs $\col(\mathbf{P}\mathbf{K})$, we slightly modify the constraint on the linear subspaces (change the right side from $\col(\tilde{\mathbf{K}})$ to $\col(\mathbf{P} \tilde{\mathbf{K}})$), while retaining the consistency of $\Delta \mathbf{W}_{\rm defense, AlphaEdit}$ and $\Delta \mathbf{W}_{\rm AlphaEdit}$ on edited subjects $\mathbf{K}$: 

\begin{equation}\label{eq: subspace camouflage, alphaedit, constraints def}
  \begin{cases}
    \row(\Delta \mathbf{W}_{\rm defense, AlphaEdit}) &\subseteq \col(\mathbf{P} \tilde{\mathbf{K}}) , \\
    \Delta \mathbf{W}_{\rm defense, AlphaEdit} \mathbf{K} &= \Delta \mathbf{W}_{\rm AlphaEdit} \mathbf{K} . 
  \end{cases}
\end{equation}

By solving the constraints (detailed in Theorem \ref{thm: solving the constraints}, Appendix \ref{sec: mechanism and properties of subspace camouflage}) we obtain 

\begin{equation}\label{eq: defense, alphaedit}
  \begin{aligned}
    \Delta \mathbf{W}_{\rm defense, AlphaEdit} &= \Delta \mathbf{W}_{\rm AlphaEdit} \mathbf{K} \left( \tilde{\mathbf{K}}^\top \mathbf{P} \mathbf{K} \right)^{-1} \tilde{\mathbf{K}}^\top \mathbf{P} . 
  \end{aligned}
\end{equation}

Similar to MEMIT, in implementation, we apply a small ridge
regularization on $\tilde{\mathbf{K}}^\top \mathbf{P} \mathbf{K}$ before inversion to ensure numerical stability.

\section{Implementation Details}
\label{details and experiments}

\subsection{Detailed Experimental Setup for Attack} \label{appendix:attack_setup}

This section provides additional details regarding the experimental configurations, library construction, and metric definitions for our attack framework.

\textbf{Implementation Details.}
For each dataset, we randomly sample $N$ knowledge tuples from the first 2,000 entries. We set $N=1$ for the single-edit algorithm (ROME) and vary $N \in \{10, 50, 100\}$ for batched editors (MEMIT and AlphaEdit). To ensure statistical significance, all reported results are averaged over five independent runs. Furthermore, we follow the configurations of AlphaEdit~\cite{fangalphaedit} for the calculation of the covariance matrix $\mathbf{C}$ and the projection matrix $\mathbf{P}$.

\textbf{Candidate Library Construction.}
To simulate a realistic attack scenario where the attacker does not have exact knowledge of the edited subjects and prompts, we construct the following candidate libraries:
\begin{itemize}[leftmargin=*, noitemsep, topsep=0pt]
    \item \textbf{Subject Library:} We aggregate the subjects from the first 2,000 examples of the target dataset (CounterFact or zsRE) to form the candidate pool.
    \item \textbf{Prompt Library:} We build a unified candidate pool of 1,000 prompts. This is achieved by aggregating and deduplicating prompts from the first 2,000 examples of both CounterFact and zsRE. Additional diverse prompts are added to reach the fixed size of 1,000.
\end{itemize}

\textbf{Gray-box Baseline Details for Subject Inference.}
The gray-box baseline serves to quantify the advantage of accessing internal weights ($\Delta \mathbf{W}$) over observing external behavior. For each candidate subject $s_i^c$, we query both the pre-edit model $\theta$ and the post-edit model $\theta'$ with a generic template. We compute the Jensen--Shannon (JS) divergence between their next-token logit distributions: $D_{JS}(\mathbb{P}_{\theta} \| \mathbb{P}_{\theta'})$. The $N$ subjects with the highest divergence are identified as the predicted targets. Regarding the recovery of prompt templates, the procedure is identical to stage II of our white-box attack, as it relies solely on output behavior rather than internal parameter computations.

\textbf{Naive Method for Prompt Recovery.}
The naive method simply observes the logit distribution of the edited model when candidate prompt templates are filled with the current subject; it then selects the prompt template with the minimum entropy (i.e., the sharpest distribution) as the final result.

\textbf{Detailed Metric Definitions.}
\begin{itemize}[leftmargin=*, noitemsep, topsep=0pt]
    \item \textbf{Subject Inference:} Subject recall@N measures the percentage of cases where the ground-truth subject is included in the top-$N$ predicted candidates. The average projection coefficient quantifies the spectral alignment between the subject's key vector and the principal linear subspace derived by parameter update $\Delta \mathbf{W}$.
    \item \textbf{Prompt Recovery:} Top-$N_r$ recall ($N_r \in \{1, 5, 20\}$) evaluates the exact-match success rate. Semantic similarity (Sim.) utilizes Sentence-BERT (SBERT)~\cite{reimers2019sentence} to compute the average cosine similarity between the ground-truth prompt and the top-5 recovered candidates, reflecting the recovery of semantic essence.
\end{itemize}
For all of these metrics, higher values indicate better performance.

Once the subject and the prompt are successfully recovered, the sensitive information can be easily extracted by querying the pre-edit model $\theta$.

\subsection{Detailed Experimental Setup for Defense} \label{appendix:defense_setup}

This section provides the implementation details and the set of metrics used to evaluate our defense mechanism.

\textbf{Implementation Details.}
We evaluate the defense on Llama3-8B-Instruct. The defense is integrated into three representative editors: ROME~\cite{meng2022locating}, MEMIT~\cite{meng2023mass}, and AlphaEdit~\cite{fangalphaedit}. For MEMIT and AlphaEdit, we adopt a batch editing setting with a batch size of $N=100$, while for ROME, we use the default single-edit setting ($N=1$).  The strength of the defense is controlled by the camouflage coefficient $\alpha$, where $\alpha=0$ represents the vanilla, unprotected model. To ensure a rigorous evaluation, we employ our white-box subject inference attack (described in ~\Sref{sec:attack_method}) as the attacker. We construct the decoy key matrix by sampling $N$ decoy subjects from the first 2,000 examples for batch editing, whereas for single editing, the decoy key vector is obtained by averaging the key vectors of 10 sampled subjects. All results are the average of 5 independent runs.

\textbf{Editing Metrics.}
Distinct from previous works~\cite{fangalphaedit} that often rely on relative probability (e.g., $P(y_{target}) > P(y_{original})$), we implement a stricter exact match standard based on greedy decoding to better reflect real-world editing quality:
\begin{itemize}[leftmargin=*, noitemsep, topsep=0pt]
    \item \textbf{Efficacy (Eff):} The rate at which the model's greedy generation exactly matches the target output $y_{\text{target}}$.
    \item \textbf{Generalization (Gen):} The exact match rate when the model is queried with semantically equivalent but lexically different rephrased prompts.
    \item \textbf{Specificity (Spe):} The exact match rate with the original ground-truth answer for neighborhood (unrelated) prompts, ensuring no over-generalization.
    \item \textbf{Fluency (Flu):} Measured by the $n$-gram entropy of generated sequences to monitor for repetitive patterns or linguistic collapse.
    \item \textbf{Consistency (Consis):} The TF-IDF cosine similarity between generated text and a set of reference texts to measure semantic stability.
\end{itemize}

\textbf{Editing Protection Metrics.}
To quantify the effectiveness of concealing edited subjects, we utilize the average rank of true edited subjects within the constructed candidate library. During an attack, the attacker ranks all candidates based on their inference scores; a higher average rank for the true subject indicates that it is effectively obscured among decoys, reflecting stronger editing protection.

\textbf{Model Utility Metrics.}
To evaluate whether the defense mechanism compromises the model's general capabilities, we test the edited models on six standard benchmarks: SST, MMLU, MRPC, CoLA, RTE, and NLI~\cite{hendrycks2020measuring,wang2018glue}. We report the F1 scores across these tasks. A minimal performance drop compared to the pre-edit model indicates that the defense successfully preserves the model's general utility.

For all of these metrics, higher values indicate better performance.

\newtcolorbox{mybox}[1]{
    enhanced,
    colback=white,
    colframe=gray!60!black,
    fonttitle=\bfseries\small,
    title=#1,
    arc=1mm,
    boxrule=0.6pt,
    before skip=5pt,  
    after skip=6pt,   
    top=1mm, bottom=1mm,
    left=2mm, right=2mm,
    boxsep=1pt,
    drop shadow={gray!30!white}
}

\subsection{Subject and Prompt Template Examples for Figure~\ref{fig:subject_invariance_combined}}
\label{subject_prompt_cases}

\begin{mybox}{Subject List }
\small\texttt{Danielle Darrieux, Albert Einstein, Marie Curie, Isaac Newton, Galileo Galilei, Charles Darwin, Nikola Tesla, Leonardo da Vinci, Ada Lovelace, Alan Turing.}
\end{mybox}

\begin{mybox}{Prompt Templates ($\mathcal{T}_{\rm gen}$)}
\small
\begin{enumerate}[label=\textbf{T\arabic*:}, leftmargin=3em]
    \item \textit{Considering their family background and early upbringing, the mother tongue of \{\} is \{\}.}
    \item \textit{According to historical records regarding their early life, the birthplace of \{\} is \{\}.}
    \item \textit{Throughout their long and distinguished career, the primary occupation of \{\} is \{\}.}
    \item \textit{Based on their legal citizenship and country of origin, the nationality of \{\} is \{\}.}
    \item \textit{Regarding their personal spiritual beliefs and practices, the religion of \{\} is \{\}.}
    \item \textit{If we look at their specific area of professional expertise, the field of work of \{\} is \{\}.}
    \item \textit{At the end of their life journey, the recorded place of death of \{\} is \{\}.}
    \item \textit{During the peak of their professional career, the main employer of \{\} is \{\}.}
    \item \textit{Regarding their academic background and alma mater, the institution where \{\} was educated at is \{\}.}
    \item \textit{Among the many accolades and honors bestowed upon them, the award received by \{\} is \{\}.}
\end{enumerate}
\end{mybox}

\subsection{Generic Prompt Template Case}
\label{sec:ablation_templates}
We adopt the first prompt template among the following options\footnote{We did not cherry-pick these prompt templates.} as our generic template for all primary experiments. Furthermore, we conduct an ablation study in Table \ref{tab:attack_metrics_by_template} (Appendix~\ref{additional experiments}) to evaluate the robustness of our method to the choice of different prompt templates.

\begin{mybox}{Generic Prompt Template}
\small\texttt{The mother tongue of \{\} is \{\}.} \\
\small\texttt{The US dollar \{\} earns each year is \{\}.} \\
\small\texttt{The detailed address of \{\} is \{\}.} \\
\small\texttt{We know that \{\} is \{\}.} \\
\small\texttt{\{\}, who is \{\}.}
\end{mybox}

\section{Additional Experiments}
\label{additional experiments}

\textbf{Subject Inference Performance on ROME.} We evaluate the performance of our subject inference attack on ROME in Table~\ref{tab:rome_attack_comparison}. Results show that our white-box attack consistently achieves a 100\% recall rate, while the logit-based baseline still faces the possibility of attack failure on Qwen2.5-7B-Instruct.

\textbf{Prompt Recovery Performance on ROME and AlphaEdit.} Tables~\ref{tab:prompt_recovery_rome} and \ref{tab:prompt_recovery_scaling_Alphaedit_all_metrics} provide supplementary results for the prompt recovery performance of ROME and AlphaEdit, respectively. Consistent with the observations for MEMIT, our attack maintains high recall rates on the CounterFact dataset across all evaluated models. Specifically, for ROME (Table~\ref{tab:prompt_recovery_rome}), the Top-20 recall for all three models consistently reaches $1.00$, with semantic similarity remaining above $0.84$. For the more complex AlphaEdit (Table~\ref{tab:prompt_recovery_scaling_Alphaedit_all_metrics}), the recovery quality remains stable as the number of edits $N$ scales from 10 to 100, where both GPT-J and Qwen2.5-7B-Instruct achieve Top-20 recall rates exceeding $0.95$ on CounterFact. Although performance on zsRE shows a slight decline compared to CounterFact due to the QA format, the semantic similarity scores ($0.77\text{--}0.89$) confirm that the recovered prompts maintain a high degree of semantic correlation with the original prompts.

\textbf{Semantic Ambiguity in case of ``Anaal Nathrakh.''} Figure~\ref{fig:failure_case} illustrates the score ranking of candidate prompts for the subject ``Anaal Nathrakh,'' revealing the phenomenon of \textit{semantic ambiguity} during the editing process. In this case, although the original prompt (\texttt{\{\}, that was created in \{\}}) is successfully recovered, it ranks only 17th and is masked by several unrelated candidate prompts. The top-five prompts primarily consist of location-related templates, such as \texttt{The headquarters of \{\} is in \{\}} and \texttt{\{\} was developed in \{\}}. This indicates that existing editing methods fail to precisely identify the ``musical group'' entity category, instead treating it as an entity with location attributes (such as a company or software). This exposes the deficiencies of existing editing algorithms, consequently limiting the performance of our prompt recovery attack.

\textbf{Editing Performance under Subspace Camouflage.} Tables~\ref{tab:rome_defense_eff_Counterfact}, \ref{defense_rome_zsre}, and \ref{tab:zsre_final_v3} evaluate the impact of the camouflage coefficient $\alpha$ across different editing methods and datasets. The results demonstrate that as $\alpha$ increases, the protection strength improves notably, while editing utility remains stable within a specific range. For the ROME method, as shown in Tables~\ref{tab:rome_defense_eff_Counterfact} and \ref{defense_rome_zsre}, increasing $\alpha$ from 0 to 9 causes the average rank of the target subjects to increase from $1.00$ to $17.20$ (CounterFact) and $30.40$ (zsRE), respectively. Crucially, throughout the scaling process, the editing efficacy remains near $1.00$, and metrics such as fluency and consistency are virtually unaffected. On the more challenging zsRE dataset, MEMIT and AlphaEdit exhibit similar trends (Table \ref{tab:zsre_final_v3}). When $\alpha$ scales from 0 to 2, the rank increases sharply (e.g., from $47.42$ to $157.23$ for MEMIT). However, when $\alpha$ reaches 4, the efficacy of MEMIT drops to $0.59$, and the specificity of AlphaEdit also begins to decline. This is because an excessively large $\alpha$ increases the $L_2$ norm of the update matrix $\Delta W_{\text{defense}}$, thereby interfering with unrelated input-output mappings.

\textbf{Utility-Protection Trade-off under Subspace Camouflage.} Figures~\ref{fig:privacy_utility_tradeoff_full}, \ref{fig:privacy_utility_tradeoff_zsre}, and \ref{fig:privacy_utility_tradeoff_rome} illustrate the trade-off between subject concealment and the model's general capability, confirming that stronger protection generally sacrifices general utility. For the batch edits ($N=100$) shown in Figures \ref{fig:privacy_utility_tradeoff_full} and \ref{fig:privacy_utility_tradeoff_zsre}, both MEMIT and AlphaEdit maintain relatively stable utility within a certain range but experience a performance decline at excessive scales. In contrast, the ROME method ($N=1$) in Figure \ref{fig:privacy_utility_tradeoff_rome} exhibits remarkable stability. This reveals that batch editing faces a more severe risk of degrading general capability than single-instance editing.

\vspace{-1mm}
\begin{table}[H]
\centering
\caption{Performance of prompt recovery attack of AlphaEdit. We report different recall (Top-$N_r$) and semantic similarity (Sim.).}
\label{tab:prompt_recovery_scaling_Alphaedit_all_metrics}

\resizebox{\linewidth}{!}{
    \setlength{\tabcolsep}{4.5pt} 
    \renewcommand{\arraystretch}{1.15} 
    
    \begin{tabular}{l c cccc cccc}
    \toprule

    \multirow{2}{*}{\textbf{Model}} & \multirow{2}{*}{\textbf{$N$}} & 
    \multicolumn{4}{c}{\textbf{CounterFact}} & 
    \multicolumn{4}{c}{\textbf{zsRE}} \\
    \cmidrule(lr){3-6} \cmidrule(lr){7-10}

    & & \textbf{Top-1} & \textbf{Top-5} & \textbf{Top-20} & \textbf{Sim.} 
      & \textbf{Top-1} & \textbf{Top-5} & \textbf{Top-20} & \textbf{Sim.} \\
    \midrule

    \multirow{3}{*}{\textbf{GPT-J}} 
    & 10  & 0.48 $\pm$ 0.11 & 0.92 $\pm$ 0.05 & 0.98 $\pm$ 0.05 & 0.89 $\pm$ 0.01 & 0.28 $\pm$ 0.13 & 0.62 $\pm$ 0.13 & 0.88 $\pm$ 0.11 & 0.84 $\pm$ 0.03 \\
    & 50  & 0.47 $\pm$ 0.07 & 0.88 $\pm$ 0.03 & 1.00 $\pm$ 0.01 & 0.89 $\pm$ 0.01 & 0.33 $\pm$ 0.04 & 0.67 $\pm$ 0.02 & 0.88 $\pm$ 0.02 & 0.83 $\pm$ 0.01\\
    & 100 & 0.52 $\pm$ 0.05 & 0.90 $\pm$ 0.02 & 1.00 $\pm$ 0.00 & 0.89 $\pm$ 0.01 & 0.31 $\pm$ 0.03 & 0.66 $\pm$ 0.04 & 0.87 $\pm$ 0.04 & 0.83 $\pm$ 0.01 \\
    \midrule

    \multirow{3}{*}{\shortstack{\textbf{Llama-3}\\\textbf{-Instruct}}}
    & 10  & 0.54 $\pm$ 0.13 & 0.88 $\pm$ 0.05 & 0.94 $\pm$ 0.09 & 0.88 $\pm$ 0.01 & 0.34 $\pm$ 0.09 & 0.68 $\pm$ 0.13 & 0.84 $\pm$ 0.11 & 0.84 $\pm$ 0.02 \\
    & 50  & 0.44 $\pm$ 0.06 & 0.79 $\pm$ 0.04 & 0.87 $\pm$ 0.05 & 0.89 $\pm$ 0.02 & 0.31 $\pm$ 0.08 & 0.55 $\pm$ 0.05 & 0.76 $\pm$ 0.04 & 0.83 $\pm$ 0.01 \\
    & 100 & 0.45 $\pm$ 0.02 & 0.77 $\pm$ 0.03 & 0.87 $\pm$ 0.04 & 0.87 $\pm$ 0.01 & 0.25 $\pm$ 0.04 & 0.51 $\pm$ 0.05 & 0.71 $\pm$ 0.04 & 0.83 $\pm$ 0.01 \\
    \midrule

    \multirow{3}{*}{\shortstack{\textbf{Qwen-2.5}\\\textbf{-Instruct}}}
    & 10  & 0.30 $\pm$ 0.19 & 0.82 $\pm$ 0.13 & 0.98 $\pm$ 0.05 & 0.85 $\pm$ 0.02 & 0.12 $\pm$ 0.11 & 0.34 $\pm$ 0.24 & 0.54 $\pm$ 0.32 & 0.81 $\pm$ 0.04 \\
    & 50  & 0.36 $\pm$ 0.04 & 0.80 $\pm$ 0.05 & 0.96 $\pm$ 0.05 & 0.85 $\pm$ 0.01 & 0.08 $\pm$ 0.06 & 0.31 $\pm$ 0.16 & 0.57 $\pm$ 0.21 & 0.78 $\pm$ 0.04 \\
    & 100 & 0.43 $\pm$ 0.03 & 0.85 $\pm$ 0.04 & 0.96 $\pm$ 0.03 & 0.85 $\pm$ 0.01 & 0.07 $\pm$ 0.05 & 0.21 $\pm$ 0.15 & 0.42 $\pm$ 0.16 & 0.77 $\pm$ 0.03 \\

    \bottomrule
    \end{tabular}
    \label{tab:prompt_recovery_scaling_alphaedit}
}
\end{table}

\begin{table}[H]
\centering
\caption{Performance of subject inference attack for single-edit tasks (ROME).}
\label{tab:rome_attack_comparison}

\small 
\setlength{\tabcolsep}{4.5pt} 
\renewcommand{\arraystretch}{1.12}

\begin{tabular}{c cc cc}
\toprule

\multirow{2}{*}{\textbf{Model}} & 
\multicolumn{2}{c}{\textbf{CounterFact}} & 
\multicolumn{2}{c}{\textbf{zsRE}} \\
\cmidrule(lr){2-3} \cmidrule(lr){4-5}

& \textbf{Ours (White-box)} & \textbf{Gray-box} & \textbf{Ours (White-box)} & \textbf{Gray-box} \\
\midrule

\textbf{GPT-J} 
& \textbf{1.00} $\pm$ 0.00 & \textbf{1.00} $\pm$ 0.00 & \textbf{1.00} $\pm$ 0.00 & \textbf{1.00} $\pm$ 0.00 \\

\textbf{Llama-3-Instruct} 
& \textbf{1.00} $\pm$ 0.00 & \textbf{1.00} $\pm$ 0.00 & \textbf{1.00} $\pm$ 0.00 & \textbf{1.00} $\pm$ 0.00 \\

\textbf{Qwen-2.5-Instruct} 
& \textbf{1.00} $\pm$ 0.00 & \textbf{1.00} $\pm$ 0.00 & \textbf{1.00} $\pm$ 0.00 & 0.80 $\pm$ 0.44 \\

\bottomrule
\end{tabular}
\end{table}

\vspace{-1mm}

\begin{table}[H]
\centering
\caption{Performance of prompt recovery attack of ROME. We report different recall (Top-$N_r$) and semantic similarity (Sim.).}
\label{tab:prompt_recovery_rome}

\resizebox{0.95\linewidth}{!}{
    \setlength{\tabcolsep}{4.5pt} 
    \renewcommand{\arraystretch}{1.15}

    \begin{tabular}{c cccc cccc}
    \toprule

    \multirow{2}{*}{\textbf{Model}} & 
    \multicolumn{4}{c}{\textbf{CounterFact}} & 
    \multicolumn{4}{c}{\textbf{zsRE}} \\
    
    \cmidrule(lr){2-5} \cmidrule(lr){6-9}

    & \textbf{Top-1} & \textbf{Top-5} & \textbf{Top-20} & \textbf{Sim.} 
      & \textbf{Top-1} & \textbf{Top-5} & \textbf{Top-20} & \textbf{Sim.} \\
    \midrule

    \textbf{GPT-J} 
    & 0.60 $\pm$ 0.55 & 0.80 $\pm$ 0.45 & 1.00 $\pm$ 0.00 & 0.86 $\pm$ 0.09 & 0.20 $\pm$ 0.45 & 0.40 $\pm$ 0.55 & 1.00 $\pm$ 0.00 & 0.85 $\pm$ 0.06 \\
    
    \textbf{Llama-3-Instruct} 
    & 0.60 $\pm$ 0.55 & 0.80 $\pm$ 0.45 & 1.00 $\pm$ 0.00 & 0.89 $\pm$ 0.08 & 0.40 $\pm$ 0.55 & 0.60 $\pm$ 0.55 & 0.80 $\pm$ 0.45 & 0.87 $\pm$ 0.04 \\

    \textbf{Qwen-2.5-Instruct} 
    & 0.20 $\pm$ 0.45 & 0.40 $\pm$ 0.55 & 1.00 $\pm$ 0.00 & 0.84 $\pm$ 0.07 & 0.40 $\pm$ 0.55 & 0.40 $\pm$ 0.55 & 0.60 $\pm$ 0.55 & 0.80 $\pm$ 0.05 \\
    
    \bottomrule
    \end{tabular}
}
\end{table}
\vspace{-1mm}

\begin{figure}[H]
    \centering

    \includegraphics[width=\linewidth]{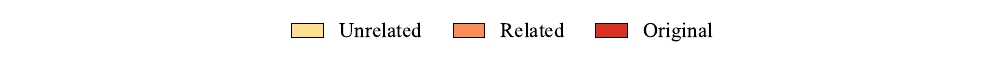} 
    \vspace{-0.5cm} 

    \includegraphics[width=\linewidth]{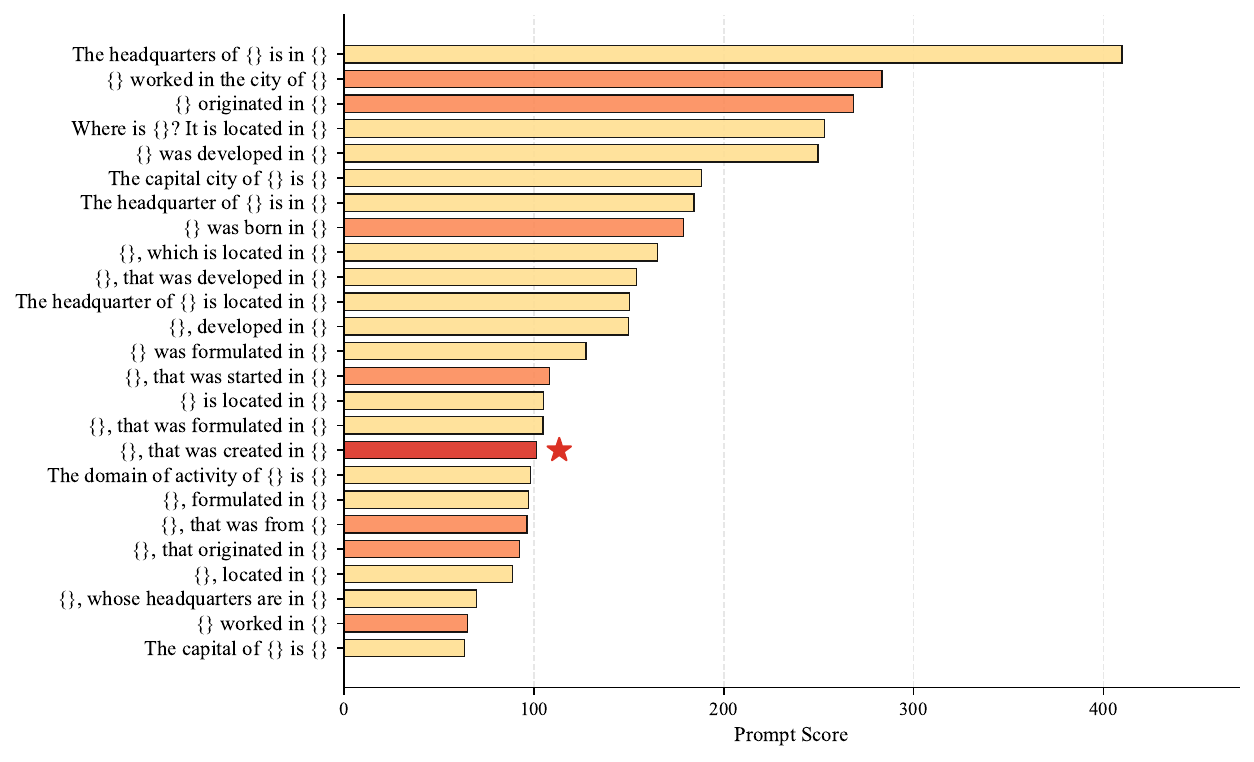}
    \vspace{0.1cm} 
    \caption{Ranking of Candidate Prompts by Score for Subject ``Anaal Nathrakh".}
    \label{fig:failure_case}
\end{figure}

\begin{figure*}[htp]
    \centering

    \includegraphics[width=0.85\linewidth]{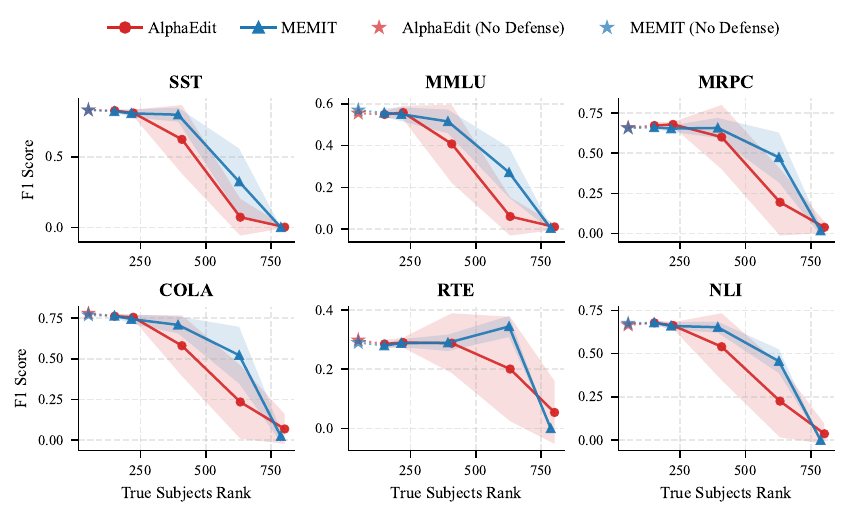}
    
    \vspace{-2mm} 
    
    \caption{Protection-utility trade-off for Llama3-8B-Instruct on CounterFact ($N=100$). Shaded regions indicate standard deviation over 5 runs.}
    
    \label{fig:privacy_utility_tradeoff_full}
    \vspace{-0.5cm} 
\end{figure*}

\begin{figure*}[htb]
    \centering

    \includegraphics[width=0.85\linewidth]{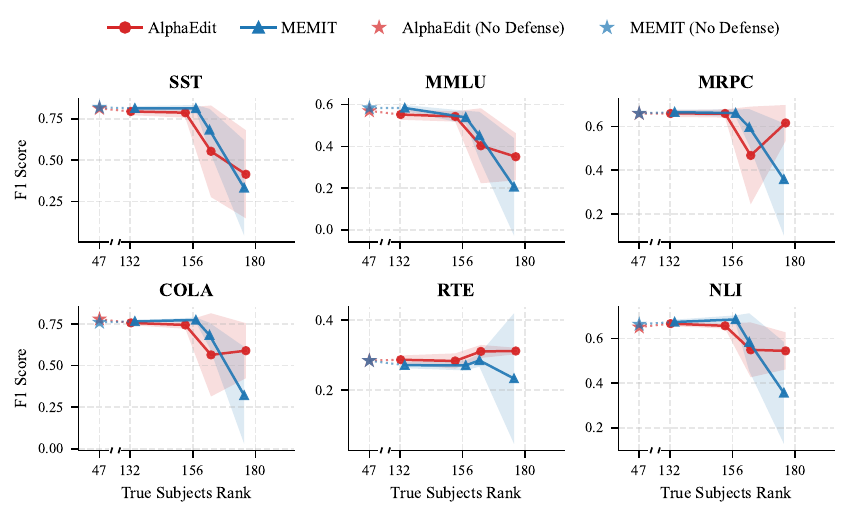}
    
     \vspace{-2mm}

    \caption{Protection-utility trade-off for Llama3-8B-Instruct on zsRE ($N=100$). Shaded regions indicate standard deviation over 5 runs.}
    
    \label{fig:privacy_utility_tradeoff_zsre}
     \vspace{-3mm} 
\end{figure*}

\begin{figure*}[htp]
    \centering

    \includegraphics[width=0.85\linewidth]{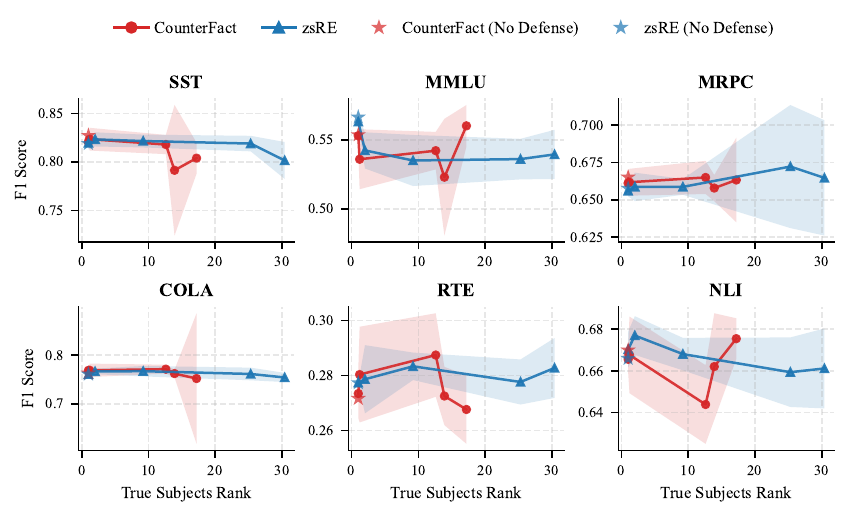}
    
     \vspace{-2mm}

    \caption{Protection-utility trade-off for Llama3-8B-Instruct on ROME ($N=1$). Shaded regions indicate standard deviation over 5 runs.}
    
    \label{fig:privacy_utility_tradeoff_rome}
     \vspace{-3mm} 
\end{figure*}

\begin{figure*}[htp]
    \centering

    \includegraphics[width=0.48\textwidth]{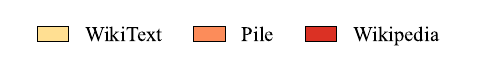}
    
    \vspace{-0.3cm}

    \begin{subfigure}[t]{0.49\textwidth}
        \centering
        \includegraphics[width=0.98\linewidth]{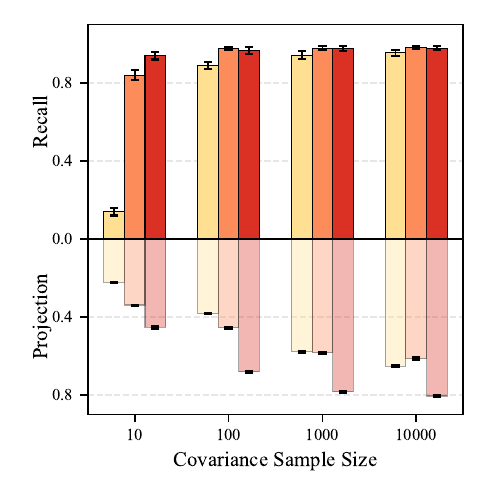}
        \caption{MEMIT Results}
        \label{fig:cov}
    \end{subfigure}
    \hfill
    \begin{subfigure}[t]{0.49\textwidth}
        \centering
        \includegraphics[width=0.98\linewidth]{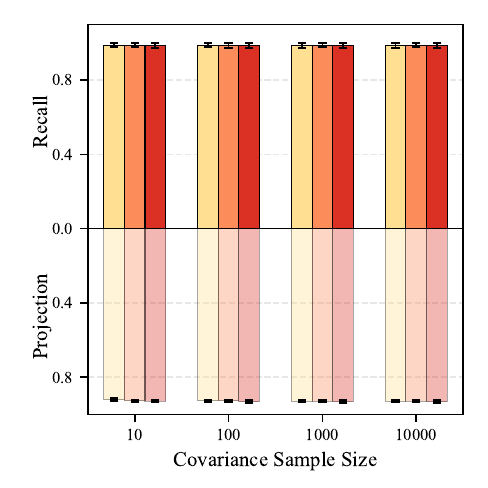}
        \caption{AlphaEdit Results}
        \label{fig:projection_ablation}
    \end{subfigure}

    \caption{Performance comparison on Llama3-8B-Instruct under different covariance estimations.}
    \label{fig:combined_figure}
\end{figure*}

\begin{table}[H] 
\centering

\begin{table}[H] 
\centering

\captionsetup{width=0.75\textwidth}
\caption{Impact of Camouflage Scale ($\alpha$) on ROME Editing Performance (CounterFact).}
\label{tab:rome_defense_eff_Counterfact}
\resizebox{0.8\textwidth}{!}{ 
    \setlength{\tabcolsep}{9.5pt} 
    \renewcommand{\arraystretch}{1.1}

    \begin{tabular}{c r @{\,$\pm$\,} l ccccc}
    \toprule

    \textbf{$\alpha$} & \multicolumn{2}{c}{\textbf{Rank}} & \textbf{Efficacy}  & \textbf{Generalization}  & \textbf{Specificity}  & \textbf{Fluency}  & \textbf{Consistency}  \\
    \midrule

    0 &  1  & 0.00 & 1.00 $\pm$ 0.00 & 0.80 $\pm$ 0.24 & 0.22 $\pm$ 0.39 & 6.31 $\pm$ 0.10 & 0.44 $\pm$ 0.16 \\
    3 & 1.20 & 0.40 & 1.00 $\pm$ 0.00 & 0.40 $\pm$ 0.37 & 0.26 $\pm$ 0.35 & 6.32 $\pm$ 0.14 & 0.34 $\pm$ 0.12 \\
    5 & 12.60 & 3.72  & 1.00 $\pm$ 0.00 & 1.00 $\pm$ 0.00 & 0.26 $\pm$ 0.14 & 6.35 $\pm$ 0.08 & 0.45 $\pm$ 0.12 \\
    7 & 13.90  & 1.35 & 1.00 $\pm$ 0.00 & 0.90 $\pm$ 0.20 & 0.16 $\pm$ 0.14 & 6.30 $\pm$ 0.16 & 0.37 $\pm$ 0.07 \\
    9 & 17.20  & 11.41 & 1.00 $\pm$ 0.00 & 0.60 $\pm$ 0.49 & 0.24 $\pm$ 0.32 & 6.36 $\pm$ 0.12 & 0.42 $\pm$ 0.11 \\
    
    \bottomrule
    \end{tabular}
}
\end{table}

\begin{table}[H] 
\centering
\captionsetup{width=0.6\textwidth}
\caption{Impact of Camouflage Scale ($\alpha$) on ROME Editing Performance (zsRE).}
\label{defense_rome_zsre}
\small 
\setlength{\tabcolsep}{6pt} 
\renewcommand{\arraystretch}{1.10}

\begin{tabular}{c r @{\,$\pm$\,} l ccc}
\toprule

\textbf{$\alpha$} & \multicolumn{2}{c}{\textbf{Rank}} & \textbf{Efficacy}  & \textbf{Generalization}  & \textbf{Specificity}  \\
\midrule

0 & 1.00  & 0.00 & 1.00 $\pm$ 0.00 & 1.00 $\pm$ 0.00 & 0.5 $\pm$ 0.00 \\
3 & 2.00 & 1.54 & 0.93 $\pm$ 0.13 & 0.87 $\pm$ 0.16 & 0.46 $\pm$ 0.14 \\
5 & 9.20 & 2.23 & 0.97 $\pm$ 0.07 & 0.90 $\pm$ 0.13 & 0.46 $\pm$ 0.09 \\
7 & 25.30 & 29.81 & 0.93 $\pm$ 0.13 & 0.93 $\pm$ 0.13 & 0.30 $\pm$ 0.16 \\
9 & 30.40 & 20.21 & 0.97 $\pm$ 0.07 & 0.97 $\pm$ 0.07 & 0.25 $\pm$ 0.15 \\
\bottomrule
\end{tabular}
\end{table}

\captionsetup{width=0.69\textwidth}
\caption{Performance comparison on zsRE dataset across different camouflage scales.}
\label{tab:zsre_final_v3}

\small 
\setlength{\tabcolsep}{6pt} 
\renewcommand{\arraystretch}{1.08}

\begin{tabular}{l c r @{\,$\pm$\,} l ccc}
\toprule
\textbf{Method} & \textbf{$\alpha$} & \multicolumn{2}{c}{\textbf{Rank}} & \textbf{Efficacy} & \textbf{Generalization} & \textbf{Specificity}  \\
\midrule

\multirow{5}{*}{\textbf{MEMIT}} 
& 0 & 47.42  & 1.57 & 0.91 $\pm$ 0.01 & 0.80 $\pm$ 0.02 & 0.32 $\pm$ 0.02 \\
& 1 & 133.87 & 1.12 & 0.91 $\pm$ 0.03 & 0.81 $\pm$ 0.04 & 0.32 $\pm$ 0.02 \\
& 2 & 157.23 & 10.54 & 0.93 $\pm$ 0.01 & 0.86 $\pm$ 0.03 & 0.31 $\pm$ 0.01 \\
& 3 & 162.45 & 6.54 & 0.88 $\pm$ 0.05 & 0.79 $\pm$ 0.05 & 0.29 $\pm$ 0.01 \\
& 4 & 175.71 & 6.38 & 0.59 $\pm$ 0.33 & 0.51 $\pm$ 0.29 & 0.21 $\pm$ 0.11 \\
\midrule

\multirow{5}{*}{\textbf{AlphaEdit}} 
& 0 & 47.04  & 1.82 & 0.90 $\pm$ 0.03 & 0.81 $\pm$ 0.03 & 0.32 $\pm$ 0.03 \\
& 1 & 132.31 & 2.95 & 0.95 $\pm$ 0.01 & 0.85 $\pm$ 0.03 & 0.33 $\pm$ 0.01 \\
& 2 & 153.18 & 6.23 & 0.96 $\pm$ 0.00 & 0.89 $\pm$ 0.02 & 0.34 $\pm$ 0.02 \\
& 3 & 163.01 & 9.66 & 0.89 $\pm$ 0.10 & 0.82 $\pm$ 0.10 & 0.30 $\pm$ 0.07 \\
& 4 & 176.38 & 11.12 & 0.86 $\pm$ 0.13 & 0.79 $\pm$ 0.14 & 0.27 $\pm$ 0.05 \\

\bottomrule
\end{tabular}
\end{table}

\begin{table*}[t]
\centering
\caption{Performance comparison of subject inference attack ($N=100$) for batch-edit tasks across different prompt templates.}
\label{tab:attack_metrics_by_template}
\resizebox{\textwidth}{!}{ 
\setlength{\tabcolsep}{6pt} 
\renewcommand{\arraystretch}{1.2} 
\begin{tabular}{ll cccc cccc}
\toprule

\multirow{4}{*}{\textbf{Model}} & 
\multirow{4}{*}{\shortstack{\textbf{Template}\\\textbf{Type}}} & 
\multicolumn{4}{c}{\textbf{CounterFact}} & 
\multicolumn{4}{c}{\textbf{zsRE}} \\
\cmidrule(lr){3-6} \cmidrule(lr){7-10}

& & 
\multicolumn{2}{c}{\textbf{MEMIT}} & \multicolumn{2}{c}{\textbf{AlphaEdit}} & 
\multicolumn{2}{c}{\textbf{MEMIT}} & \multicolumn{2}{c}{\textbf{AlphaEdit}} \\
\cmidrule(lr){3-4} \cmidrule(lr){5-6} \cmidrule(lr){7-8} \cmidrule(lr){9-10}

& & \textbf{Ours (White-box)} & \textbf{Gray-box} & \textbf{Ours (White-box)} & \textbf{Gray-box} & \textbf{Ours (White-box)} & \textbf{Gray-box} & \textbf{Ours (White-box)} & \textbf{Gray-box} \\
\midrule

\multirow{5}{*}{\textbf{GPT-J}} 
& Template 1 & 0.95 $\pm$ 0.01 & 0.88 $\pm$ 0.03 & 0.96 $\pm$ 0.01 & 0.86 $\pm$ 0.04 & 0.99 $\pm$ 0.01 & 0.92 $\pm$ 0.02 & 0.99 $\pm$ 0.01 & 0.96 $\pm$ 0.01 \\
& Template 2 & 0.93 $\pm$ 0.02 & 0.83 $\pm$ 0.02 & 0.95 $\pm$ 0.02 & 0.85 $\pm$ 0.03 & 0.99 $\pm$ 0.01 & 0.93 $\pm$ 0.01 & 0.99 $\pm$ 0.01 & 0.94 $\pm$ 0.01 \\
& Template 3 & 0.95 $\pm$ 0.01 & 0.86 $\pm$ 0.03 & 0.96 $\pm$ 0.01 & 0.84 $\pm$ 0.02 & 0.99 $\pm$ 0.01 & 0.93 $\pm$ 0.01 & 0.99 $\pm$ 0.01 & 0.95 $\pm$ 0.03 \\
& Template 4 & 0.96 $\pm$ 0.02 & 0.81 $\pm$ 0.02 & 0.95 $\pm$ 0.02 & 0.82 $\pm$ 0.03 & 0.99 $\pm$ 0.01 & 0.92 $\pm$ 0.02 & 0.99 $\pm$ 0.01 & 0.93 $\pm$ 0.02 \\
& Template 5 & 0.94 $\pm$ 0.02 & 0.80 $\pm$ 0.03 & 0.96 $\pm$ 0.02 & 0.83 $\pm$ 0.02 & 0.99 $\pm$ 0.01 & 0.92 $\pm$ 0.02 & 0.99 $\pm$ 0.01 & 0.93 $\pm$ 0.01 \\
\midrule

\multirow{5}{*}{\shortstack{\textbf{Llama-3}\\\textbf{-Instruct}}}
& Template 1 & 0.99 $\pm$ 0.01 & 0.68 $\pm$ 0.04 & 0.99 $\pm$ 0.01 & 0.45 $\pm$ 0.08 & 0.99 $\pm$ 0.01 & 0.69 $\pm$ 0.04 & 0.99 $\pm$ 0.01 & 0.43 $\pm$ 0.04 \\
& Template 2 & 0.98 $\pm$ 0.01 & 0.64 $\pm$ 0.04 & 0.99 $\pm$ 0.01 & 0.48 $\pm$ 0.06 & 0.99 $\pm$ 0.01 & 0.63 $\pm$ 0.03 & 0.99 $\pm$ 0.01 & 0.47 $\pm$ 0.06 \\
& Template 3 & 0.99 $\pm$ 0.01 & 0.62 $\pm$ 0.05 & 0.99 $\pm$ 0.01 & 0.52 $\pm$ 0.07 & 0.99 $\pm$ 0.01 & 0.66 $\pm$ 0.05 & 0.99 $\pm$ 0.01 & 0.48 $\pm$ 0.04 \\
& Template 4 & 0.99 $\pm$ 0.01 & 0.61 $\pm$ 0.05 & 0.98 $\pm$ 0.02 & 0.49 $\pm$ 0.07 & 0.98 $\pm$ 0.01 & 0.62 $\pm$ 0.04 & 0.98 $\pm$ 0.02 & 0.52 $\pm$ 0.04 \\
& Template 5 & 0.98 $\pm$ 0.01 & 0.59 $\pm$ 0.04 & 0.98 $\pm$ 0.01 & 0.53 $\pm$ 0.06 & 0.99 $\pm$ 0.01 & 0.64 $\pm$ 0.05 & 0.99 $\pm$ 0.01 & 0.49 $\pm$ 0.05 \\
\midrule

\multirow{5}{*}{\shortstack{\textbf{Qwen-2.5}\\\textbf{-Instruct}}}
& Template 1 & 0.94 $\pm$ 0.03 & 0.59 $\pm$ 0.11 & 0.95 $\pm$ 0.03 & 0.51 $\pm$ 0.06 & 0.98 $\pm$ 0.01 & 0.58 $\pm$ 0.03 & 0.99 $\pm$ 0.01 & 0.55 $\pm$ 0.04 \\
& Template 2 & 0.95 $\pm$ 0.02 & 0.57 $\pm$ 0.14 & 0.94 $\pm$ 0.03 & 0.56 $\pm$ 0.12 & 0.99 $\pm$ 0.01 & 0.56 $\pm$ 0.06 & 0.99 $\pm$ 0.01 & 0.54 $\pm$ 0.07 \\
& Template 3 & 0.95 $\pm$ 0.03 & 0.58 $\pm$ 0.06 & 0.95 $\pm$ 0.03 & 0.53 $\pm$ 0.10 & 0.98 $\pm$ 0.01 & 0.57 $\pm$ 0.05 & 0.99 $\pm$ 0.01 & 0.53 $\pm$ 0.03 \\
& Template 4 & 0.96 $\pm$ 0.02 & 0.57 $\pm$ 0.07 & 0.96 $\pm$ 0.03 & 0.57 $\pm$ 0.08 & 0.99 $\pm$ 0.01 & 0.53 $\pm$ 0.05 & 0.99 $\pm$ 0.01 & 0.58 $\pm$ 0.05 \\
& Template 5 & 0.95 $\pm$ 0.02 & 0.53 $\pm$ 0.08 & 0.96 $\pm$ 0.02 & 0.52 $\pm$ 0.09 & 0.99 $\pm$ 0.01 & 0.58 $\pm$ 0.06 & 0.99 $\pm$ 0.01 & 0.54 $\pm$ 0.04 \\

\bottomrule
\end{tabular}
}
\end{table*}

\begin{table}[h]
\centering
\caption{Results of subject inference ($N=100$) using MEMIT across different subject candidate pool sizes. We report Top-100 recall.}
\label{tab:subject size}
\begin{tabular}{@{}ccccccc@{}}
\toprule
\textbf{Dataset} & \textbf{Model} & \textbf{Num=2,000} & \textbf{Num=10,000} & \textbf{Num=100,000} & \textbf{Num=1,000,000} & \textbf{Num=10,000,000} \\ \midrule
\multirow{3}{*}{\textbf{CounterFact}} & Llama-3-Instruct  & 0.99 & 0.99 & 0.99 & 0.97 & 0.96 \\
                                      & Qwen-2.5-Instruct & 0.94 & 0.94 & 0.94 & 0.93 & 0.93 \\
                                      & GPT-J             & 0.96 & 0.96 & 0.96 & 0.95 & 0.95 \\ \midrule
\multirow{3}{*}{\textbf{zsRE}}        & Llama-3-Instruct  & 0.99 & 0.99 & 0.98 & 0.98 & 0.97 \\
                                      & Qwen-2.5-Instruct & 0.98 & 0.98 & 0.97 & 0.96 & 0.96 \\
                                      & GPT-J             & 0.99 & 0.99 & 0.99 & 0.97 & 0.97 \\ \bottomrule
\end{tabular}
\end{table}

\begin{table*}[t]
\centering
\small 
\caption{Attack performance on sequential editing across different sequential numbers ($s$) at $N=100$. We report Top-($s*N$) recall.}
\label{tab:sequential-editing-results}
\begin{tabular}{@{}cccccccccccc@{}}
\toprule
\textbf{Dataset} & \textbf{Algorithm} & \textbf{Model} & \textbf{s=2} & \textbf{s=3} & \textbf{s=4} & \textbf{s=5} & \textbf{s=6} & \textbf{s=7} & \textbf{s=8} & \textbf{s=9} & \textbf{s=10} \\ \midrule
\multirow{6}{*}{\textbf{CounterFact}} & \multirow{3}{*}{MEMIT} & GPT-J & 0.97 & 0.97 & 0.97 & 0.97 & 0.97 & 0.97 & 0.97 & 0.97 & 0.97 \\
 &  & Llama-3-Instruct & 0.99 & 0.99 & 0.98 & 0.98 & 0.98 & 0.98 & 0.98 & 0.98 & 0.98 \\
 &  & Qwen-2.5-Instruct & 0.94 & 0.94 & 0.95 & 0.95 & 0.95 & 0.95 & 0.95 & 0.95 & 0.95 \\ \cmidrule(lr){2-12}
 & \multirow{3}{*}{AlphaEdit} & GPT-J & 0.96 & 0.97 & 0.97 & 0.97 & 0.97 & 0.97 & 0.97 & 0.97 & 0.97 \\
 &  & Llama-3-Instruct & 0.98 & 0.99 & 0.99 & 0.99 & 0.99 & 0.99 & 0.99 & 0.99 & 0.99 \\
 &  & Qwen-2.5-Instruct & 0.95 & 0.95 & 0.96 & 0.96 & 0.96 & 0.96 & 0.96 & 0.96 & 0.96 \\ \midrule
\multirow{6}{*}{\textbf{zsRE}} & \multirow{3}{*}{MEMIT} & GPT-J & 0.99 & 1.00 & 1.00 & 1.00 & 1.00 & 1.00 & 1.00 & 1.00 & 1.00 \\
 &  & Llama-3-Instruct & 0.99 & 0.99 & 1.00 & 1.00 & 1.00 & 1.00 & 1.00 & 1.00 & 1.00 \\
 &  & Qwen-2.5-Instruct & 0.98 & 1.00 & 1.00 & 1.00 & 1.00 & 1.00 & 1.00 & 1.00 & 1.00 \\ \cmidrule(lr){2-12}
 & \multirow{3}{*}{AlphaEdit} & GPT-J & 1.00 & 1.00 & 1.00 & 1.00 & 1.00 & 1.00 & 1.00 & 1.00 & 1.00 \\
 &  & Llama-3-Instruct & 0.99 & 1.00 & 1.00 & 1.00 & 1.00 & 1.00 & 1.00 & 1.00 & 1.00 \\
 &  & Qwen-2.5-Instruct & 0.99 & 1.00 & 1.00 & 1.00 & 1.00 & 1.00 & 1.00 & 1.00 & 1.00 \\ \bottomrule
\end{tabular}
\end{table*}

\begin{table}[h]
\centering
\caption{Results of prompt recovery attack (Top-20 recall) using MEMIT across different prompt candidate pool sizes in $N=100$.}
\label{tab:template size}
\begin{tabular}{@{}ccccccc@{}}
\toprule
\textbf{Dataset} & \textbf{Model} & \textbf{Num=1,000} & \textbf{Num=3,000} & \textbf{Num=5,000} & \textbf{Num=7,000} & \textbf{Num=10,000} \\ \midrule
\multirow{3}{*}{\textbf{CounterFact}} & Llama-3-Instruct  & 0.94 & 0.94 & 0.93 & 0.93 & 0.92 \\
                                      & Qwen-2.5-Instruct & 0.97 & 0.96 & 0.96 & 0.97 & 0.95 \\
                                      & GPT-J             & 0.99 & 0.99 & 0.97 & 0.98 & 0.98 \\ \midrule
\multirow{3}{*}{\textbf{zsRE}}        & Llama-3-Instruct  & 0.76 & 0.78 & 0.75 & 0.73 & 0.72 \\
                                      & Qwen-2.5-Instruct & 0.63 & 0.58 & 0.61 & 0.55 & 0.52 \\
                                      & GPT-J             & 0.83 & 0.81 & 0.82 & 0.81 & 0.80 \\ \bottomrule
\end{tabular}
\end{table}

\begin{table}[h]
\centering
\caption{Performance comparison of prompt recovery attack between our method and baseline using MEMIT in $N=100$.}
\label{tab:recovery baseline comparison}
\begin{tabular}{@{}ccccccc@{}}
\toprule
\textbf{Dataset} & \textbf{Model} & \textbf{Method} & \textbf{Top-1} & \textbf{Top-5} & \textbf{Top-20} & \textbf{Sim.} \\ \midrule
\multirow{6}{*}{\textbf{CounterFact}} & \multirow{2}{*}{Llama-3-Instruct}  & Ours     & \textbf{0.51} & \textbf{0.81} & \textbf{0.94} & \textbf{0.88} \\
                                      &                                    & Baseline & 0.35 & 0.79 & 0.92 & 0.87 \\ \cmidrule(l){2-7} 
                                      & \multirow{2}{*}{Qwen-2.5-Instruct} & Ours     & \textbf{0.47} & \textbf{0.86} & \textbf{0.97} & \textbf{0.86} \\
                                      &                                    & Baseline & 0.31 & 0.84 & 0.94 & 0.84 \\ \cmidrule(l){2-7} 
                                      & \multirow{2}{*}{GPT-J}             & Ours     & \textbf{0.50} & \textbf{0.92} & \textbf{0.99} & \textbf{0.88} \\
                                      &                                    & Baseline & 0.37 & 0.88 & \textbf{0.99} & 0.86 \\ \midrule
\multirow{6}{*}{\textbf{zsRE}}        & \multirow{2}{*}{Llama-3-Instruct}  & Ours     & \textbf{0.33} & \textbf{0.57} & \textbf{0.76} & \textbf{0.83} \\
                                      &                                    & Baseline & 0.24 & 0.56 & 0.72 & 0.82 \\ \cmidrule(l){2-7} 
                                      & \multirow{2}{*}{Qwen-2.5-Instruct} & Ours     & \textbf{0.08} & \textbf{0.29} & \textbf{0.55} & \textbf{0.78} \\
                                      &                                    & Baseline & 0.02 & 0.26 & 0.51 & 0.75 \\ \cmidrule(l){2-7} 
                                      & \multirow{2}{*}{GPT-J}             & Ours     & \textbf{0.32} & \textbf{0.61} & \textbf{0.83} & \textbf{0.82} \\
                                      &                                    & Baseline & 0.20 & 0.59 & 0.82 & 0.81 \\ \bottomrule
\end{tabular}
\end{table}

\begin{table}[h]
\centering
\caption{Protection Performance against logit-based baseline (Llama3-8B-Instruct, MEMIT) at $N=100$.}
\label{tab:protection against baseline}
\begin{tabular}{@{}cccccc@{}}
\toprule
\multirow{2.5}{*}{\textbf{Camouflage scale}} & \multicolumn{2}{c}{\textbf{CounterFact}} & & \multicolumn{2}{c}{\textbf{zsRE}} \\ \cmidrule(lr){2-3} \cmidrule(lr){5-6}
 & \textbf{Recall} & \textbf{Mean rank} & & \textbf{Recall} & \textbf{Mean rank} \\ \midrule
0 & 0.66 & 61  & & 0.69 & 57  \\
1 & 0.57 & 86  & & 0.52 & 93  \\
3 & 0.42 & 153 & & 0.38 & 162 \\
5 & 0.03 & 324 & & 0.07 & 314 \\
7 & 0.00 & 934 & & 0.02 & 798 \\ \bottomrule
\end{tabular}
\end{table}

\begin{table}[h]
\centering
\caption{Impact of camouflage scale on True Fact Recall (TFR.) and Fluency on Llama-3-Instruct in $N=100$. (Fluency measures the naturalness of generated text, while True Fact Recall measures the accuracy of the edited model in outputting the target ground truth when prompted with decoy subjects via rewrite prompt and paraphrase prompt.)}
\label{tab:Hallucination decoy}
\begin{tabular}{@{}ccccccc@{}}
\toprule
\multirow{2}{*}{\textbf{Method}} & \multirow{2}{*}{$\alpha$} & \multicolumn{2}{c}{\textbf{CounterFact}} & & \multicolumn{2}{c}{\textbf{zsRE}} \\ \cmidrule(lr){3-4} \cmidrule(lr){6-7} 
 &  & \textbf{TFR.} & \textbf{Fluency} & & \textbf{TFR.} & \textbf{Fluency} \\ \midrule
\multirow{5}{*}{MEMIT} & 0.0 & 0.51 & 6.24 & & 0.53 & 6.23 \\
 & 1.0 & 0.51 & 6.22 & & 0.53 & 6.22 \\
 & 3.0 & 0.43 & 6.20 & & 0.45 & 6.18 \\
 & 5.0 & 0.36 & 6.19 & & 0.38 & 6.17 \\
 & 7.0 & 0.23 & 6.11 & & 0.25 & 6.14 \\ \midrule
\multirow{5}{*}{AlphaEdit} & 0.0 & 0.52 & 6.18 & & 0.54 & 6.21 \\
 & 1.0 & 0.51 & 6.21 & & 0.53 & 6.22 \\
 & 3.0 & 0.42 & 6.19 & & 0.44 & 6.19 \\
 & 5.0 & 0.22 & 6.05 & & 0.24 & 6.16 \\
 & 7.0 & 0.13 & 5.96 & & 0.15 & 5.98 \\ \bottomrule
\end{tabular}
\end{table}

\textbf{Impact of Prompt Template on Subject Inference.} Table~\ref{tab:attack_metrics_by_template} evaluates the impact of five different prompt templates on subject inference attacks across three models. The results demonstrate that our white-box attack consistently maintains a recall exceeding 0.94 across all templates and datasets. In contrast, the gray-box attack performs poorly and exhibits substantial volatility on Llama3-8B-Instruct and Qwen2.5-7B-Instruct. This indicates that our white-box attack is highly robust to variations in prompt templates.

\textbf{Impact of Candidate Subject Pool Size on Subject Inference.}
Table \ref{tab:subject size} illustrates the performance of our subject inference attack across varying candidate subject pool sizes (sampled from the real-world IMDb name dataset). Specifically, across multiple models and dataset settings, as we scale the candidate set from $2 * 10^3$ to $10^7$, the recall of the subject inference drops by no more than 3\%. These results further demonstrate the robustness of our subject inference attack.

\textbf{Impact of Sequential Editing Number on Subject Inference.}
Table \ref{tab:sequential-editing-results} illustrates the performance of our subject inference attack in sequential editing tasks (a new model version is released after multiple edits).  Specifically, we set the number of sequential edits $s$ from 2 to 10 and report the Top-($s \times N$) recall based on the cumulative parameter updates, where $N$ represents the number of facts per batch edit. The results show that our attack consistently achieves a subject recall of over 95\% in sequential editing scenarios, which demonstrates the effectiveness of our attack.

\textbf{Impact of Candidate Template Pool Size on Prompt Recovery.}
Table \ref{tab:template size} illustrates the performance of our prompt recovery attack across varying candidate prompt template pool sizes (1000 ground truth templates and 9000 templates generated by Gemini 3.1 Pro). Specifically, across multiple models and dataset settings, as we scale the candidate set from $10^3$ to $10^4$, the Top-20 recall of true templates drops by no more than 2\%. These results further demonstrate the robustness of our prompt recovery attack.

\textbf{Performance Comparison between Prompt Recovery Baseline and Our Method.}
Table \ref{tab:recovery baseline comparison} illustrates the performance comparison between our prompt recovery attack and the naive method (see Appendix \ref{appendix:attack_setup} for details). Specifically, on the CounterFact dataset, our attack outperforms the baseline by more than 13\% in terms of Top-1 recall. For Top-5 and Top-20 recall, our method maintains an advantage of approximately 2\%. These results further demonstrate the effectiveness of our prompt recovery attack.

\textbf{Performance of Subspace Camouflage against Subject Inference Baseline.}
Table \ref{tab:protection against baseline} illustrates the performance of our subspace camouflage in defending against gray-box subject inference attacks (logit-based, see Appendix \ref{appendix:attack_setup} for details). Specifically, on the CounterFact dataset, as the camouflage scale increases from 0 to 7, the subject recall drops from 0.66 to 0.00, while the average rank of the ground-truth subject increases from 61 to 934. These results further demonstrate the effectiveness of our defense.

\textbf{Hallucination of Decoy Subjects in Subspace Camouflage.}
We further explore the impact of the camouflage coefficient on the hallucination of decoy subject-related knowledge. As shown in Table \ref{tab:Hallucination decoy}, on the CounterFact dataset, once $\alpha$ reaches or exceeds 3, a visible intensification of hallucinations regarding decoy subjects occurs. Specifically, the TFR declines from 0.51 to below 0.43, while fluency drops from 6.24 to less than 6.20. This trend reveals a distinct trade-off between the level of protection and the model’s general performance.

\section{Theoretical Guarantees for Subject Recovery Algorithms}\label{sec: theoretical guarantees}

This section presents the underlying mathematical details and provides theoretical guarantees for the Subject Recovery Algorithms \ref{alg: stage1, memit} and \ref{alg: stage1, alphaedit} under an idealized setting. 

In this section, for a linear subspace $\mathcal{U} \subseteq \mathbb{R}^d$, we denote $\mathcal{U}_\perp$ as its orthogonal complement space, $\Pi_{\mathcal{U}}$ as the orthogonal projection matrix from $\mathbb{R}^d$ to $\mathcal{U}$. 

\subsection{Preliminaries}\label{subsec: pre for subject recovery}

This section introduces basic notations, definitions, and necessary assumptions. 

We denote $\mathcal{S}_{\rm ed}$ as the edited subject set consisting of $N$ edited subjects. In the Subject Inference section, one needs to identify the size of the edited subject set $N$, then distinguish $N$ edited subjects $s_i \in \mathcal{S}_{\rm ed}$ from selected candidate subjects set $\mathcal{S}_{\rm cand}$. Here we fix the generic template $\mathcal{T}_{\rm gen}$, and denote $\mathcal{K}_{\rm ed} \coloneqq \{\mathbf{k}_i: s_i\in \mathcal{S}_{\rm ed}\}$ as the key vector set for edited subjects $s_i\in \mathcal{S}_{\rm ed}$, $\mathcal{K}_{\rm cand} \coloneqq \{\mathbf{k}_i^c: s_i^c\in \mathcal{S}_{\rm cand}\}$ as the key vector set for all candidate subjects, $\mathcal{K}_{\rm non} \coloneqq \{\mathbf{k}_i^n: s_i^n\in \mathcal{S}_{\rm non}\}$ as the key vector set for non-edited subjects $s_i^n\in \mathcal{S}_{\rm non} \coloneqq \mathcal{S}_{\rm cand} - \mathcal{S}_{\rm cand} \cap \mathcal{S}_{\rm ed}$. Here, the key vectors are of dimension $d_{\rm in}$. 
Notice that the key matrix $\mathbf{K}$ is constructed by key vectors for edited subjects $\mathcal{S}_{\rm ed}$ under corresponding edit prompts, not the generic prompt $\mathcal{T}_{\rm gen}$.

We first assume that the edited subjects are contained in the candidate set and that they are not degenerate: 

\begin{assumption}\label{asm: candidate set contain all edited subjects}

The candidate set contains all edited subjects, that is, $\mathcal{S}_{\rm ed} \subseteq \mathcal{S}_{\rm cand}$. 

\end{assumption}

A natural corollary of Assumption \ref{asm: candidate set contain all edited subjects} is $\mathcal{K}_{\rm cand} = \mathcal{K}_{\rm ed} \dot\cup \mathcal{K}_{\rm non}$.

\begin{assumption}\label{asm: key and residual matrix are both full column rank}

The key matrix $\mathbf{K} \in \mathbb{R}^{d_{\rm in} \times N}$, residual matrix $\mathbf{R} \in \mathbb{R}^{d_{\rm out} \times N}$ are both full column rank matrices: 

\begin{equation}\label{eq: key and residual matrix are both full column rank}
  \rank(\mathbf{K}) = \rank(\mathbf{R}) = N . 
\end{equation}

Moreover, given projection matrix $\mathbf{P}$ for AlphaEdit, $\rank(\mathbf{P}\mathbf{K}) = N$. 

\end{assumption}

Assumption \ref{asm: key and residual matrix are both full column rank} indicates that the edited knowledge is not degenerate. 

For both MEMIT and AlphaEdit, $\rank(\Delta \mathbf{W}) = N$ is implied by this assumption, see Lemma \ref{lem: linear algebra, full rank} for full proof. 

\begin{remark}\label{rem: justification for assumption col full rank} Justification for Assumption \ref{asm: key and residual matrix are both full column rank}. 
Unless \textit{both the prompts and subjects are exactly the same}, it is almost impossible for two (prompt, subject) groups to have the same key vectors/residual vectors. Then due to the high dimension of key/residual vectors (which are $d_{\rm in}$ and $d_{\rm out}$ respectively), it is also very unlikely for some key/residual vector to fall into the subspace formed by others, unless number of edited subjects $N$ is close to/larger than $\min(d_{\rm in},d_{\rm out})$.  

\end{remark}

Then we further define the separation of correct key vectors and other candidate key vectors on a linear subspace: 

\begin{definition}\label{def: theta-separable} $\delta \theta$-separability of candidate key vectors. 

Given linear subspace $\mathcal{U} \subset \mathbb{R}^{d_{\rm in}}$, $\mathcal{K}_{\rm ed}$ and $\mathcal{K}_{\rm non}$ are said to be $\delta \theta$-separable on $\mathcal{U}$ if for some $\delta \theta \in (0, \frac{\pi}{2})$, 

\begin{equation}
  \arccos\left( \frac{\|\Pi_{\mathcal{U}} \mathbf{k}_{2}\|_2}{\|\mathbf{k}_{2}\|_2} \right) - \arccos\left( \frac{\|\Pi_{\mathcal{U}} \mathbf{k}_{1}\|_2}{\|\mathbf{k}_{1}\|_2} \right) \ge \delta \theta ,\, \forall \mathbf{k}_{1} \in \mathcal{K}_{\rm ed}, \mathbf{k}_{2} \in \mathcal{K}_{\rm non} . 
\end{equation}

\end{definition}

Here, $\arccos\left( \frac{\|\Pi_{\mathcal{U}} \mathbf{k}\|_2}{\|\mathbf{k}_{(\cdot)}\|_2} \right)$ is exactly the angle between vector $\mathbf{k}_{(\cdot)}$ and linear subspace $\mathcal{U}$. 

Intuitively, $\mathcal{U}$ should be a linear subspace where edited subjects $\mathbf{k}_{1} \in \mathcal{K}_{\rm ed}$ approximately lies in (have large projection norm), while other candidates $\mathbf{k}_{2} \in \mathcal{K}_{\rm non}$ lie more in its orthogonal subspace (have small projection norm on $\mathcal{U}$). We observe such phenomena on both $\mathcal{U}_1 = \col(\mathbf{K})$ and $\mathcal{U}_2 = \col(\mathbf{P} \mathbf{K})$, which are formalized into Assumptions \ref{asm: separable under the generic template} and \ref{asm: separable under the generic template, P} below: 

\begin{assumption}\label{asm: separable under the generic template} Separation robustness of template. 

Under the generic template $\mathcal{T}_{\rm gen}$, $\mathcal{K}_{\rm ed}$ and $\mathcal{K}_{\rm non}$ are $\delta \theta$-separable on $\col(\mathbf{K})$ for some $\delta \theta \in (0, \frac{\pi}{2})$. 

\end{assumption}

\begin{assumption}\label{asm: separable under the generic template, P} Separation robustness of template, with projection matrix $\mathbf{P}$. 

Under the generic template $\mathcal{T}_{\rm gen}$, $\mathbf{P} \circ \mathcal{K}_{\rm ed}$ and $\mathbf{P} \circ \mathcal{K}_{\rm non}$ (here $\mathbf{P} \circ \mathcal{K}_{(\cdot)}$ means $\{\mathbf{P} \mathbf{k}_{(\cdot)}: \mathbf{k}_{(\cdot)} \in \mathcal{K}_{(\cdot)}\}$) are $\delta \theta$-separable on $\col(\mathbf{P} \mathbf{K})$ for some $\delta \theta \in (0, \frac{\pi}{2})$. 

\end{assumption}

\begin{remark}

Discussion on the generic template. 
If the template is the same as the one used in model editing, this assumption simply degenerates into requiring the other non-edited candidates not to lie in the linear subspace spanned by the edited subjects' key vectors. Based on the observation of Figure \ref{fig:sim_of_prompts}, this assumption extracts the fact that even for the generic template, the key vectors for edited subjects and other candidate subjects are still separable. 

\end{remark}

To assure the validity of Eq. (\ref{eq: woodbury_memit}) for MEMIT, we need to further make the following assumption that the covariance matrix $\mathbf{C}$ is not degenerate (and thus invertible): 

\begin{assumption}\label{asm: positive definite covariance}

The covariance matrix $\mathbf{C}$ is positive definite: $\mathbf{C} \succ \mathbf{O}$. 

\end{assumption}

\subsection{Subject recovery for MEMIT, Algorithm \ref{alg: stage1, memit}} \label{subsec: applying WMI in MEMIT}

Before simplifying the update, we introduce the Woodbury matrix identity, which describes the inverse of a rank-$N$ correction to a matrix. The detailed derivation can be found in Eq.~(\ref{eq: woodbury, derivation, memit}).

\begin{lemma}[Woodbury matrix identity]
\label{lem: woodbury}
For an invertible matrix $\mathbf{A}\in\mathbb{R}^{d\times d}$, for all $\mathbf{U}\in\mathbb{R}^{d\times r}$ and $\mathbf{V}\in\mathbb{R}^{r\times d}$ such that $\mathbf{I}_r + \mathbf{V}\mathbf{A}^{-1}\mathbf{U}$ is invertible, the following equality holds:
\begin{align}
    (\mathbf{A} + \mathbf{U}\mathbf{V})^{-1} = \mathbf{A}^{-1} - \mathbf{A}^{-1}\mathbf{U}(\mathbf{I}_r + \mathbf{V}\mathbf{A}^{-1}\mathbf{U})^{-1}\mathbf{V}\mathbf{A}^{-1} .
\end{align}
\end{lemma}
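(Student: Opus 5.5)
The plan is to verify the identity directly: I will exhibit the proposed right-hand side as an inverse of $\mathbf{A}+\mathbf{U}\mathbf{V}$. Write
\[
\mathbf{B} \coloneqq \mathbf{A}^{-1} - \mathbf{A}^{-1}\mathbf{U}\mathbf{S}^{-1}\mathbf{V}\mathbf{A}^{-1},
\qquad
\mathbf{S} \coloneqq \mathbf{I}_r + \mathbf{V}\mathbf{A}^{-1}\mathbf{U}.
\]
By hypothesis $\mathbf{S}$ is invertible, so $\mathbf{B}$ is well defined. Since all matrices involved are square of size $d$, it suffices to show the one-sided identity $(\mathbf{A}+\mathbf{U}\mathbf{V})\mathbf{B} = \mathbf{I}_d$. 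A one-sided inverse of a square matrix is automatically two-sided, so this also shows that $\mathbf{A}+\mathbf{U}\mathbf{V}$ is invertible with inverse $\mathbf{B}$. Invertibility of $\mathbf{A}+\mathbf{U}\mathbf{V}$ therefore comes for free and need not be assumed separately.

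The key step is the expansion
\[
(\mathbf{A}+\mathbf{U}\mathbf{V})\mathbf{B}
= \mathbf{I}_d + \mathbf{U}\mathbf{V}\mathbf{A}^{-1}
- \mathbf{U}\mathbf{S}^{-1}\mathbf{V}\mathbf{A}^{-1}
- \mathbf{U}\mathbf{V}\mathbf{A}^{-1}\mathbf{U}\mathbf{S}^{-1}\mathbf{V}\mathbf{A}^{-1}.
\]
I would then factor $\mathbf{U}$ on the left and $\mathbf{S}^{-1}\mathbf{V}\mathbf{A}^{-1}$ on the right out of the last two terms. This combines them into
\[
-\mathbf{U}\bigl(\mathbf{I}_r + \mathbf{V}\mathbf{A}^{-1}\mathbf{U}\bigr)\mathbf{S}^{-1}\mathbf{V}\mathbf{A}^{-1}
= -\mathbf{U}\mathbf{S}\mathbf{S}^{-1}\mathbf{V}\mathbf{A}^{-1}
= -\mathbf{U}\mathbf{V}\mathbf{A}^{-1},
\]
which cancels the second term and leaves $\mathbf{I}_d$.

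The only delicate point is this regrouping. It relies on the observation that $\mathbf{V}\mathbf{A}^{-1}\mathbf{U}$ is an $r\times r$ matrix sitting between $\mathbf{U}$ and $\mathbf{S}^{-1}$, so it can be absorbed into $\mathbf{S}$. Beyond that there is no real obstacle: the argument is pure associativity of matrix multiplication.

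For completeness I would also check the other order, $\mathbf{B}(\mathbf{A}+\mathbf{U}\mathbf{V}) = \mathbf{I}_d$, by the symmetric factoring, rather than relying only on the square-matrix argument. That direction groups the terms as $\mathbf{A}^{-1}\mathbf{U}\mathbf{S}^{-1}(\mathbf{I}_r + \mathbf{V}\mathbf{A}^{-1}\mathbf{U})\mathbf{V}$.
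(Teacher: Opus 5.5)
Your proposal is correct: the expansion and the regrouping $\mathbf{U}(\mathbf{I}_r+\mathbf{V}\mathbf{A}^{-1}\mathbf{U})\mathbf{S}^{-1}\mathbf{V}\mathbf{A}^{-1}=\mathbf{U}\mathbf{V}\mathbf{A}^{-1}$ both check out. The one-sided-inverse argument for square matrices is valid, and the symmetric grouping independently handles the other order. The paper gives no proof of this lemma of its own: it cites Woodbury as a standard identity, and the derivation it points to is only the lemma's application to MEMIT. Your direct verification is the standard argument, and it usefully makes explicit that invertibility of $\mathbf{A}+\mathbf{U}\mathbf{V}$ follows from the stated hypotheses rather than having to be assumed.
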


Now under Assumption \ref{asm: positive definite covariance} we can rewrite the expression of $\Delta \mathbf{W}_{\rm MEMIT}$ through Lemma \ref{lem: woodbury}: 

\begin{equation}\label{eq: woodbury, derivation, memit}
  \begin{aligned}
    \Delta \mathbf{W} &\coloneqq \Delta \mathbf{W}_{\rm MEMIT} = \mathbf{R} \mathbf{K}^\top \left( \mathbf{C} + \mathbf{K} \mathbf{K}^\top \right)^{-1} \\ 
    &= \mathbf{R} \mathbf{K}^\top \left[ \mathbf{C}^{-1} - \mathbf{C}^{-1} \mathbf{K} \left( \mathbf{I} + \mathbf{K}^\top \mathbf{C}^{-1} \mathbf{K} \right)^{-1} \mathbf{K}^\top \mathbf{C}^{-1} \right] \\ 
    &= \mathbf{R} \left[ \mathbf{K}^\top \mathbf{C}^{-1} - \mathbf{K}^\top \mathbf{C}^{-1} \mathbf{K} \left( \mathbf{I} + \mathbf{K}^\top \mathbf{C}^{-1} \mathbf{K} \right)^{-1} \mathbf{K}^\top \mathbf{C}^{-1} \right] \\ 
    &= \mathbf{R} \left[ \left( \mathbf{I} + \mathbf{K}^\top \mathbf{C}^{-1} \mathbf{K} \right) - \mathbf{K}^\top \mathbf{C}^{-1} \mathbf{K} \right] \left( \mathbf{I} + \mathbf{K}^\top \mathbf{C}^{-1} \mathbf{K} \right)^{-1} \mathbf{K}^\top \mathbf{C}^{-1} \\ 
    &= \mathbf{R} \left( \mathbf{I} + \mathbf{K}^\top \mathbf{C}^{-1} \mathbf{K} \right)^{-1} \mathbf{K}^\top \mathbf{C}^{-1} . 
  \end{aligned}
\end{equation}

We also introduce a basic linear algebra lemma for the recovery of the number of edited knowledge $N$: 

\begin{lemma}\label{lem: linear algebra, full rank} Given three arbitrary matrices $\mathbf{M}_1 \in \mathbb{R}^{d_{\rm out} \times N}$, $\mathbf{M}_2 \in \mathbb{R}^{d_{\rm in} \times N}$, $\mathbf{M}_3 \in \mathbb{R}^{d_{\rm in} \times d_{\rm in}}$, where $d_{\rm out}, d_{\rm in}\ge N$, $\mathbf{M}_1, \mathbf{M}_2$ are full column rank and $\mathbf{M}_3$ is invertible, then $\rank\left( \mathbf{M}_1 \mathbf{M}_2^\top \mathbf{M}_3^{-1} \right) = N$. 

\end{lemma}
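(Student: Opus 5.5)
The plan is to peel off the invertible factor first and then bound the rank of the product $\mathbf{M}_1\mathbf{M}_2^\top$ from above and below. Since $\mathbf{M}_3$ is invertible, right-multiplication by $\mathbf{M}_3^{-1}$ is a bijective linear map on $\mathbb{R}^{d_{\rm in}}$. It therefore preserves the dimension of the row space, giving $\rank(\mathbf{M}_1\mathbf{M}_2^\top\mathbf{M}_3^{-1}) = \rank(\mathbf{M}_1\mathbf{M}_2^\top)$. The claim thus reduces to showing $\rank(\mathbf{M}_1\mathbf{M}_2^\top)=N$.

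The upper bound is immediate. The product factors through $\mathbb{R}^N$, so $\rank(\mathbf{M}_1\mathbf{M}_2^\top)\le \min(\rank\mathbf{M}_1,\rank\mathbf{M}_2)\le N$.

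For the lower bound I will use left inverses. Because $\mathbf{M}_1$ has full column rank, the Gram matrix $\mathbf{M}_1^\top\mathbf{M}_1\in\mathbb{R}^{N\times N}$ is invertible: if $\mathbf{M}_1^\top\mathbf{M}_1\mathbf{x}=\mathbf{0}$, then $\|\mathbf{M}_1\mathbf{x}\|_2^2=0$, so $\mathbf{x}=\mathbf{0}$. The same holds for $\mathbf{M}_2^\top\mathbf{M}_2$. Then
\begin{equation*}
(\mathbf{M}_1^\top\mathbf{M}_1)^{-1}\mathbf{M}_1^\top\,\bigl(\mathbf{M}_1\mathbf{M}_2^\top\bigr)\,\mathbf{M}_2(\mathbf{M}_2^\top\mathbf{M}_2)^{-1} = \mathbf{I}_N ,
\end{equation*}
and since the rank of a product never exceeds the rank of any factor, $N=\rank(\mathbf{I}_N)\le\rank(\mathbf{M}_1\mathbf{M}_2^\top)$. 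Combining this with the upper bound gives rank exactly $N$.

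No step is a real obstacle. The only point that needs a justification rather than an assertion is the invertibility of the Gram matrices, which follows from full column rank via the norm argument above. Equivalently, I could argue with kernels: $\ker(\mathbf{M}_1\mathbf{M}_2^\top)=\ker(\mathbf{M}_2^\top)$ because $\mathbf{M}_1$ is injective, and $\dim\ker\mathbf{M}_2^\top=d_{\rm in}-N$, so rank–nullity gives rank $N$. The hypothesis $d_{\rm out},d_{\rm in}\ge N$ is only what makes full column rank possible, and it is used implicitly.

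For the application I note that $\Delta\mathbf{W}_{\rm MEMIT}$ in Eq.~(\ref{eq: woodbury, derivation, memit}) has the form $\mathbf{R}(\mathbf{I}+\mathbf{K}^\top\mathbf{C}^{-1}\mathbf{K})^{-1}\mathbf{K}^\top\mathbf{C}^{-1}$. It fits the lemma with $\mathbf{M}_1=\mathbf{R}(\mathbf{I}+\mathbf{K}^\top\mathbf{C}^{-1}\mathbf{K})^{-1}$, which keeps full column rank, $\mathbf{M}_2=\mathbf{K}$, and $\mathbf{M}_3=\mathbf{C}$.
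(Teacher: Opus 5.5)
Your proof is correct and reaches the conclusion with slightly different elementary tools than the paper.

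The paper gives two arguments. The first is an image computation: surjectivity of $\mathbf{M}_3^{-1}$ onto $\mathbb{R}^{d_{\rm in}}$ and of $\mathbf{M}_2^\top$ onto $\mathbb{R}^N$ yields $\mathrm{Im}(\mathbf{M}_1\mathbf{M}_2^\top\mathbf{M}_3^{-1})=\col(\mathbf{M}_1)$. The second is an SVD argument. After the same reduction you make, it writes $\mathbf{M}_1\mathbf{M}_2^\top=\mathbf{U}_{\mathbf{M}_1}(\mathbf{\Sigma}_{\mathbf{M}_1}\mathbf{V}_{\mathbf{M}_1}^\top\mathbf{V}_{\mathbf{M}_2}\mathbf{\Sigma}_{\mathbf{M}_2})\mathbf{U}_{\mathbf{M}_2}^\top$ with an invertible $N\times N$ core.

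Your left-inverse sandwich replaces the SVD. It extracts $\mathbf{I}_N$ from the product using only invertibility of the Gram matrices. So it is self-contained and makes both rank bounds explicit, whereas the paper tacitly uses that multiplying by the semi-orthogonal factors $\mathbf{U}_{\mathbf{M}_1}$ and $\mathbf{U}_{\mathbf{M}_2}^\top$ preserves rank.

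Your kernel argument is the dual of the paper's image argument. The paper pins down the column space, namely $\col(\mathbf{M}_1)$, i.e.\ $\col(\mathbf{R})$; you pin down the kernel. The dual view is arguably the more useful one for this paper. Keeping $\mathbf{M}_3$, it gives $\row(\mathbf{M}_1\mathbf{M}_2^\top\mathbf{M}_3^{-1})=\mathbf{M}_3^{-\top}\col(\mathbf{M}_2)$. Under your instantiation $\mathbf{M}_3=\mathbf{C}$, this is exactly the fact $\row(\Delta\mathbf{W}\mathbf{C})=\col(\mathbf{K})$ that the attack exploits.

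On that instantiation, the paper's Remark instead takes $\mathbf{M}_1=\mathbf{R}$, $\mathbf{M}_2=\mathbf{K}$, $\mathbf{M}_3=\mathbf{C}+\mathbf{K}\mathbf{K}^\top$ directly from Eq.~(\ref{eq: memit}), which avoids the Woodbury rewrite. Yours additionally needs $\mathbf{I}+\mathbf{K}^\top\mathbf{C}^{-1}\mathbf{K}\succ\mathbf{O}$, which is immediate from $\mathbf{C}\succ\mathbf{O}$. Both instantiations are valid.
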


\begin{remark}\label{rem: linear algebra, full rank, correct N}
By setting $\mathbf{M}_1 = \mathbf{R}$, $\mathbf{M}_2 = \mathbf{K}$, $\mathbf{M}_3 = \mathbf{C} + \mathbf{K} \mathbf{K}^\top$ (for MEMIT), or $\mathbf{M}_1 = \mathbf{R}$, $\mathbf{M}_2 = \mathbf{P} \mathbf{K}$, $\mathbf{M}_3 = \mathbf{I} + \mathbf{K} \mathbf{K}^\top \mathbf{P}$ (for AlphaEdit), we know that under Assumption \ref{asm: key and residual matrix are both full column rank} (key matrix and residual matrix are both full column rank), the KSTER attack identifies the correct number of edited knowledge $N$ in Algorithm \ref{alg: stage1, memit} (MEMIT) and \ref{alg: stage1, alphaedit} (AlphaEdit). 
\end{remark}

\begin{proof}

This result can be proved in two distinct ways: 

\textbf{1. Linear Transformation approach. } Since $\mathbf{M}_3$ is invertible, $\text{Im}\left( \mathbf{M}_3^{-1} \right) = \mathbb{R}^{d_{\rm in}}$. Since $\mathbf{M}_2$ is full column rank, $\mathbf{M}_2^\top$ is surjective onto $\mathbb{R}^{N}$. Thus the image space of $\mathbf{M}_1 \mathbf{M}_2^\top \mathbf{M}_3^{-1}$ is: 

\begin{equation}
  \text{Im}\left( \mathbf{M}_1 \mathbf{M}_2^\top \mathbf{M}_3^{-1} \right) = \mathbf{M}_1\left( \mathbf{M}_2^\top \left( \text{Im}\left( \mathbf{M}_3^{-1} \right) \right) \right) = \mathbf{M}_1\left( \mathbf{M}_2^\top \left( \mathbb{R}^{d_{\rm in}} \right) \right) = \mathbf{M}_1\left( \mathbb{R}^{N} \right) = \col(\mathbf{M}_1), 
\end{equation}

giving

\begin{equation}
  \rank\left( \mathbf{M}_1 \mathbf{M}_2^\top \mathbf{M}_3^{-1} \right) = \text{dim} \left( \col(\mathbf{M}_1) \right) = N . 
\end{equation}

\textbf{2. SVD decomposition approach. } Since $\mathbf{M}_3$ is invertible, $\rank\left( \mathbf{M}_1 \mathbf{M}_2^\top \mathbf{M}_3^{-1} \right) = \rank\left( \mathbf{M}_1 \mathbf{M}_2^\top \right)$. Suppose the singular value decompositions of $\mathbf{M}_1, \mathbf{M}_2$ are $\mathbf{M}_1 = \mathbf{U}_{\mathbf{M}_1} \mathbf{\Sigma}_{\mathbf{M}_1} \mathbf{V}_{\mathbf{M}_1}^\top$ and $\mathbf{M}_2 = \mathbf{U}_{\mathbf{M}_2} \mathbf{\Sigma}_{\mathbf{M}_2} \mathbf{V}_{\mathbf{M}_2}^\top$ respectively, where $\mathbf{U}_{\mathbf{M}_1} \in \mathbb{R}^{d_{\rm out} \times N}$, $\mathbf{U}_{\mathbf{M}_2} \in \mathbb{R}^{d_{\rm in} \times N}$ are column-orthogonal matrices, $\mathbf{V}_{\mathbf{M}_1}, \mathbf{V}_{\mathbf{M}_2} \in \mathbb{R}^{N \times N}$ are orthogonal matrices, and $\mathbf{\Sigma}_{\mathbf{M}_1}, \mathbf{\Sigma}_{\mathbf{M}_2} \in \mathbb{R}^{N \times N}$ are diagonal matrices with positive entries. Thus $\mathbf{V}_{\mathbf{M}_1}^\top \mathbf{V}_{\mathbf{M}_2}$ is also an orthogonal matrix, giving 

\begin{equation}
  \rank\left( \mathbf{M}_1 \mathbf{M}_2^\top \right) = \rank\left( \mathbf{U}_{\mathbf{M}_1} \mathbf{\Sigma}_{\mathbf{M}_1} \mathbf{V}_{\mathbf{M}_1}^\top \mathbf{V}_{\mathbf{M}_2} \mathbf{\Sigma}_{\mathbf{M}_2} \mathbf{U}_{\mathbf{M}_2}^\top \right) = \rank\left( \mathbf{\Sigma}_{\mathbf{M}_1} \mathbf{V}_{\mathbf{M}_1}^\top \mathbf{V}_{\mathbf{M}_2} \mathbf{\Sigma}_{\mathbf{M}_2} \right) = N . 
\end{equation}

This completes the proof. 

\end{proof}

\begin{theorem}\label{thm: recovery of linear subspace, memit}

Recovery of linear subspace for MEMIT. 

Under Assumptions \ref{asm: key and residual matrix are both full column rank} and \ref{asm: positive definite covariance}, given update matrix $\Delta \mathbf{W} = \Delta \mathbf{W}_{\rm MEMIT}$ and positive definite covariance matrix $\mathbf{C}$, the column space of $\mathbf{K}$ is exactly the column space of $\mathbf{V}_N$ ($\mathbf{V}_N$ is defined in Algorithm \ref{alg: stage1, memit}): 

\begin{equation}
  \col(\mathbf{V}_N) = \col(\mathbf{K}) . 
\end{equation}

\end{theorem}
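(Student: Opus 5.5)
We combine the Woodbury rewriting in Eq.~(\ref{eq: woodbury, derivation, memit}) with the rank count of Lemma~\ref{lem: linear algebra, full rank}. By Assumption~\ref{asm: positive definite covariance}, $\mathbf{C}^{-1}$ exists and $\mathbf{K}^\top\mathbf{C}^{-1}\mathbf{K}\succeq\mathbf{O}$, so $\mathbf{A}\coloneqq\mathbf{I}+\mathbf{K}^\top\mathbf{C}^{-1}\mathbf{K}$ is positive definite and invertible. This justifies the Woodbury step, giving
\begin{equation*}
  \mathbf{M} = \Delta\mathbf{W}\mathbf{C} = \mathbf{R}\,\mathbf{A}^{-1}\mathbf{K}^\top\mathbf{C}^{-1}\mathbf{C} = \mathbf{R}\,\mathbf{A}^{-1}\mathbf{K}^\top .
\end{equation*}
Hence every row of $\mathbf{M}$ is a linear combination of rows of $\mathbf{K}^\top$, i.e.\ $\row(\mathbf{M})\subseteq\col(\mathbf{K})$, which is Eq.~(\ref{conclusion_formulation}).

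Next we upgrade the inclusion to equality by a dimension count. Apply Lemma~\ref{lem: linear algebra, full rank} with $\mathbf{M}_1=\mathbf{R}\mathbf{A}^{-1}$, $\mathbf{M}_2=\mathbf{K}$ and $\mathbf{M}_3=\mathbf{I}$. Here $\mathbf{M}_1$ is full column rank because $\mathbf{R}$ is (Assumption~\ref{asm: key and residual matrix are both full column rank}) and $\mathbf{A}^{-1}$ is invertible. The lemma gives $\rank(\mathbf{M})=N$. Alternatively, $\rank(\mathbf{M})=\rank(\Delta\mathbf{W})$ since $\mathbf{C}$ is invertible, and Remark~\ref{rem: linear algebra, full rank, correct N} gives $\rank(\Delta\mathbf{W})=N$. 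Since $\dim\col(\mathbf{K})=N$ as well, the inclusion of an $N$-dimensional subspace into an $N$-dimensional subspace forces $\row(\mathbf{M})=\col(\mathbf{K})$.

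Finally we identify $\row(\mathbf{M})$ with $\col(\mathbf{V}_N)$. Write the SVD $\mathbf{M}=\mathbf{U}\mathbf{\Sigma}\mathbf{V}^\top$. Since $\rank(\mathbf{M})=N$, exactly $N$ singular values are positive, and these are the top $N$ in the ordering. Thus $\mathbf{M}=\mathbf{U}_N\mathbf{\Sigma}_N\mathbf{V}_N^\top$ with $\mathbf{\Sigma}_N$ invertible, and $\row(\mathbf{M})=\col(\mathbf{M}^\top)=\col(\mathbf{V}_N\mathbf{\Sigma}_N\mathbf{U}_N^\top)=\col(\mathbf{V}_N)$. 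Here the last equality uses that $\mathbf{\Sigma}_N\mathbf{U}_N^\top$ has full row rank $N$. Chaining the three steps yields $\col(\mathbf{V}_N)=\col(\mathbf{K})$.

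The only step that needs care is the rank equality $\rank(\mathbf{M})=N$, since it rests on the full column rank of both $\mathbf{R}$ and $\mathbf{K}$. Without it the inclusion could be strict, and the top-$N$ right singular vectors would include vectors from the null space of $\mathbf{M}$ rather than spanning $\col(\mathbf{K})$. A minor point is that the algorithm's $N=\rank(\Delta\mathbf{W})$ coincides with the true batch size, which Remark~\ref{rem: linear algebra, full rank, correct N} already handles. The rest is routine linear algebra.
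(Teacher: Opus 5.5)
Your proof is correct and follows essentially the paper's argument: the Woodbury rewriting $\mathbf{M}=\mathbf{R}(\mathbf{I}+\mathbf{K}^\top\mathbf{C}^{-1}\mathbf{K})^{-1}\mathbf{K}^\top$, combined with $\rank(\mathbf{M})=N$ from the full-column-rank assumption. The only difference is the last step. The paper writes $\mathbf{V}_N=\mathbf{M}^\top\mathbf{U}_N\mathbf{\Sigma}_N^{-1}=\mathbf{K}\mathbf{B}$ with an invertible $N\times N$ factor $\mathbf{B}$, whereas you use the inclusion $\row(\mathbf{M})\subseteq\col(\mathbf{K})$ plus a dimension count, which is an equivalent and slightly cleaner finish.
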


\begin{proof}

From Assumption \ref{asm: positive definite covariance}, both $\mathbf{C}$ and $\mathbf{C} + \mathbf{K}\mathbf{K}^\top$ are invertible, then $\rank(\mathbf{M}) = \rank(\Delta \mathbf{W}) = \rank(\mathbf{R}\mathbf{K}^\top)$. Thus, from Assumption \ref{asm: key and residual matrix are both full column rank}, $\rank(\mathbf{M})$ is exactly the number of edited subjects. Thus, denote $\mathbf{U}_N$ as the first $N$ columns of $\mathbf{U}$, $\mathbf{\Sigma}_N$ as the $N\times N$ truncated $\mathbf{\Sigma}$, we have $\mathbf{M} = \mathbf{U}_N \mathbf{\Sigma}_N \mathbf{V}_N^\top$. 

From equation (\ref{eq: woodbury, derivation, memit}), $\Delta \mathbf{W} = \mathbf{R} \left(\mathbf{I} + \mathbf{K}^\top \mathbf{C}^{-1} \mathbf{K}\right)^{-1} \left(\mathbf{C}^{-1} \mathbf{K}\right)^\top$, thus 

\begin{equation}
  \begin{aligned}
    \mathbf{M} \coloneqq \Delta \mathbf{W} \mathbf{C} = \mathbf{R} \left(\mathbf{I} + \mathbf{K}^\top \mathbf{C}^{-1} \mathbf{K}\right)^{-1} \mathbf{K}^\top . 
  \end{aligned}
\end{equation}

Then from the singular value decomposition $\mathbf{M} = \mathbf{U}_N \mathbf{\Sigma}_N \mathbf{V}_N^\top$, 

\begin{equation}
  \mathbf{V}_N = \mathbf{M}^\top \mathbf{U}_N \mathbf{\Sigma}_N^{-1} = \mathbf{K} \left(\mathbf{I} + \mathbf{K}^\top \mathbf{C}^{-1} \mathbf{K}\right)^{-1} \mathbf{R}^\top \mathbf{U}_N \mathbf{\Sigma}_N^{-1} . 
\end{equation}

Since $\left(\mathbf{I} + \mathbf{K}^\top \mathbf{C}^{-1} \mathbf{K}\right)^{-1} \mathbf{R}^\top \mathbf{U}_N \mathbf{\Sigma}_N^{-1}$ is of rank $N$ and thus invertible, $\col(\mathbf{V}_N) = \col(\mathbf{K})$, which completes the proof. 

\end{proof}

\begin{theorem}\label{thm: subject recovery, memit}

Guarantee of subject recovery for MEMIT. 

Under Assumptions \ref{asm: candidate set contain all edited subjects}, \ref{asm: key and residual matrix are both full column rank}, \ref{asm: separable under the generic template}, \ref{asm: positive definite covariance}, given the update matrix $\Delta \mathbf{W} = \Delta \mathbf{W}_{\rm MEMIT}$ and covariance matrix $\mathbf{C}$, Algorithm \ref{alg: stage1, memit} exactly outputs the $N$ correct edited subjects, with the top $N^{th}$ and $(N+1)^{th}$ score (namely $\rho_{(N)}^{c\downarrow}$ and $\rho_{(N+1)}^{c\downarrow}$, here the round brackets mean that the scores are reordered in descending order) separated by: 

\begin{equation}
  \arccos \rho_{(N+1)}^{c\downarrow} - \arccos \rho_{(N)}^{c\downarrow} \ge \delta \theta . 
\end{equation}

\end{theorem}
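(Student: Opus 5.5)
The argument chains Theorem \ref{thm: recovery of linear subspace, memit} with Assumption \ref{asm: separable under the generic template}; it is essentially bookkeeping once the recovered subspace is identified with $\col(\mathbf{K})$.

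First I check that Algorithm \ref{alg: stage1, memit} computes the right $N$. Under Assumptions \ref{asm: key and residual matrix are both full column rank} and \ref{asm: positive definite covariance}, Lemma \ref{lem: linear algebra, full rank} applies with $\mathbf{M}_1=\mathbf{R}$, $\mathbf{M}_2=\mathbf{K}$, $\mathbf{M}_3=\mathbf{C}+\mathbf{K}\mathbf{K}^\top$, which is invertible because it is positive definite. This gives $\rank(\Delta\mathbf{W})=N$, as recorded in Remark \ref{rem: linear algebra, full rank, correct N}. So line 1 returns the true number of edits, and $\mathbf{V}_N$ consists of exactly the right singular vectors with nonzero singular values.

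Next, Theorem \ref{thm: recovery of linear subspace, memit} gives $\col(\mathbf{V}_N)=\col(\mathbf{K})$. Since $\mathbf{V}_N$ has orthonormal columns, $\Pi_{\col(\mathbf{K})}=\mathbf{V}_N\mathbf{V}_N^\top$. Hence for every candidate $\|\mathbf{V}_N^\top\mathbf{k}_i^c\|_2=\|\mathbf{V}_N\mathbf{V}_N^\top\mathbf{k}_i^c\|_2=\|\Pi_{\col(\mathbf{K})}\mathbf{k}_i^c\|_2$. The score $\rho_i^c$ therefore equals $\cos$ of the angle between $\mathbf{k}_i^c$ and $\col(\mathbf{K})$, and $\rho_i^c\in[0,1]$, so $\arccos\rho_i^c\in[0,\pi/2]$ is well defined. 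Because $\arccos$ is strictly decreasing on $[0,1]$, ranking by $\rho$ in descending order is the same as ranking by angle in ascending order.

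Now I apply Assumption \ref{asm: separable under the generic template}. By Assumption \ref{asm: candidate set contain all edited subjects}, $\mathcal{K}_{\rm cand}=\mathcal{K}_{\rm ed}\,\dot\cup\,\mathcal{K}_{\rm non}$ with $|\mathcal{K}_{\rm ed}|=N$. Separability gives, for every $\mathbf{k}_1\in\mathcal{K}_{\rm ed}$ and $\mathbf{k}_2\in\mathcal{K}_{\rm non}$,
\[
\arccos\rho(\mathbf{k}_2)-\arccos\rho(\mathbf{k}_1)\ge\delta\theta>0 .
\]
Taking the maximum angle over $\mathcal{K}_{\rm ed}$ and the minimum over $\mathcal{K}_{\rm non}$ yields
\[
\min_{\mathcal{K}_{\rm non}}\arccos\rho-\max_{\mathcal{K}_{\rm ed}}\arccos\rho\ge\delta\theta .
\]
So every edited subject strictly outranks every non-edited one. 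The top-$N$ set $\hat{\mathcal{S}}$ is then exactly $\mathcal{S}_{\rm ed}$, and it is unambiguous because the inequality is strict, so no ties occur at the boundary. Finally, the $N$-th largest score belongs to the edited subject with the largest angle, and the $(N+1)$-th largest belongs to the non-edited subject with the smallest angle. This gives $\arccos\rho^{c\downarrow}_{(N+1)}-\arccos\rho^{c\downarrow}_{(N)}\ge\delta\theta$. If $\mathcal{K}_{\rm non}$ is empty, the separation claim is vacuous.

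The main subtlety is the gap between two kinds of key vectors. $\col(\mathbf{K})$ is built from keys under the edit prompts, while the scores use generic-template keys $\mathbf{k}_i^c$. Nothing in the algebra connects the two, so I would state explicitly that this bridge is supplied entirely by Assumption \ref{asm: separable under the generic template}, which is phrased on $\col(\mathbf{K})$ for the generic-template keys. Beyond that, the only care needed is to justify that truncating to $\mathbf{V}_N$ is exact, which uses $\rank(\mathbf{M})=N$, and that the projection identity holds because the columns of $\mathbf{V}_N$ are orthonormal.
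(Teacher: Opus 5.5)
Your proposal is correct and follows the paper's proof. You identify $\rho_i^c$ with the cosine of the angle between $\mathbf{k}_i^c$ and $\col(\mathbf{V}_N)=\col(\mathbf{K})$ via Theorem \ref{thm: recovery of linear subspace, memit}, then invoke Assumptions \ref{asm: separable under the generic template} and \ref{asm: candidate set contain all edited subjects} to conclude that the top-$N$ scores are exactly the edited subjects with the stated $\delta\theta$ gap. Your extra bookkeeping (checking $\rank(\Delta\mathbf{W})=N$, the monotonicity of $\arccos$, and the max/min argument that rules out ties at the boundary) only makes explicit steps the paper leaves implicit.
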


\begin{proof}

From the definition of $\rho_i^c$, using the fact that $\mathbf{V}_N$ is a column orthogonal matrix, 

\begin{equation}
  \rho_i^c \coloneqq \frac{\left\| \mathbf{V}_N^\top \mathbf{k}_i^c \right\|_2}{\left\| \mathbf{k}_i^c \right\|_2} = \frac{\left\| \mathbf{V}_N \mathbf{V}_N^\top \mathbf{k}_i^c \right\|_2}{\left\| \mathbf{k}_i^c \right\|_2} = \frac{\left\| \Pi_{\col(\mathbf{V}_N)} \mathbf{k}_i^c \right\|_2}{\left\| \mathbf{k}_i^c \right\|_2} . 
\end{equation}

By Theorem \ref{thm: recovery of linear subspace, memit}, $\col(\mathbf{V}_N) = \col(\mathbf{K})$. Then from Assumption \ref{asm: separable under the generic template}, 

\begin{equation}
  \arccos \rho_j^c - \arccos \rho_i^c \ge \delta \theta ,\, \forall i,j \text{ s.t. } \mathbf{k}_i^c \in \mathcal{K}_{\rm ed} , \mathbf{k}_j^c \in \mathcal{K}_{\rm non} , 
\end{equation}

which is equivalent to $s_i^c \in \mathcal{S}_{\rm ed}, s_j^c \in \mathcal{S}_{\rm non}$. 

From Assumption \ref{asm: candidate set contain all edited subjects}, $\mathcal{S}_{\rm ed} \subseteq \mathcal{S}_{\rm cand}$, then the top $N$ scores $\rho_{(1, \cdots, N)}^{c\downarrow}$ all correspond to edited subjects, while the others correspond to non-edited subjects, giving 

\begin{equation}
  \arccos \rho_{(N+1)}^{c\downarrow} - \arccos \rho_{(N)}^{c\downarrow} \ge \delta \theta . 
\end{equation}

This completes the proof. 

\end{proof}

\subsection{Robustness on perturbation of covariance matrix $\mathbf{C}$, for MEMIT} \label{robustness of MEMIT}

This section studies the setting where the covariance matrix $\mathbf C$
used in the \textit{KSTER} attack is not exactly available. Specifically, $\mathbf{C}$ may only be estimated by a finite number of samples $N_{\mathbf{C}}$, or it may be replaced by a covariance matrix estimated from another distribution (a different dataset). We uniformly denote the new covariance matrix as $\mathbf{C}^\prime$, now $\mathbf{M}^\prime = \Delta \mathbf{W} \mathbf{C}^\prime$. 

Following standard definitions, we define the maximum principal angle of two linear subspaces $\mathcal{U}_{1,2}$ of same dimension as $\theta_{\max}(\mathcal{U}_{1}, \mathcal{U}_{2}) \coloneqq \arccos(\min_{\mathbf{u}\in \mathcal{U}_1, \|\mathbf{u}\|_2 = 1}(\|\Pi_{\mathcal{U}_2} \mathbf{u}\|_2))$, and the angle between a non-zero vector $\mathbf{v} \ne \mathbf{0}$ and a linear subspace $\mathcal{U}$ to be $\phi(\mathbf{v}, \mathcal{U}) \coloneqq \arccos\left(\frac{\|\Pi_{\mathcal{U}} \mathbf{v}\|_2}{\|\mathbf{v}\|_2}\right)$.  

\begin{lemma}\label{lem: connection between theta and phi}

Given the angle between vector $k$ and linear subspace $\mathcal{U}_{1,2}$ ($\mathcal{U}_{1,2}$ have same dimension) to be $\phi_{1,2}$ respectively, the maximum principal angle between $\mathcal{U}_1$ and $\mathcal{U}_2$ to be $\theta$, then $\phi_2$ is bounded by: 

\begin{equation}
  \max(\phi_1 - \theta, 0) \le \phi_2 \le \min(\phi_1 + \theta, \pi/2) . 
\end{equation}

\end{lemma}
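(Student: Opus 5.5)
The plan is to read both quantities as angles in the spherical metric and apply the triangle inequality twice, once in each direction. Write $\angle(\mathbf{a},\mathbf{b}) \coloneqq \arccos\frac{|\langle \mathbf{a},\mathbf{b}\rangle|}{\|\mathbf{a}\|_2\|\mathbf{b}\|_2}\in[0,\pi/2]$ for nonzero vectors. This is the angle between the lines $\spn(\{\mathbf{a}\})$ and $\spn(\{\mathbf{b}\})$, and it is a metric on lines, i.e.\ on real projective space. It satisfies the triangle inequality because it is the quotient of the great-circle distance on the unit sphere under $\mathbf{a}\sim-\mathbf{a}$.

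\textbf{Step 1.} The first step is to show that $\phi(\mathbf{k},\mathcal{U})=\min_{\mathbf{u}\in\mathcal{U}\setminus\{\mathbf{0}\}}\angle(\mathbf{k},\mathbf{u})$. For unit $\mathbf{u}\in\mathcal{U}$ we have $\langle\mathbf{k},\mathbf{u}\rangle=\langle\Pi_{\mathcal{U}}\mathbf{k},\mathbf{u}\rangle$. By Cauchy--Schwarz, $|\langle\mathbf{k},\mathbf{u}\rangle|\le\|\Pi_{\mathcal{U}}\mathbf{k}\|_2$. Equality holds at $\mathbf{u}=\Pi_{\mathcal{U}}\mathbf{k}/\|\Pi_{\mathcal{U}}\mathbf{k}\|_2$ whenever the projection is nonzero. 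If $\Pi_{\mathcal{U}}\mathbf{k}=\mathbf{0}$, then $\phi=\pi/2$, and every $\mathbf{u}\in\mathcal{U}$ is orthogonal to $\mathbf{k}$, so the identity still holds.

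\textbf{Step 2.} Next, take $\mathbf{u}_1\in\mathcal{U}_1$ to be a unit vector attaining $\phi_1$. Let $\mathbf{u}_2$ be the normalized projection $\Pi_{\mathcal{U}_2}\mathbf{u}_1$. By Step 1 and the definition of $\theta_{\max}$, $\angle(\mathbf{u}_1,\mathbf{u}_2)=\phi(\mathbf{u}_1,\mathcal{U}_2)\le\theta$. If $\theta<\pi/2$, this projection is nonzero. If $\theta=\pi/2$, the upper bound $\min(\phi_1+\theta,\pi/2)=\pi/2$ is trivial. The triangle inequality then gives $\phi_2\le\angle(\mathbf{k},\mathbf{u}_2)\le\angle(\mathbf{k},\mathbf{u}_1)+\angle(\mathbf{u}_1,\mathbf{u}_2)\le\phi_1+\theta$. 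Since $\phi_2\le\pi/2$ always, we obtain $\phi_2\le\min(\phi_1+\theta,\pi/2)$.

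\textbf{Step 3.} For the lower bound, I run the same argument with the roles of $\mathcal{U}_1$ and $\mathcal{U}_2$ swapped, which gives $\phi_1\le\phi_2+\theta_{\max}(\mathcal{U}_2,\mathcal{U}_1)$. This step needs the symmetry $\theta_{\max}(\mathcal{U}_2,\mathcal{U}_1)=\theta_{\max}(\mathcal{U}_1,\mathcal{U}_2)$, which is where the equal-dimension hypothesis enters.

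To prove the symmetry, take orthonormal bases $\mathbf{Q}_1,\mathbf{Q}_2\in\mathbb{R}^{d\times m}$ of the two subspaces. For $\mathbf{u}=\mathbf{Q}_1\mathbf{x}$ with $\|\mathbf{x}\|_2=1$, we have $\|\Pi_{\mathcal{U}_2}\mathbf{u}\|_2=\|\mathbf{Q}_2^\top\mathbf{Q}_1\mathbf{x}\|_2$. Hence $\cos\theta_{\max}(\mathcal{U}_1,\mathcal{U}_2)=\sigma_{\min}(\mathbf{Q}_2^\top\mathbf{Q}_1)$. Because $\mathbf{Q}_2^\top\mathbf{Q}_1$ is square, its smallest singular value equals that of its transpose $\mathbf{Q}_1^\top\mathbf{Q}_2$. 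This gives the symmetry, and so $\phi_2\ge\phi_1-\theta$. Combining with $\phi_2\ge0$ yields $\phi_2\ge\max(\phi_1-\theta,0)$.

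I expect the only real obstacles to be two points. The first is justifying that $\angle$ satisfies the triangle inequality even with absolute values; I would cite it as the Fubini--Study/projective metric, or reduce to the spherical triangle inequality after choosing signs. The second is the symmetry of $\theta_{\max}$, which genuinely requires equal dimensions. The degenerate cases where a projection vanishes are handled by the $\pi/2$ caps.
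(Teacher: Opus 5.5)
Your proposal is correct and follows the same route as the paper, whose proof is a one-line appeal to the triangle inequality for angles in Euclidean space; you carry that argument out in full. In particular, your Step 3 makes explicit the symmetry $\theta_{\max}(\mathcal{U}_1,\mathcal{U}_2)=\theta_{\max}(\mathcal{U}_2,\mathcal{U}_1)$, obtained from $\sigma_{\min}$ of the square matrix $\mathbf{Q}_2^\top\mathbf{Q}_1$; this is exactly where the equal-dimension hypothesis is needed, and the paper leaves it implicit.
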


\begin{proof}

This is directly from the triangle inequality of angles in Euclidean space. 

\end{proof}

\begin{lemma}\label{lem: noisily separable}

For linear subspaces $\mathcal{U}_{1,2} \subseteq \mathbb{R}^{d_{\rm in}}$ of same dimension, with $\theta_{\max}(\mathcal{U}_{1}, \mathcal{U}_{2}) < \frac{\delta \theta}{2}$, if $\mathcal{K}_{\rm ed}$ and $\mathcal{K}_{\rm non}$ are $\delta \theta$-separable on $\mathcal{U}_{1}$, then they are $\delta\theta^\prime$-separable on $\mathcal{U}_{2}$, where $\delta\theta^\prime = \delta \theta - 2\theta_{\max}(\mathcal{U}_{1}, \mathcal{U}_{2})$. 

\end{lemma}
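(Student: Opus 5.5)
The plan is to reduce the statement to a pointwise application of Lemma \ref{lem: connection between theta and phi} for each candidate key vector, and then to subtract. Write $\theta \coloneqq \theta_{\max}(\mathcal{U}_1, \mathcal{U}_2) < \frac{\delta\theta}{2}$. For any nonzero $\mathbf{k}$, set $\phi_j(\mathbf{k}) \coloneqq \phi(\mathbf{k}, \mathcal{U}_j)$ for $j=1,2$. With this notation, the $\delta\theta$-separability on $\mathcal{U}_1$ (Definition \ref{def: theta-separable}) reads $\phi_1(\mathbf{k}_2) - \phi_1(\mathbf{k}_1) \ge \delta\theta$ for all $\mathbf{k}_1 \in \mathcal{K}_{\rm ed}$ and $\mathbf{k}_2 \in \mathcal{K}_{\rm non}$.

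Fix such a pair $(\mathbf{k}_1, \mathbf{k}_2)$. Lemma \ref{lem: connection between theta and phi} gives the two one-sided bounds we need, discarding the truncations at $0$ and $\pi/2$ on the unhelpful side:
\begin{equation}
  \phi_2(\mathbf{k}_1) \le \min(\phi_1(\mathbf{k}_1) + \theta, \pi/2) \le \phi_1(\mathbf{k}_1) + \theta ,
\end{equation}
\begin{equation}
  \phi_2(\mathbf{k}_2) \ge \max(\phi_1(\mathbf{k}_2) - \theta, 0) \ge \phi_1(\mathbf{k}_2) - \theta .
\end{equation}
Subtracting these yields
\begin{equation}
  \phi_2(\mathbf{k}_2) - \phi_2(\mathbf{k}_1) \ge \phi_1(\mathbf{k}_2) - \phi_1(\mathbf{k}_1) - 2\theta \ge \delta\theta - 2\theta = \delta\theta' .
\end{equation}

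It remains to check that $\delta\theta'$ is an admissible separation constant, i.e.\ $\delta\theta' \in (0, \frac{\pi}{2})$. Positivity follows from $\theta < \delta\theta/2$. The upper bound follows from $\delta\theta' \le \delta\theta < \frac{\pi}{2}$, since $\theta \ge 0$. Since the pair was arbitrary, $\mathcal{K}_{\rm ed}$ and $\mathcal{K}_{\rm non}$ are $\delta\theta'$-separable on $\mathcal{U}_2$.

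The only nontrivial ingredient is Lemma \ref{lem: connection between theta and phi}, which we take as given. Conceptually it says that moving from $\mathcal{U}_1$ to $\mathcal{U}_2$ changes the angle of any vector by at most $\theta$. This rests on the triangle inequality for angles: the closest direction in $\mathcal{U}_1$ to $\mathbf{k}$ lies within angle $\theta$ of some direction in $\mathcal{U}_2$, and symmetrically, because the subspaces have equal dimension and so $\theta_{\max}$ is symmetric. The step needing care is precisely this symmetry, since both inequality directions are used. If Lemma \ref{lem: connection between theta and phi} is read as stated, with $\theta$ defined via minimization over $\mathcal{U}_1$, it suffices to note that for equal-dimensional subspaces $\theta_{\max}(\mathcal{U}_1,\mathcal{U}_2)=\theta_{\max}(\mathcal{U}_2,\mathcal{U}_1)$. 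This holds because both quantities equal the largest principal angle, obtained from the singular values of $\Pi_{\mathcal{U}_2}$ restricted to $\mathcal{U}_1$ and of its adjoint, which coincide for square restrictions. With that, the proof closes.
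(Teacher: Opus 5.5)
Your proposal is correct and follows the paper's proof exactly: apply Lemma~\ref{lem: connection between theta and phi} to bound $\phi(\mathbf{k}_1,\mathcal{U}_2)$ from above and $\phi(\mathbf{k}_2,\mathcal{U}_2)$ from below, each within $\theta_{\max}(\mathcal{U}_1,\mathcal{U}_2)$ of the corresponding angle on $\mathcal{U}_1$, then subtract. You also verify two details the paper leaves implicit, and both checks are correct: that $\delta\theta' \in (0,\frac{\pi}{2})$, and that $\theta_{\max}$ is symmetric for equal-dimensional subspaces, which the lower bound in that lemma needs.
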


\begin{proof}

$\forall \mathbf{k}_1 \in \mathcal{K}_{\rm ed}$ and $\mathbf{k}_2 \in \mathcal{K}_{\rm non}$, from Definition \ref{def: theta-separable}, $\phi(\mathbf{k}_2, \mathcal{U}_1) - \phi(\mathbf{k}_1, \mathcal{U}_1) \ge \delta \theta$. 

From Lemma \ref{lem: connection between theta and phi}, 

\begin{equation}
  \phi(\mathbf{k}_1, \mathcal{U}_2) \le \phi(\mathbf{k}_1, \mathcal{U}_1) + \theta_{\max}(\mathcal{U}_{1}, \mathcal{U}_{2}) ,\, 
  \phi(\mathbf{k}_2, \mathcal{U}_2) \ge \phi(\mathbf{k}_2, \mathcal{U}_1) - \theta_{\max}(\mathcal{U}_{1}, \mathcal{U}_{2}) . 
\end{equation}

Thus 

\begin{equation}
  \phi(\mathbf{k}_2, \mathcal{U}_2) - \phi(\mathbf{k}_1, \mathcal{U}_2) \ge \delta \theta - 2 \theta_{\max}(\mathcal{U}_{1}, \mathcal{U}_{2}) = \delta \theta^\prime . 
\end{equation}

This completes the proof. 

\end{proof}

\begin{theorem}\label{thm: subject recovery, memit, noisy cov}

Subject recovery under noisy covariance estimation. 

Under Assumptions \ref{asm: candidate set contain all edited subjects}, \ref{asm: key and residual matrix are both full column rank}, \ref{asm: separable under the generic template}, \ref{asm: positive definite covariance}, given the update matrix $\Delta \mathbf{W}$ and noisy estimation of covariance matrix $\mathbf{C}^\prime = \Delta \mathbf{C} + \mathbf{C} \succ \boldsymbol{O}$, if $ \| \Delta \mathbf{C} \|_2 < \frac{\sin \left( \frac{\delta \theta}{2} \right)}{\left\| \mathbf{C}^{-1} \mathbf{V}_N \right\|_2}$, Algorithm \ref{alg: stage1, memit} exactly outputs the $N$ correct edited subjects, with the top $N^{th}$ and $(N+1)^{th}$ score (namely $\rho_{(N)}^{c\downarrow}$ and $\rho_{(N+1)}^{c\downarrow}$) are separated by: 

\begin{equation}
  \arccos \rho_{(N+1)}^{c\downarrow} - \arccos \rho_{(N)}^{c\downarrow} \ge \delta \theta - 2 \arcsin \left( \left\| \Delta \mathbf{C} \right\|_2 \left\| \mathbf{C}^{-1} \mathbf{V}_N \right\|_2 \right) . 
\end{equation}

\end{theorem}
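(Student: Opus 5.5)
The plan is to show that the noisy matrix $\mathbf{M}^\prime = \Delta \mathbf{W} \mathbf{C}^\prime$ still has an $N$-dimensional row space, and that this row space is a small rotation of $\col(\mathbf{K})$. Then Lemma \ref{lem: noisily separable} transfers the separability of Assumption \ref{asm: separable under the generic template}, and the argument of Theorem \ref{thm: subject recovery, memit} is repeated verbatim.

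\textbf{Step 1: rank.} Since $\mathbf{C}^\prime \succ \boldsymbol{O}$ is invertible, $\rank(\mathbf{M}^\prime) = \rank(\Delta \mathbf{W}) = N$ by Lemma \ref{lem: linear algebra, full rank} and Remark \ref{rem: linear algebra, full rank, correct N}. Hence Algorithm \ref{alg: stage1, memit} still selects exactly $N$ right singular vectors, say $\mathbf{V}_N^\prime$, with $\col(\mathbf{V}_N^\prime) = \row(\mathbf{M}^\prime)$.

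\textbf{Step 2: identify the perturbed subspace.} By Eq. (\ref{eq: woodbury, derivation, memit}), $\row(\Delta \mathbf{W}) = \col(\mathbf{C}^{-1}\mathbf{K})$, since the left factor $\mathbf{R}(\mathbf{I}+\mathbf{K}^\top\mathbf{C}^{-1}\mathbf{K})^{-1}$ has full column rank. Transposing, $\mathbf{M}^{\prime\top} = \mathbf{C}^\prime \Delta \mathbf{W}^\top$, so
\begin{equation*}
  \col(\mathbf{V}_N^\prime) = \mathbf{C}^\prime \col(\mathbf{C}^{-1}\mathbf{K}) = \col\bigl((\mathbf{I} + \Delta\mathbf{C}\,\mathbf{C}^{-1})\mathbf{K}\bigr) = \col\bigl((\mathbf{I} + \Delta\mathbf{C}\,\mathbf{C}^{-1})\mathbf{V}_N\bigr),
\end{equation*}
where the last equality uses Theorem \ref{thm: recovery of linear subspace, memit}, i.e.\ $\col(\mathbf{K}) = \col(\mathbf{V}_N)$.

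\textbf{Step 3: principal-angle bound (the main obstacle).} Take any unit $\mathbf{u} = \mathbf{V}_N \mathbf{a} \in \col(\mathbf{K})$, so $\|\mathbf{a}\|_2 = 1$. The vector $(\mathbf{I}+\Delta\mathbf{C}\,\mathbf{C}^{-1})\mathbf{u}$ lies in $\col(\mathbf{V}_N^\prime)$, so
\begin{equation*}
  \bigl\|\mathbf{u} - \Pi_{\col(\mathbf{V}_N^\prime)}\mathbf{u}\bigr\|_2 \le \|\Delta\mathbf{C}\,\mathbf{C}^{-1}\mathbf{V}_N\mathbf{a}\|_2 \le \|\Delta\mathbf{C}\|_2\,\|\mathbf{C}^{-1}\mathbf{V}_N\|_2 .
\end{equation*}
The sine of the angle between $\mathbf{u}$ and $\col(\mathbf{V}_N^\prime)$ is therefore at most $\|\Delta\mathbf{C}\|_2\|\mathbf{C}^{-1}\mathbf{V}_N\|_2$. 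Taking the worst unit $\mathbf{u}$ gives $\theta_{\max}(\col(\mathbf{K}), \col(\mathbf{V}_N^\prime)) \le \arcsin(\|\Delta\mathbf{C}\|_2\|\mathbf{C}^{-1}\mathbf{V}_N\|_2)$.

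The delicate point is that $\theta_{\max}$ is defined with the minimum taken over the first subspace. I must check that the definition is symmetric for subspaces of equal dimension $N$, which holds because the principal angles are the same from either side. This symmetry is what lets the bound be applied in the direction Lemma \ref{lem: connection between theta and phi} needs. If that route is unclear, I would instead quote the standard fact $\|\Pi_{\mathcal{U}_1} - \Pi_{\mathcal{U}_2}\|_2 = \sin\theta_{\max}$ for equal-dimensional subspaces.

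\textbf{Step 4: conclude.} The hypothesis on $\|\Delta\mathbf{C}\|_2$ gives $\theta_{\max} < \delta\theta/2$. Lemma \ref{lem: noisily separable}, with $\mathcal{U}_1 = \col(\mathbf{K})$ and $\mathcal{U}_2 = \col(\mathbf{V}_N^\prime)$, yields $\delta\theta^\prime$-separability with
\begin{equation*}
  \delta\theta^\prime \ge \delta\theta - 2\arcsin\bigl(\|\Delta\mathbf{C}\|_2\|\mathbf{C}^{-1}\mathbf{V}_N\|_2\bigr) > 0 .
\end{equation*}
As in the proof of Theorem \ref{thm: subject recovery, memit}, $\rho_i^c$ is the cosine of the angle between $\mathbf{k}_i^c$ and $\col(\mathbf{V}_N^\prime)$. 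Assumption \ref{asm: candidate set contain all edited subjects} then ensures that the top-$N$ scores are exactly the edited subjects, with the stated gap between $\arccos\rho_{(N+1)}^{c\downarrow}$ and $\arccos\rho_{(N)}^{c\downarrow}$.
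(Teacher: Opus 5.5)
Your proof is correct and has the same skeleton as the paper's. Both identify $\col(\mathbf{V}_N^\prime)=\col(\mathbf{C}^\prime\mathbf{C}^{-1}\mathbf{V}_N)$ and bound the maximal principal angle between this space and $\col(\mathbf{V}_N)$ by $\arcsin(\|\Delta\mathbf{C}\|_2\|\mathbf{C}^{-1}\mathbf{V}_N\|_2)$. Both then finish with Lemma~\ref{lem: noisily separable} and the ranking argument of Theorem~\ref{thm: subject recovery, memit}.

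The difference lies in how the angle bound is obtained. The paper cites Wedin's $\sin\Theta$ theorem for $\mathbf{V}_N$ versus $\mathbf{V}_N+\Delta\mathbf{C}\,\mathbf{C}^{-1}\mathbf{V}_N$, with $\sigma_{\min}(\mathbf{V}_N)=1$ serving as the gap. You instead exhibit, for each unit $\mathbf{u}=\mathbf{V}_N\mathbf{a}$, the explicit point $(\mathbf{I}+\Delta\mathbf{C}\,\mathbf{C}^{-1})\mathbf{u}$ of the perturbed subspace at distance at most $\|\Delta\mathbf{C}\,\mathbf{C}^{-1}\mathbf{V}_N\|_2$ from $\mathbf{u}$. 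The constant is identical, and your version is self-contained, with no residuals or gap condition to set up.

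The price is that your bound controls the angle only from $\col(\mathbf{K})$ into $\col(\mathbf{V}_N^\prime)$. Lemma~\ref{lem: connection between theta and phi} is used in both directions inside Lemma~\ref{lem: noisily separable}: one direction for the edited keys and the other for the non-edited keys. So the equal-dimension symmetry you flag is essential rather than cosmetic. It rests on your Step~1, $\rank(\mathbf{M}^\prime)=N$, which the paper leaves implicit. A direct bound in the reverse direction would only give $\epsilon/(1-\epsilon)$, with $\epsilon=\|\Delta\mathbf{C}\|_2\|\mathbf{C}^{-1}\mathbf{V}_N\|_2$. Wedin's theorem, by contrast, delivers the symmetric canonical-angle bound in one step.
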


\begin{proof}

Similar to the proof of Theorem \ref{thm: recovery of linear subspace, memit}, 

\begin{equation}
  \mathbf{V}_N^\prime = \mathbf{M}^{\prime\top} \mathbf{U}_N^\prime \mathbf{\Sigma}_N^{\prime-1} = \mathbf{C}^\prime \mathbf{C}^{-1} \mathbf{M}^\top \mathbf{U}_N^\prime \mathbf{\Sigma}_N^{\prime-1} = \mathbf{C}^\prime \mathbf{C}^{-1} \mathbf{K} \left(\mathbf{I} + \mathbf{K}^\top \mathbf{C}^{-1} \mathbf{K}\right)^{-1} \mathbf{R}^\top \mathbf{U}_N^\prime \mathbf{\Sigma}_N^{\prime-1}, 
\end{equation}

giving 

\begin{equation}
  \col(\mathbf{V}_N^\prime) = \col(\mathbf{C}^\prime \mathbf{C}^{-1} \mathbf{K}) = \col(\mathbf{C}^\prime \mathbf{C}^{-1} \mathbf{V}_N) . 
\end{equation}

From Wedin's sin Theorem (Theorem 4.4, Chapter 5, \citet{Stewart1990MatrixPT}), 

\begin{equation}
  \sin(\theta_{\max}(\col(\mathbf{C}^\prime \mathbf{C}^{-1} \mathbf{V}_N), \col(\mathbf{V}_N))) \le \frac{\left\| \Delta \mathbf{C} \mathbf{C}^{-1} \mathbf{V}_N \right\|_2}{ \sigma_{\min}\left( \mathbf{V}_N \right)} = \left\| \Delta \mathbf{C} \mathbf{C}^{-1} \mathbf{V}_N \right\|_2 \le \left\| \Delta \mathbf{C} \right\|_2 \left\| \mathbf{C}^{-1} \mathbf{V}_N \right\|_2 . 
\end{equation}

From Lemma \ref{lem: noisily separable}, for $\delta \theta^\prime = \delta \theta - 2 \arcsin \left( \left\| \Delta \mathbf{C} \right\|_2 \left\| \mathbf{C}^{-1} \mathbf{V}_N \right\|_2 \right) > 0$, 

\begin{equation}
  \arccos \rho_j^c - \arccos \rho_i^c \ge \delta \theta^\prime, \forall i,j \text{ s.t. } \mathbf{k}_i^c \in \mathcal{K}_{\rm ed}, \mathbf{k}_j^c \in \mathcal{K}_{\rm non} . 
\end{equation}

Following the same arguments as Theorem \ref{thm: subject recovery, memit}, the proof is concluded.

\end{proof}

\begin{remark} Consider estimating $\mathbf{C}$ from $N_{\mathbf{C}}$ number of i.i.d. samples drawn from same distribution with $\mathbf{C}$. Now $\Delta \mathbf{C} = \hat{\mathbf{C}} - \mathbf{C}$, where $\hat{\mathbf{C}}$ is the estimated covariance matrix $\mathbf{C}$ is the ground truth. 
Under standard matrix concentration conditions, $\|\Delta \mathbf{C} \|_2$ decreases with $N_{\mathbf{C}}$, typically at an $N_{\mathbf{C}}^{-1/2}$-type rate up to dimension/effective-rank factors.

\end{remark}

\subsection{Subject recovery for AlphaEdit, Algorithm \ref{alg: stage1, alphaedit}} \label{apply WMI to alpha}

From Woodbury Matrix Identity (Lemma \ref{lem: woodbury}), 

\begin{equation}\label{eq: woodbury, derivation, alphaedit}
  \begin{aligned}
    \Delta \mathbf{W} &\coloneqq \Delta \mathbf{W}_{\rm AlphaEdit} = \mathbf{R} \mathbf{K}^\top \mathbf{P} \left( \mathbf{I} + \mathbf{K} \mathbf{K}^\top \mathbf{P} \right)^{-1} \\ 
    &= \mathbf{R} \mathbf{K}^\top \mathbf{P} \left[ \mathbf{I}^{-1} - \mathbf{I}^{-1} \mathbf{K} \left( \mathbf{I} + \mathbf{K}^\top \mathbf{P} \mathbf{I}^{-1} \mathbf{K} \right)^{-1} \mathbf{K}^\top \mathbf{P} \mathbf{I}^{-1} \right] \\ 
    &= \mathbf{R} \left[ \mathbf{K}^\top \mathbf{P} - \mathbf{K}^\top \mathbf{P} \mathbf{K} \left( \mathbf{I} + \mathbf{K}^\top \mathbf{P} \mathbf{K} \right)^{-1} \mathbf{K}^\top \mathbf{P} \right] \\ 
    &= \mathbf{R} \left[ \left( \mathbf{I} + \mathbf{K}^\top \mathbf{P} \mathbf{K} \right) - \mathbf{K}^\top \mathbf{P} \mathbf{K} \right] \left( \mathbf{I} + \mathbf{K}^\top \mathbf{P} \mathbf{K} \right)^{-1} \mathbf{K}^\top \mathbf{P} \\ 
    &= \mathbf{R} \left( \mathbf{I} + \mathbf{K}^\top \mathbf{P} \mathbf{K} \right)^{-1} \mathbf{K}^\top \mathbf{P} . 
  \end{aligned}
\end{equation}

\begin{lemma}\label{lem: indistinguishability}

Indistinguishability of the key matrix's projection on the orthogonal complement subspace. 

$\forall \mathbf{K} \in \mathbb{R}^{d_{\rm in} \times N}$, $\mathbf{R} \in \mathbb{R}^{d_{\rm out} \times N}$, $\forall \Delta \mathbf{K} \in \mathbb{R}^{d_{\rm in} \times N}$ such that $\mathbf{P} \Delta \mathbf{K} = \mathbf{O}$, for AlphaEdit (at the first edit) $\Delta \mathbf{W} \coloneqq \Delta \mathbf{W}_{\rm AlphaEdit} (\mathbf{K}, \mathbf{R})$, we have $\Delta \mathbf{W}_{\rm AlphaEdit} (\mathbf{K}, \mathbf{R}) = \Delta \mathbf{W}_{\rm AlphaEdit} (\mathbf{K} + \Delta \mathbf{K}, \mathbf{R})$.   

\end{lemma}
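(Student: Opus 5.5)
The plan is to work directly from the closed form of the AlphaEdit update in Eq.~(\ref{eq: alphaedit}), $\Delta \mathbf{W}_{\rm AlphaEdit}(\mathbf{K},\mathbf{R}) = \mathbf{R}\mathbf{K}^\top \mathbf{P}(\mathbf{K}\mathbf{K}^\top\mathbf{P} + \mathbf{I})^{-1}$. The observation is that $\mathbf{K}$ enters this expression only through the combination $\mathbf{K}^\top\mathbf{P}$, i.e.\ through $\mathbf{P}\mathbf{K}$ (using $\mathbf{P}^\top = \mathbf{P}$, since $\mathbf{P}$ is an orthogonal projection). Replacing $\mathbf{K}$ by $\mathbf{K}+\Delta\mathbf{K}$ with $\mathbf{P}\Delta\mathbf{K} = \mathbf{O}$ therefore leaves every occurrence unchanged.

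The first step is to record that $\mathbf{P}$ is symmetric, so $\mathbf{P}\Delta\mathbf{K}=\mathbf{O}$ gives $\Delta\mathbf{K}^\top\mathbf{P} = (\mathbf{P}\Delta\mathbf{K})^\top = \mathbf{O}$. Hence $(\mathbf{K}+\Delta\mathbf{K})^\top\mathbf{P} = \mathbf{K}^\top\mathbf{P}$.

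The second step is to handle the inverted factor. The occurrence of $\mathbf{K}$ there is $\mathbf{K}\mathbf{K}^\top\mathbf{P}$, whose left factor $\mathbf{K}$ is not multiplied by $\mathbf{P}$, so the direct substitution does not immediately apply. The cleanest route is the Woodbury form in Eq.~(\ref{eq: woodbury, derivation, alphaedit}):
\begin{equation*}
\Delta\mathbf{W} = \mathbf{R}(\mathbf{I} + \mathbf{K}^\top\mathbf{P}\mathbf{K})^{-1}\mathbf{K}^\top\mathbf{P}.
\end{equation*}
Since $\mathbf{P}=\mathbf{P}^2=\mathbf{P}^\top\mathbf{P}$, we have $\mathbf{K}^\top\mathbf{P}\mathbf{K} = (\mathbf{P}\mathbf{K})^\top(\mathbf{P}\mathbf{K})$, which depends only on $\mathbf{P}\mathbf{K}$. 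The factor $\mathbf{K}^\top\mathbf{P} = (\mathbf{P}\mathbf{K})^\top$ likewise depends only on $\mathbf{P}\mathbf{K}$. Because $\mathbf{P}(\mathbf{K}+\Delta\mathbf{K}) = \mathbf{P}\mathbf{K}$, both factors, and hence the whole expression, are unchanged.

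The main obstacle is justifying the Woodbury rewriting for general $\mathbf{K}$: one needs $\mathbf{I}+\mathbf{K}\mathbf{K}^\top\mathbf{P}$ to be invertible, which the definition of AlphaEdit already presupposes, together with $\mathbf{I}+\mathbf{K}^\top\mathbf{P}\mathbf{K}$. The latter is symmetric positive definite, since it equals $\mathbf{I}+(\mathbf{P}\mathbf{K})^\top(\mathbf{P}\mathbf{K})$, so it is always invertible. As a sanity check that avoids Woodbury's hypotheses altogether, I would verify the push-through identity $\mathbf{K}^\top\mathbf{P}(\mathbf{I}+\mathbf{K}\mathbf{K}^\top\mathbf{P}) = (\mathbf{I}+\mathbf{K}^\top\mathbf{P}\mathbf{K})\mathbf{K}^\top\mathbf{P}$ directly. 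Multiplying on the left by $(\mathbf{I}+\mathbf{K}^\top\mathbf{P}\mathbf{K})^{-1}$ and on the right by $(\mathbf{I}+\mathbf{K}\mathbf{K}^\top\mathbf{P})^{-1}$ yields the rewritten form. This also shows, as a byproduct, that the perturbed update is well defined whenever the original is: $\mathbf{I}+\mathbf{K}\mathbf{K}^\top\mathbf{P}$ is invertible iff $\mathbf{I}+\mathbf{K}^\top\mathbf{P}\mathbf{K}$ is (Sylvester's determinant identity), so both are always invertible here.
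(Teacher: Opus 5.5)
Your proposal is correct and follows essentially the same route as the paper: rewrite the update in the Woodbury form $\mathbf{R}(\mathbf{I}+\mathbf{K}^\top\mathbf{P}\mathbf{K})^{-1}\mathbf{K}^\top\mathbf{P}$, use $\mathbf{P}=\mathbf{P}^2=\mathbf{P}^\top$ to express both factors through $\mathbf{P}\mathbf{K}$ alone, and note that $\mathbf{P}(\mathbf{K}+\Delta\mathbf{K})=\mathbf{P}\mathbf{K}$. Your push-through check and your Sylvester-determinant argument, showing that $\mathbf{I}+\mathbf{K}\mathbf{K}^\top\mathbf{P}$ and $\mathbf{I}+\mathbf{K}^\top\mathbf{P}\mathbf{K}$ are always invertible, add a well-definedness detail that the paper leaves implicit.
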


\begin{remark}\label{rem: indistinguishability} Explanation of Lemma \ref{lem: indistinguishability}. 

For a key matrix with decomposition $\mathbf{K} = \mathbf{P} \mathbf{K} + (\mathbf{I} - \mathbf{P}) \mathbf{K}$, we have $\mathbf{P} [ (\mathbf{I} - \mathbf{P}) \mathbf{K} ] = \mathbf{O}$. So here we can set $\Delta \mathbf{K} = (\mathbf{I} - \mathbf{P}) \mathbf{K}$. Thus Lemma \ref{lem: indistinguishability} indicates that given $\Delta \mathbf{W}$, $\mathbf{P}$ and even $\mathbf{R}$, one cannot directly obtain any information of $(\mathbf{I} - \mathbf{P}) \mathbf{K}$ without additional data. This is why we use the projected key vectors $\mathbf{k}_i^{\prime c} = \mathbf{P} \mathbf{k}_i^c$ rather than $\mathbf{k}_i^c$ itself to calculate the score function in Algorithm \ref{alg: stage1, alphaedit}. 

\end{remark}

\begin{proof} 

From the property of projection matrix, $\mathbf{P} = \mathbf{P}^2$, $\mathbf{P} = \mathbf{P}^\top$. Then we have 

\begin{equation}
  \begin{aligned}
    \Delta \mathbf{W}_{\rm AlphaEdit} (\mathbf{K} + \Delta \mathbf{K}, \mathbf{R}) &= \mathbf{R} \left( \mathbf{I} + \left( \mathbf{K} + \Delta \mathbf{K} \right)^\top \mathbf{P} \left( \mathbf{K} + \Delta \mathbf{K} \right) \right)^{-1} \left( \mathbf{K} + \Delta \mathbf{K} \right)^\top \mathbf{P} \\ 
    &= \mathbf{R} \left( \mathbf{I} + \left( \mathbf{K} + \Delta \mathbf{K} \right)^\top \mathbf{P}^2 \left( \mathbf{K} + \Delta \mathbf{K} \right) \right)^{-1} \left( \mathbf{K} + \Delta \mathbf{K} \right)^\top \mathbf{P} \\ 
    &= \mathbf{R} \left( \mathbf{I} + \left( \mathbf{P} \left( \mathbf{K} + \Delta \mathbf{K} \right) \right)^\top \left( \mathbf{P} \left( \mathbf{K} + \Delta \mathbf{K} \right) \right) \right)^{-1} \left( \mathbf{P} \left( \mathbf{K} + \Delta \mathbf{K} \right) \right)^\top \\ 
    &= \mathbf{R} \left( \mathbf{I} + \left( \mathbf{P} \mathbf{K} \right)^\top \left( \mathbf{P} \mathbf{K} \right) \right)^{-1} \left( \mathbf{P} \mathbf{K} \right)^\top \\ 
    &= \mathbf{R} \left( \mathbf{I} + \mathbf{K}^\top \mathbf{P} \mathbf{K} \right)^{-1} \mathbf{K}^\top \mathbf{P} = \Delta \mathbf{W}_{\rm AlphaEdit} (\mathbf{K}, \mathbf{R}) . 
  \end{aligned}
\end{equation}

This completes the proof. 

\end{proof}

From Lemma \ref{lem: indistinguishability}, it is impossible to recover $(\mathbf{I} - \mathbf{P})\mathbf{K}$. The following Theorem then proves that the column space of $\mathbf{P} \mathbf{K}$ can be recovered through $\Delta \mathbf{W}$ and $\mathbf{P}$: 

\begin{theorem}\label{thm: recovery of linear subspace, alphaedit}

Recovery of linear subspace for AlphaEdit. 

Under Assumption \ref{asm: key and residual matrix are both full column rank}, given update matrix $\Delta \mathbf{W} = \Delta \mathbf{W}_{\rm AlphaEdit}$ and the null-space projection matrix $\mathbf{P}$, the column space of $\mathbf{P}\mathbf{K}$ is exactly the column space of $\mathbf{V}_N$ ($\mathbf{V}_N$ is defined in Algorithm \ref{alg: stage1, alphaedit}): 

\begin{equation}
  \col(\mathbf{V}_N) = \col(\mathbf{P} \mathbf{K}). 
\end{equation}

\end{theorem}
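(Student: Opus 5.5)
The plan mirrors the proof of Theorem \ref{thm: recovery of linear subspace, memit}, with the Woodbury form (\ref{eq: woodbury, derivation, alphaedit}) replacing (\ref{eq: woodbury, derivation, memit}). Here $\mathbf{M} = \Delta \mathbf{W}$ is used directly, with no multiplication by $\mathbf{C}$. Since $\mathbf{P}$ is an orthogonal projection, $\mathbf{P} = \mathbf{P}^\top = \mathbf{P}^2$, so
\begin{equation*}
  \Delta \mathbf{W} = \mathbf{R} \left( \mathbf{I} + (\mathbf{P}\mathbf{K})^\top (\mathbf{P}\mathbf{K}) \right)^{-1} (\mathbf{P}\mathbf{K})^\top .
\end{equation*}
This expression is well defined unconditionally. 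The matrix $\mathbf{I} + (\mathbf{P}\mathbf{K})^\top(\mathbf{P}\mathbf{K})$ is symmetric positive definite, with all eigenvalues at least $1$, so it is invertible. This also justifies applying Lemma \ref{lem: woodbury} with $\mathbf{A} = \mathbf{I}$.

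\textbf{Rank step.} By Remark \ref{rem: linear algebra, full rank, correct N}, apply Lemma \ref{lem: linear algebra, full rank} with $\mathbf{M}_1 = \mathbf{R}$, $\mathbf{M}_2 = \mathbf{P}\mathbf{K}$, and $\mathbf{M}_3 = \mathbf{I} + (\mathbf{P}\mathbf{K})^\top(\mathbf{P}\mathbf{K})$, reading the expression above as $\mathbf{M}_1 \mathbf{M}_3^{-1} \mathbf{M}_2^\top$. The ordering differs slightly from the lemma's statement, but the same image argument applies: $\mathbf{M}_2^\top$ is surjective onto $\mathbb{R}^N$, and $\mathbf{M}_3^{-1}$ is invertible. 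Assumption \ref{asm: key and residual matrix are both full column rank} gives $\rank(\mathbf{R}) = \rank(\mathbf{P}\mathbf{K}) = N$, hence $\rank(\Delta \mathbf{W}) = N$. Therefore $N$ is computed correctly in line 1 of Algorithm \ref{alg: stage1, alphaedit}, and the compact SVD is $\Delta \mathbf{W} = \mathbf{U}_N \mathbf{\Sigma}_N \mathbf{V}_N^\top$ with $\mathbf{\Sigma}_N$ invertible.

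\textbf{Subspace step.} From the compact SVD,
\begin{equation*}
  \mathbf{V}_N = \Delta \mathbf{W}^\top \mathbf{U}_N \mathbf{\Sigma}_N^{-1} = \mathbf{P}\mathbf{K} \left( \mathbf{I} + \mathbf{K}^\top \mathbf{P} \mathbf{K} \right)^{-1} \mathbf{R}^\top \mathbf{U}_N \mathbf{\Sigma}_N^{-1} .
\end{equation*}
This immediately gives $\col(\mathbf{V}_N) \subseteq \col(\mathbf{P}\mathbf{K})$. Both spaces have dimension $N$: $\mathbf{V}_N$ has $N$ orthonormal columns, and $\rank(\mathbf{P}\mathbf{K}) = N$. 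So the inclusion is an equality.

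Equivalently, the $N \times N$ factor $\left(\mathbf{I} + \mathbf{K}^\top \mathbf{P}\mathbf{K}\right)^{-1}\mathbf{R}^\top \mathbf{U}_N \mathbf{\Sigma}_N^{-1}$ must be invertible. A product of full-column-rank $\mathbf{P}\mathbf{K}$ with an $N \times N$ matrix has rank $N$ only if that square factor is nonsingular.

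\textbf{Main obstacle.} The only delicate point is exactly this rank bookkeeping. Using the dimension count above avoids having to show directly that $\mathbf{R}^\top \mathbf{U}_N$ is nonsingular; that fact follows from $\col(\mathbf{U}_N) = \col(\Delta\mathbf{W}) = \col(\mathbf{R})$, which I will note as an alternative. Beyond that, the argument is routine and needs no assumption on $\mathbf{C}$.
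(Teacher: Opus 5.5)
Your proposal is correct and follows essentially the same route as the paper. Both arguments rewrite $\Delta\mathbf{W}=\mathbf{R}(\mathbf{I}+\mathbf{K}^\top\mathbf{P}\mathbf{K})^{-1}(\mathbf{P}\mathbf{K})^\top$, use $\rank(\Delta\mathbf{W})=N$ to obtain the compact SVD, and read off $\mathbf{V}_N=\mathbf{P}\mathbf{K}(\mathbf{I}+\mathbf{K}^\top\mathbf{P}\mathbf{K})^{-1}\mathbf{R}^\top\mathbf{U}_N\mathbf{\Sigma}_N^{-1}$; your dimension count is a welcome justification of the invertibility of the $N\times N$ factor, which the paper only asserts, and your observation that $\mathbf{I}+(\mathbf{P}\mathbf{K})^\top(\mathbf{P}\mathbf{K})\succ\mathbf{O}$ makes the Woodbury step unconditional.
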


\begin{proof}

From Eq. (\ref{eq: woodbury, derivation, alphaedit}), $\Delta \mathbf{W} = \mathbf{R} \left(\mathbf{I} + \mathbf{K}^\top \mathbf{P} \mathbf{K}\right)^{-1} \left(\mathbf{P} \mathbf{K}\right)^\top$, then similar to the proof of Theorem \ref{thm: recovery of linear subspace, memit}, $\rank(\Delta \mathbf{W}) = N$, $\Delta \mathbf{W} = \mathbf{M} = \mathbf{U}_N \mathbf{\Sigma}_N \mathbf{V}_N^\top$. Thus we have 

\begin{equation}
  \mathbf{V}_N = \mathbf{M}^\top \mathbf{U}_N \mathbf{\Sigma_N}^{-1} = \mathbf{P} \mathbf{K} \left(\mathbf{I} + \mathbf{K}^\top \mathbf{P} \mathbf{K}\right)^{-1} \mathbf{R}^\top \mathbf{U}_N \mathbf{\Sigma}_N^{-1} . 
\end{equation}

Since $\left(\mathbf{I} + \mathbf{K}^\top \mathbf{P} \mathbf{K}\right)^{-1} \mathbf{R}^\top \mathbf{U}_N \mathbf{\Sigma}_N^{-1}$ is of rank $N$ and thus invertible, $\col(\mathbf{V}_N) = \col(\mathbf{P} \mathbf{K})$, which completes the proof. 

\end{proof}

\begin{theorem}\label{thm: subject recovery, alphaedit}

Guarantee of subject recovery for AlphaEdit. 

Under Assumptions \ref{asm: candidate set contain all edited subjects}, \ref{asm: key and residual matrix are both full column rank}, \ref{asm: separable under the generic template, P}, given the update matrix $\Delta \mathbf{W} = \Delta \mathbf{W}_{\rm AlphaEdit}$ and the null-space projection matrix $\mathbf{P}$, Algorithm \ref{alg: stage1, alphaedit} exactly outputs the $N$ correct edited subjects, with the top $N^{th}$ and $(N+1)^{th}$ score (namely $\rho_{(N)}^{c\downarrow}$ and $\rho_{(N+1)}^{c\downarrow}$) separated by: 

\begin{equation}
  \arccos \rho_{(N+1)}^{c\downarrow} - \arccos \rho_{(N)}^{c\downarrow} \ge \delta \theta . 
\end{equation}

\end{theorem}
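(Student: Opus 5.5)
The plan mirrors the proof of Theorem \ref{thm: subject recovery, memit}, with Theorem \ref{thm: recovery of linear subspace, alphaedit} taking the place of Theorem \ref{thm: recovery of linear subspace, memit}.

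\textbf{Step 1: rewrite the score as a projection.} In Algorithm \ref{alg: stage1, alphaedit} the score is $\rho_i^c = \|\mathbf{V}_N^\top \mathbf{k}_i^{\prime c}\|_2 / \|\mathbf{k}_i^{\prime c}\|_2$ with $\mathbf{k}_i^{\prime c} = \mathbf{P}\mathbf{k}_i^c$. Because $\mathbf{V}_N$ has orthonormal columns, $\|\mathbf{V}_N^\top \mathbf{x}\|_2 = \|\mathbf{V}_N\mathbf{V}_N^\top \mathbf{x}\|_2 = \|\Pi_{\col(\mathbf{V}_N)}\mathbf{x}\|_2$. Hence
\[
\rho_i^c = \frac{\|\Pi_{\col(\mathbf{V}_N)} \mathbf{P}\mathbf{k}_i^c\|_2}{\|\mathbf{P}\mathbf{k}_i^c\|_2},
\]
so $\arccos\rho_i^c$ is exactly the angle between $\mathbf{P}\mathbf{k}_i^c$ and $\col(\mathbf{V}_N)$.

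\textbf{Step 2: identify the subspace and apply separability.} By Remark \ref{rem: linear algebra, full rank, correct N} and Assumption \ref{asm: key and residual matrix are both full column rank}, $\rank(\Delta\mathbf{W}) = N$, so line 1 of the algorithm recovers the correct $N$. Theorem \ref{thm: recovery of linear subspace, alphaedit} then gives $\col(\mathbf{V}_N) = \col(\mathbf{P}\mathbf{K})$. Now apply Assumption \ref{asm: separable under the generic template, P}, which is $\delta\theta$-separability of $\mathbf{P}\circ\mathcal{K}_{\rm ed}$ and $\mathbf{P}\circ\mathcal{K}_{\rm non}$ on $\col(\mathbf{P}\mathbf{K})$. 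It yields $\arccos\rho_j^c - \arccos\rho_i^c \ge \delta\theta$ whenever $s_i^c\in\mathcal{S}_{\rm ed}$ and $s_j^c\in\mathcal{S}_{\rm non}$.

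\textbf{Step 3: conclude the ranking.} By Assumption \ref{asm: candidate set contain all edited subjects} there are exactly $N$ edited candidates. Since $\arccos$ is strictly decreasing on $[0,1]$, every edited candidate scores strictly above every non-edited one. The Top-$N$ selection therefore returns exactly $\mathcal{S}_{\rm ed}$. The $N$th largest score comes from an edited subject and the $(N+1)$th from a non-edited one, so the claimed gap $\arccos \rho_{(N+1)}^{c\downarrow} - \arccos \rho_{(N)}^{c\downarrow} \ge \delta\theta$ follows directly.

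\textbf{Main obstacle: well-definedness.} The only delicate point is that $\rho_i^c$ requires $\mathbf{P}\mathbf{k}_i^c \ne \mathbf{0}$. I would note that this is already implicit in Assumption \ref{asm: separable under the generic template, P}, since Definition \ref{def: theta-separable} divides by the norms of the projected vectors. No Woodbury or invertibility-of-$\mathbf{C}$ argument is needed here: Assumption \ref{asm: positive definite covariance} is not required, and $\mathbf{I}+\mathbf{K}^\top\mathbf{P}\mathbf{K}$ is positive definite automatically.
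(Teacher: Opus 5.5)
Your proposal is correct and follows the paper's route: the paper's proof only says it is analogous to Theorem \ref{thm: subject recovery, memit}, which is exactly your sequence — rewrite $\rho_i^c$ as the cosine of the angle between $\mathbf{P}\mathbf{k}_i^c$ and $\col(\mathbf{V}_N)$, use Theorem \ref{thm: recovery of linear subspace, alphaedit} to get $\col(\mathbf{V}_N)=\col(\mathbf{P}\mathbf{K})$, then apply Assumptions \ref{asm: separable under the generic template, P} and \ref{asm: candidate set contain all edited subjects} to obtain the ranking and the gap. Your remarks that $\mathbf{P}\mathbf{k}_i^c \neq \mathbf{0}$ is implicit in the separability assumption, and that Assumption \ref{asm: positive definite covariance} is not needed, are accurate points the paper leaves unstated.
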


\begin{proof}

Similar to the proof of Theorem \ref{thm: subject recovery, memit}. 

\end{proof}

\section{Theoretical Guarantees for Mechanism and Properties of Subspace Camouflage}\label{sec: mechanism and properties of subspace camouflage}

This section first establishes the proof of the solutions for the constraint problems in Theorem \ref{thm: solving the constraints} under appropriate assumptions, then shows that these solutions degenerate into the original weight updates of ROME, MEMIT, and AlphaEdit in Remark \ref{rem: degeneration into original weight update}. Finally, we provide Theorem \ref{thm: indistinguishability of subspace camouflage and original editing} on whether the subspace camouflage defense method is applied or not, and Theorem \ref{thm: non-recoverability of the key matrix for subspace camouflage} on the non-recoverability of the key matrix $\mathbf{K}$. These two properties exhibit the robustness of our defense strategy against any subspace-based attack. Thus, this \textit{subspace camouflage} defense strategy should be regarded as a general defense mechanism, rather than one specifically tailored to our \textit{KSTER} attack. 

We first recall the three constraint problems we establish for subspace camouflage in \Sref{subsec: subspace camouflage} (for MEMIT) and Appendix \ref{sec: extension_defense} (for ROME and AlphaEdit): 

\begin{equation}\label{eq: subspace camouflage, total, constraints def}
 \begin{aligned}
  &\begin{cases}
    \row(\Delta \mathbf{W}_{\rm defense, MEMIT} \mathbf{C}) &\subseteq \col(\tilde{\mathbf{K}}) \\
    \Delta \mathbf{W}_{\rm defense, MEMIT} \mathbf{K} &= \Delta \mathbf{W}_{\rm MEMIT} \mathbf{K} 
  \end{cases}
  , \\
  &\begin{cases}
    \row(\Delta \mathbf{W}_{\rm defense, ROME} \mathbf{C}) &\subseteq \spn(\{\tilde{\mathbf{k}}_*\}) \\
    \Delta \mathbf{W}_{\rm defense, ROME} \mathbf{k}_* &= \Delta \mathbf{W}_{\rm ROME} \mathbf{k}_*
  \end{cases}
  , \\
  &\begin{cases}
    \row(\Delta \mathbf{W}_{\rm defense, AlphaEdit}) &\subseteq \col(\mathbf{P} \tilde{\mathbf{K}}) \\
    \Delta \mathbf{W}_{\rm defense, AlphaEdit} \mathbf{K} &= \Delta \mathbf{W}_{\rm AlphaEdit} \mathbf{K}
  \end{cases}
  , 
 \end{aligned}
\end{equation}

and their solutions are listed without proof: 

\begin{equation}\label{eq: defense, total}
 \begin{aligned}
  \Delta \mathbf{W}_{\rm defense, MEMIT} &= \Delta \mathbf{W}_{\rm MEMIT} \mathbf{K} \left( \tilde{\mathbf{K}}^\top \mathbf{C}^{-1} \mathbf{K} \right)^{-1} \tilde{\mathbf{K}}^\top \mathbf{C}^{-1}
  , \\
  \Delta \mathbf{W}_{\rm defense, ROME} &= \frac{\Delta \mathbf{W}_{\rm ROME} \mathbf{k}_* \tilde{\mathbf{k}}_*^\top \mathbf{C}^{-1}}{\tilde{\mathbf{k}}_*^\top \mathbf{C}^{-1} \mathbf{k}_*}
  , \\
  \Delta \mathbf{W}_{\rm defense, AlphaEdit} &= \Delta \mathbf{W}_{\rm AlphaEdit} \mathbf{K} \left( \tilde{\mathbf{K}}^\top \mathbf{P} \mathbf{K} \right)^{-1} \tilde{\mathbf{K}}^\top \mathbf{P}
  . 
 \end{aligned}
\end{equation}

Below, we ignore the regularizer $\lambda$. 

The first theorem proves the existence and uniqueness of their solutions under appropriate assumptions: 

\begin{theorem}\label{thm: solving the constraints}

Suppose Assumption \ref{asm: positive definite covariance} holds. Further assume that both $\tilde{\mathbf{K}}^\top \mathbf{C}^{-1} \mathbf{K}$ and $\tilde{\mathbf{K}}^\top \mathbf{P} \mathbf{K}$ are invertible, 
and $\tilde{\mathbf{k}}_*^\top \mathbf{C}^{-1} \mathbf{k}_* \ne 0$. 
Then for the three constraint problems in (\ref{eq: subspace camouflage, total, constraints def}) ((\ref{eq: subspace camouflage, rome, constraints def}) for ROME, (\ref{eq: subspace camouflage, memit, constraints def}) for MEMIT and (\ref{eq: subspace camouflage, alphaedit, constraints def}) for AlphaEdit), their solutions exist and are unique, given by (\ref{eq: defense, total}) ((\ref{eq: defense, rome}), (\ref{eq: defense, memit}) and (\ref{eq: defense, alphaedit}) respectively). 

\end{theorem}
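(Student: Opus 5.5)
The plan is to treat all three constraint problems with one argument: parametrize the row-space condition, then use the interpolation condition to pin down the free parameter. For MEMIT, $\mathbf{C}$ is invertible (Assumption \ref{asm: positive definite covariance}), so $\row(\Delta \mathbf{W}_{\rm defense}\mathbf{C}) \subseteq \col(\tilde{\mathbf{K}})$ holds if and only if there is a matrix $\mathbf{X} \in \mathbb{R}^{d_{\rm out}\times N}$ with $\Delta \mathbf{W}_{\rm defense}\mathbf{C} = \mathbf{X}\tilde{\mathbf{K}}^\top$. Indeed, every row of $\Delta \mathbf{W}_{\rm defense}\mathbf{C}$ must then be a linear combination of the columns of $\tilde{\mathbf{K}}$. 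Equivalently, $\Delta \mathbf{W}_{\rm defense} = \mathbf{X}\tilde{\mathbf{K}}^\top\mathbf{C}^{-1}$. The problem is therefore equivalent to finding $\mathbf{X}$.

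Substituting into the second constraint gives the linear equation $\mathbf{X}(\tilde{\mathbf{K}}^\top\mathbf{C}^{-1}\mathbf{K}) = \Delta \mathbf{W}\mathbf{K}$. Since $\tilde{\mathbf{K}}^\top\mathbf{C}^{-1}\mathbf{K}$ is assumed invertible, this equation has the unique solution $\mathbf{X} = \Delta \mathbf{W}\mathbf{K}(\tilde{\mathbf{K}}^\top\mathbf{C}^{-1}\mathbf{K})^{-1}$. That yields existence and the formula (\ref{eq: defense, memit}).

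For uniqueness of $\Delta \mathbf{W}_{\rm defense}$ itself, and not only of $\mathbf{X}$, there are two routes.
\begin{itemize}
\item If two matrices $\mathbf{X}$ gave the same $\Delta\mathbf{W}_{\rm defense}$, they would give the same product with $\mathbf{K}$. Any feasible $\Delta\mathbf{W}_{\rm defense}$ admits some representing $\mathbf{X}$, and every such $\mathbf{X}$ must satisfy the linear equation above, whose solution is unique. Hence $\Delta\mathbf{W}_{\rm defense}$ is determined.
\item Alternatively, the difference $\mathbf{D}$ of two feasible solutions satisfies $\mathbf{D}=\mathbf{Y}\tilde{\mathbf{K}}^\top\mathbf{C}^{-1}$ and $\mathbf{D}\mathbf{K}=\mathbf{O}$. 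Then $\mathbf{Y}(\tilde{\mathbf{K}}^\top\mathbf{C}^{-1}\mathbf{K})=\mathbf{O}$, so $\mathbf{Y}=\mathbf{O}$ and $\mathbf{D}=\mathbf{O}$.
\end{itemize}
I will present the second route because it is cleaner.

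ROME is the case $N=1$ of the same argument. Here $\tilde{\mathbf{K}}=\tilde{\mathbf{k}}_*$, and the invertible $1\times 1$ matrix is the nonzero scalar $\tilde{\mathbf{k}}_*^\top\mathbf{C}^{-1}\mathbf{k}_*$. This yields (\ref{eq: defense, rome}).

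For AlphaEdit, no $\mathbf{C}$ appears. The condition $\row(\Delta\mathbf{W}_{\rm defense})\subseteq\col(\mathbf{P}\tilde{\mathbf{K}})$ is equivalent to $\Delta\mathbf{W}_{\rm defense}=\mathbf{X}(\mathbf{P}\tilde{\mathbf{K}})^\top=\mathbf{X}\tilde{\mathbf{K}}^\top\mathbf{P}$, using $\mathbf{P}^\top=\mathbf{P}$. The interpolation constraint then becomes $\mathbf{X}\tilde{\mathbf{K}}^\top\mathbf{P}\mathbf{K}=\Delta\mathbf{W}\mathbf{K}$. Invertibility of $\tilde{\mathbf{K}}^\top\mathbf{P}\mathbf{K}$ gives (\ref{eq: defense, alphaedit}), with uniqueness following as before.

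The only step needing care is the equivalence between the row-space inclusion and the factorization $\mathbf{X}\tilde{\mathbf{K}}^\top$. I would state it explicitly as follows. A matrix $\mathbf{A}$ satisfies $\row(\mathbf{A})\subseteq\col(\mathbf{B})$ if and only if $\mathbf{A}^\top = \mathbf{B}\mathbf{Z}$ for some $\mathbf{Z}$, that is, if and only if $\mathbf{A}=\mathbf{Z}^\top\mathbf{B}^\top$. This holds because each column of $\mathbf{A}^\top$ lies in $\col(\mathbf{B})$, and the converse is immediate. No full-rank assumption on $\tilde{\mathbf{K}}$ is needed here, since the parametrization does not have to be injective for the uniqueness argument above to work.
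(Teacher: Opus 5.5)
Your proposal is correct and follows essentially the same route as the paper: write the row-space constraint as $\Delta\mathbf{W}_{\rm defense} = \mathbf{F}\tilde{\mathbf{K}}^\top\mathbf{C}^{-1}$ (respectively $\mathbf{F}\tilde{\mathbf{K}}^\top\mathbf{P}$ for AlphaEdit), substitute into the interpolation constraint, and invert $\tilde{\mathbf{K}}^\top\mathbf{C}^{-1}\mathbf{K}$ (respectively $\tilde{\mathbf{K}}^\top\mathbf{P}\mathbf{K}$), with ROME as the $N=1$ case. Your explicit difference argument for uniqueness is a slightly more careful version of what the paper gets from its necessity-then-sufficiency structure.
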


\begin{proof}

We begin with the proof for MEMIT. We first prove the necessity, then the sufficiency, by assuming the existence of the solution. 

The first constraint of (\ref{eq: subspace camouflage, memit, constraints def}) is equivalent to the following statement: there exists a matrix $\mathbf{F}_{\rm MEMIT} \in \mathbb{R}^{d_{\rm out} \times N}$ such that 

\begin{equation}
  \Delta \mathbf{W}_{\rm defense, MEMIT} \mathbf{C} = \mathbf{F}_{\rm MEMIT} \tilde{\mathbf{K}}^\top . 
\end{equation}

From Assumption \ref{asm: positive definite covariance}, $\Delta \mathbf{W}_{\rm defense, MEMIT} = \mathbf{F}_{\rm MEMIT} \tilde{\mathbf{K}}^\top \mathbf{C}^{-1}$. Then, by plugging into the second constraint, we obtain: 

\begin{equation}
  \Delta \mathbf{W}_{\rm defense, MEMIT} \mathbf{K} = \mathbf{F}_{\rm MEMIT} \tilde{\mathbf{K}}^\top \mathbf{C}^{-1} \mathbf{K} = \Delta \mathbf{W}_{\rm MEMIT} \mathbf{K}. 
\end{equation}

Since $\tilde{\mathbf{K}}^\top \mathbf{C}^{-1} \mathbf{K}$ is assumed to be invertible, by right multiplying $\left( \tilde{\mathbf{K}}^\top \mathbf{C}^{-1} \mathbf{K} \right)^{-1}$ we solve $\mathbf{F}_{\rm MEMIT}$:

\begin{equation}
  \mathbf{F}_{\rm MEMIT} = \Delta \mathbf{W}_{\rm MEMIT} \mathbf{K} \left( \tilde{\mathbf{K}}^\top \mathbf{C}^{-1} \mathbf{K} \right)^{-1} . 
\end{equation}

Thus, we obtain the final result: 

\begin{equation}
  \Delta \mathbf{W}_{\rm defense, MEMIT} = \Delta \mathbf{W}_{\rm MEMIT} \mathbf{K} \left( \tilde{\mathbf{K}}^\top \mathbf{C}^{-1} \mathbf{K} \right)^{-1} \tilde{\mathbf{K}}^\top \mathbf{C}^{-1} . 
\end{equation}

The proof of necessity for MEMIT is completed. The proof of sufficiency follows directly from verifying the two constraints. 

Similar techniques can be applied to ROME and AlphaEdit: 

For ROME we have some $\mathbf{f}_{\rm ROME} \in \mathbb{R}^{d_{\rm out}}$ such that $\Delta \mathbf{W}_{\rm defense, ROME} \mathbf{C} = \mathbf{f}_{\rm ROME} \tilde{\mathbf{k}}_*^\top$, combining with the second constraint we have $\mathbf{f}_{\rm ROME} = \frac{\Delta \mathbf{W}_{\rm ROME} \mathbf{k}_*}{\tilde{\mathbf{k}}_*^\top \mathbf{C}^{-1} \mathbf{k}_*}$, giving $\Delta \mathbf{W}_{\rm defense, ROME} = \frac{\Delta \mathbf{W}_{\rm ROME} \mathbf{k}_* \tilde{\mathbf{k}}_*^\top \mathbf{C}^{-1}}{\tilde{\mathbf{k}}_*^\top \mathbf{C}^{-1} \mathbf{k}_*}$. For AlphaEdit we have some $\mathbf{F}_{\rm AlphaEdit} \in \mathbb{R}^{d_{\rm out} \times N}$ such that $\Delta \mathbf{W}_{\rm defense, AlphaEdit} = \mathbf{F}_{\rm AlphaEdit} (\mathbf{P} \tilde{\mathbf{K}})^\top = \mathbf{F}_{\rm AlphaEdit}  \tilde{\mathbf{K}}^\top \mathbf{P}$, by combining with the second constraint the $\mathbf{F}_{\rm AlphaEdit}$ is given by $\mathbf{F}_{\rm AlphaEdit} = \Delta \mathbf{W}_{\rm AlphaEdit} \mathbf{K} \left( \tilde{\mathbf{K}}^\top \mathbf{P} \mathbf{K} \right)^{-1}$, thus $\Delta \mathbf{W}_{\rm defense, AlphaEdit} = \Delta \mathbf{W}_{\rm AlphaEdit} \mathbf{K} \left( \tilde{\mathbf{K}}^\top \mathbf{P} \mathbf{K} \right)^{-1} \tilde{\mathbf{K}}^\top \mathbf{P}$. The proof of sufficiency for both ROME and AlphaEdit follows directly from verifying the constraints, respectively. This finally completes the proof. 

\end{proof}

\begin{remark}\label{rem: degeneration into original weight update} We show that the new weight updates of our subspace camouflage $\Delta \mathbf{W}_{{\rm defense, }(\cdot)}$ degenerate into the original weight updates $\Delta \mathbf{W}_{(\cdot)}$ when $\alpha \to 0^+$ ($(\cdot)$ denotes the specific editing method), under the assumptions of Theorem \ref{thm: solving the constraints}. Here assumption of invertibility becomes: there exists some $\epsilon > 0$ such that for any $\alpha^\prime \in [0, \epsilon)$, $\left( \tilde{\mathbf{K}}(\alpha^\prime)^\top \mathbf{C}^{-1} \mathbf{K} \right)^{-1}$ and $\left( \tilde{\mathbf{K}}(\alpha^\prime)^\top \mathbf{P} \mathbf{K} \right)^{-1}$ are well-defined, $\tilde{\mathbf{k}}_*(\alpha)^\top \mathbf{C}^{-1} \mathbf{k}_* \ne 0$. (Recall Eq. (\ref{eq: def of decoy K}) that $\tilde{\mathbf{K}}$ and $\mathbf{k}_*$ are functions of $\alpha$. )

By rewriting the solutions in (\ref{eq: defense, total}): 

\begin{equation}\label{eq: defense, total, rephrase}
 \begin{aligned}
  \Delta \mathbf{W}_{\rm defense, MEMIT} &= \Delta \mathbf{W}_{\rm MEMIT} \mathbf{K} \left( \tilde{\mathbf{K}}^\top \mathbf{C}^{-1} \mathbf{K} \right)^{-1} \tilde{\mathbf{K}}^\top \mathbf{C}^{-1} \\ 
  &= \mathbf{R} \left( \mathbf{I} + \mathbf{K}^\top \mathbf{C}^{-1} \mathbf{K} \right)^{-1} \mathbf{K}^\top \mathbf{C}^{-1} \mathbf{K} \left( \tilde{\mathbf{K}}^\top \mathbf{C}^{-1} \mathbf{K} \right)^{-1} \tilde{\mathbf{K}}^\top \mathbf{C}^{-1} 
  , \\
  \Delta \mathbf{W}_{\rm defense, ROME} &= \frac{\Delta \mathbf{W}_{\rm ROME} \mathbf{k}_* \tilde{\mathbf{k}}_*^\top \mathbf{C}^{-1}}{\tilde{\mathbf{k}}_*^\top \mathbf{C}^{-1} \mathbf{k}_*} \\
  &= \frac{\mathbf{r}_* \tilde{\mathbf{k}}_*^\top \mathbf{C}^{-1}}{\tilde{\mathbf{k}}_*^\top \mathbf{C}^{-1} \mathbf{k}_*}
  , \\
  \Delta \mathbf{W}_{\rm defense, AlphaEdit} &= \Delta \mathbf{W}_{\rm AlphaEdit} \mathbf{K} \left( \tilde{\mathbf{K}}^\top \mathbf{P} \mathbf{K} \right)^{-1} \tilde{\mathbf{K}}^\top \mathbf{P} \\ 
  &= \mathbf{R} \left( \mathbf{I} + \mathbf{K}^\top \mathbf{P} \mathbf{K} \right)^{-1} \mathbf{K}^\top \mathbf{P} \mathbf{K} \left( \tilde{\mathbf{K}}^\top \mathbf{P} \mathbf{K} \right)^{-1} \tilde{\mathbf{K}}^\top \mathbf{P} . 
 \end{aligned}
\end{equation}

If $\alpha \to 0^+$, then $\tilde{\mathbf{K}} \to \mathbf{K}$, $\tilde{\mathbf{k}}_* \to \mathbf{k}_*$, consequently we have: 

\begin{equation}\label{eq: defense, total, limit behavior}
 \begin{aligned}
  \Delta \mathbf{W}_{\rm defense, MEMIT} &\to \mathbf{R}  \left( \mathbf{I} + \mathbf{K}^\top \mathbf{C}^{-1} \mathbf{K} \right)^{-1} \mathbf{K}^\top \mathbf{C}^{-1} \mathbf{K} \left( \mathbf{K}^\top \mathbf{C}^{-1} \mathbf{K} \right)^{-1} \mathbf{K}^\top \mathbf{C}^{-1} \\ 
  &= \mathbf{R} \left( \mathbf{I} + \mathbf{K}^\top \mathbf{C}^{-1} \mathbf{K} \right)^{-1} \mathbf{K}^\top \mathbf{C}^{-1} = \Delta \mathbf{W}_{\rm MEMIT}
  , \\
  \Delta \mathbf{W}_{\rm defense, ROME} &\to \frac{\mathbf{r}_* \mathbf{k}_*^\top \mathbf{C}^{-1}}{\mathbf{k}_*^\top \mathbf{C}^{-1} \mathbf{k}_*} = \Delta \mathbf{W}_{\rm ROME}
  , \\
  \Delta \mathbf{W}_{\rm defense, AlphaEdit} &\to \mathbf{R} \left( \mathbf{I} + \mathbf{K}^\top \mathbf{P} \mathbf{K} \right)^{-1} \mathbf{K}^\top \mathbf{P} \mathbf{K} \left( \mathbf{K}^\top \mathbf{P} \mathbf{K} \right)^{-1} \mathbf{K}^\top \mathbf{P} \\ 
  &= \mathbf{R} \left( \mathbf{I} + \mathbf{K}^\top \mathbf{P} \mathbf{K} \right)^{-1} \mathbf{K}^\top \mathbf{P} = \Delta \mathbf{W}_{\rm AlphaEdit}. 
 \end{aligned}
\end{equation}

This completes our illustration of degeneration. 

\end{remark}

For notation simplicity, we write the original (undefended) weight updates $\Delta \mathbf{W}_{(\cdot)} = \Delta \mathbf{W}_{(\cdot)} \left( \mathbf{K}, \mathbf{R} \right)$ as functions of the key matrix $\mathbf{K}$ and the residual matrix $\mathbf{R}$, and the camouflaged weight updates $\Delta \mathbf{W}_{\rm defense, {(\cdot)}} = \Delta \mathbf{W}_{\rm defense, {(\cdot)}}\left( \mathbf{K}, \tilde{\mathbf{K}}, \mathbf{R} \right)$ as functions of the key matrix $\mathbf{K}$, the aggregated camouflage key matrix $\tilde{\mathbf{K}}$ and the residual matrix $\mathbf{R}$. Here, $(\cdot)$ denotes the specific editing method, which can be ROME, MEMIT, or AlphaEdit. For ROME, $\mathbf{K}$ and $\mathbf{R}$ reduce to $\mathbf{k}_*$ and $\mathbf{r}_*$ respectively. 

The following theorem shows that if $\mathbf{R}$ is unknown, one cannot identify that whether the subspace camouflage defense strategy has been applied to the weight change. Specifically, given the camouflaged weight update $\Delta \mathbf{W}_{{\rm defense, }(\cdot)}$, one can construct a new residual matrix $\mathbf{R}{(\cdot)}^\prime$, such that $\Delta \mathbf{W}_{{\rm defense, }(\cdot)}$ is mathematically equivalent to the original weight update $\Delta \mathbf{W}_{(\cdot)}$ that takes $\tilde{\mathbf{K}}$ and $\mathbf{R}_{(\cdot)}^\prime$ as input. 

\begin{theorem}\label{thm: indistinguishability of subspace camouflage and original editing} Indistinguishability of subspace camouflage and original editing. 

Suppose Assumption \ref{asm: positive definite covariance} holds. Given any residual matrix $\mathbf{R} \in \mathbb{R}^{d_{\rm out} \times N}$, true key matrix $\mathbf{K} \in \mathbb{R}^{d_{\rm in} \times N}$ and aggregated camouflage key matrix $\tilde{\mathbf{K}} \in \mathbb{R}^{d_{\rm in} \times N}$ such that both $\tilde{\mathbf{K}}^\top \mathbf{C}^{-1} \mathbf{K}$ and $\tilde{\mathbf{K}}^\top \mathbf{P} \mathbf{K}$ are invertible, there exists $\mathbf{R}_{(\cdot)}^\prime$ such that: taking $\tilde{\mathbf{K}}$ as the new key matrix and $\mathbf{R}_{(\cdot)}^\prime$ as the new residual matrix, the new update weight of the original algorithm (ROME, MEMIT or AlphaEdit) $\Delta \mathbf{W}_{(\cdot)}\left( \tilde{\mathbf{K}}, \mathbf{R}_{(\cdot)}^\prime \right)$ exactly equals to the camouflaged weight $\Delta \mathbf{W}_{{\rm defense, }(\cdot)}$. Specifically, $\mathbf{r}_{\rm ROME}^\prime$,  $\mathbf{R}_{\rm MEMIT}^\prime$ and $\mathbf{R}_{\rm AlphaEdit}^\prime$ are given by: 

\begin{equation}
  \begin{aligned}
    \mathbf{r}_{\rm ROME}^\prime &= \frac{\tilde{\mathbf{k}}_*^\top \mathbf{C}^{-1} \tilde{\mathbf{k}}_*}{\tilde{\mathbf{k}}_*^\top \mathbf{C}^{-1} \mathbf{k}_*} \cdot \mathbf{r}_* , \\
    \mathbf{R}_{\rm MEMIT}^\prime &= \mathbf{R} \left( \mathbf{I} + \mathbf{K}^\top \mathbf{C}^{-1} \mathbf{K} \right)^{-1} \mathbf{K}^\top \mathbf{C}^{-1} \mathbf{K} \left( \tilde{\mathbf{K}}^\top \mathbf{C}^{-1} \mathbf{K} \right)^{-1} \left( \mathbf{I} + \tilde{\mathbf{K}}^\top \mathbf{C}^{-1} \tilde{\mathbf{K}} \right) , \\
    \mathbf{R}_{\rm AlphaEdit}^\prime &= \mathbf{R} \left( \mathbf{I} + \mathbf{K}^\top \mathbf{P} \mathbf{K} \right)^{-1} \mathbf{K}^\top \mathbf{P} \mathbf{K} \left( \tilde{\mathbf{K}}^\top \mathbf{P} \mathbf{K} \right)^{-1} \left( \mathbf{I} + \tilde{\mathbf{K}}^\top \mathbf{P} \tilde{\mathbf{K}} \right) . \\
  \end{aligned}
\end{equation}

\end{theorem}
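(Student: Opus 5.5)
The plan is a direct verification: plug the proposed $\mathbf{R}'$ into the Woodbury-rewritten forms of the original updates and compare with the rewritten camouflaged updates in (\ref{eq: defense, total, rephrase}). For MEMIT, Eq.~(\ref{eq: woodbury, derivation, memit}) holds for \emph{any} key matrix under Assumption~\ref{asm: positive definite covariance}. It needs only that $\mathbf{I}+\tilde{\mathbf{K}}^\top\mathbf{C}^{-1}\tilde{\mathbf{K}}$ is invertible, which is automatic because this matrix is symmetric positive definite. Hence
\begin{equation*}
\Delta\mathbf{W}_{\rm MEMIT}(\tilde{\mathbf{K}},\mathbf{R}') = \mathbf{R}'\left(\mathbf{I}+\tilde{\mathbf{K}}^\top\mathbf{C}^{-1}\tilde{\mathbf{K}}\right)^{-1}\tilde{\mathbf{K}}^\top\mathbf{C}^{-1}.
\end{equation*}
Substituting $\mathbf{R}'_{\rm MEMIT}$, the trailing factor $(\mathbf{I}+\tilde{\mathbf{K}}^\top\mathbf{C}^{-1}\tilde{\mathbf{K}})$ cancels against the inverse. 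What remains is $\mathbf{R}(\mathbf{I}+\mathbf{K}^\top\mathbf{C}^{-1}\mathbf{K})^{-1}\mathbf{K}^\top\mathbf{C}^{-1}\mathbf{K}(\tilde{\mathbf{K}}^\top\mathbf{C}^{-1}\mathbf{K})^{-1}\tilde{\mathbf{K}}^\top\mathbf{C}^{-1}$. This is exactly the second line of the MEMIT entry in (\ref{eq: defense, total, rephrase}), i.e.\ $\Delta\mathbf{W}_{\rm defense, MEMIT}$. That rewriting itself only used $\Delta\mathbf{W}_{\rm MEMIT}\mathbf{K} = \mathbf{R}(\mathbf{I}+\mathbf{K}^\top\mathbf{C}^{-1}\mathbf{K})^{-1}\mathbf{K}^\top\mathbf{C}^{-1}\mathbf{K}$, again from (\ref{eq: woodbury, derivation, memit}).

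AlphaEdit is handled identically, with $\mathbf{C}^{-1}$ replaced by $\mathbf{P}$. Eq.~(\ref{eq: woodbury, derivation, alphaedit}) gives $\Delta\mathbf{W}_{\rm AlphaEdit}(\tilde{\mathbf{K}},\mathbf{R}') = \mathbf{R}'(\mathbf{I}+\tilde{\mathbf{K}}^\top\mathbf{P}\tilde{\mathbf{K}})^{-1}\tilde{\mathbf{K}}^\top\mathbf{P}$. Here $\mathbf{I}+\tilde{\mathbf{K}}^\top\mathbf{P}\tilde{\mathbf{K}} = \mathbf{I}+(\mathbf{P}\tilde{\mathbf{K}})^\top(\mathbf{P}\tilde{\mathbf{K}})$ is positive definite, since $\mathbf{P}$ is a symmetric idempotent. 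The same cancellation then reproduces the AlphaEdit line of (\ref{eq: defense, total, rephrase}). Invertibility of $\tilde{\mathbf{K}}^\top\mathbf{P}\mathbf{K}$ and $\tilde{\mathbf{K}}^\top\mathbf{C}^{-1}\mathbf{K}$ is assumed, so every expression is well defined.

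For ROME the computation is scalar. $\Delta\mathbf{W}_{\rm ROME}(\tilde{\mathbf{k}}_*,\mathbf{r}') = \mathbf{r}'\tilde{\mathbf{k}}_*^\top\mathbf{C}^{-1}/(\tilde{\mathbf{k}}_*^\top\mathbf{C}^{-1}\tilde{\mathbf{k}}_*)$, where the denominator is positive whenever $\tilde{\mathbf{k}}_*\neq\mathbf{0}$. Inserting $\mathbf{r}'_{\rm ROME}$ gives $\mathbf{r}_*\tilde{\mathbf{k}}_*^\top\mathbf{C}^{-1}/(\tilde{\mathbf{k}}_*^\top\mathbf{C}^{-1}\mathbf{k}_*)$, which matches the ROME line of (\ref{eq: defense, total, rephrase}). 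That line used $\Delta\mathbf{W}_{\rm ROME}\mathbf{k}_* = \mathbf{r}_*$.

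There is no real obstacle. The only point needing care is the invertibility bookkeeping: we should note explicitly that the Woodbury forms remain valid for an arbitrary replacement key matrix $\tilde{\mathbf{K}}$, and that $\tilde{\mathbf{k}}_*\neq \mathbf{0}$ is implied by $\tilde{\mathbf{k}}_*^\top\mathbf{C}^{-1}\mathbf{k}_*\neq0$.
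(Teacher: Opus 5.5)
Your proposal is correct and is essentially the paper's proof: the paper runs the same computation in reverse. It inserts $(\mathbf{I}+\tilde{\mathbf{K}}^\top\mathbf{C}^{-1}\tilde{\mathbf{K}})(\mathbf{I}+\tilde{\mathbf{K}}^\top\mathbf{C}^{-1}\tilde{\mathbf{K}})^{-1}$ (with $\mathbf{P}$ in place of $\mathbf{C}^{-1}$ for AlphaEdit, and the scalar $\tilde{\mathbf{k}}_*^\top\mathbf{C}^{-1}\tilde{\mathbf{k}}_*$ for ROME) into the rewritten camouflaged update, then reads off the Woodbury form of the original algorithm with key $\tilde{\mathbf{K}}$ and residual $\mathbf{R}'$. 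Your invertibility bookkeeping, via positive definiteness of $\mathbf{I}+\tilde{\mathbf{K}}^\top\mathbf{C}^{-1}\tilde{\mathbf{K}}$ and $\mathbf{I}+(\mathbf{P}\tilde{\mathbf{K}})^\top(\mathbf{P}\tilde{\mathbf{K}})$, is if anything a cleaner justification than the paper's appeal to full column rank of $\tilde{\mathbf{K}}$ and $\mathbf{P}\mathbf{K}$.
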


\begin{proof}

The invertibility of $\tilde{\mathbf{K}}^\top \mathbf{P} \mathbf{K}$ implies that both $\tilde{\mathbf{K}}$ and $\mathbf{P} \mathbf{K}$ have full column rank. Thus by Eq. (\ref{eq: woodbury, derivation, memit}) and Eq. (\ref{eq: woodbury, derivation, alphaedit}), $\Delta \mathbf{W}_{\rm MEMIT}\left( \tilde{\mathbf{K}}, \mathbf{R}^\prime \right) = \mathbf{R}^\prime \left( \mathbf{I} + \tilde{\mathbf{K}}^\top \mathbf{C}^{-1} \tilde{\mathbf{K}} \right)^{-1} \tilde{\mathbf{K}}^\top \mathbf{C}^{-1}$, $\Delta \mathbf{W}_{\rm AlphaEdit}\left( \tilde{\mathbf{K}}, \mathbf{R}^\prime \right) = \mathbf{R}^\prime \left( \mathbf{I} + \tilde{\mathbf{K}}^\top \mathbf{P} \tilde{\mathbf{K}} \right)^{-1} \tilde{\mathbf{K}}^\top \mathbf{P}$. 

From Eq. (\ref{eq: defense, total, rephrase}), we reorganize the new weight updates of our subspace camouflage: 

\begin{equation}\label{eq: defense, ROME, reorganize}
 \begin{aligned}
  \Delta \mathbf{W}_{\rm defense, ROME} &= \frac{\mathbf{r}_* \tilde{\mathbf{k}}_*^\top \mathbf{C}^{-1}}{\tilde{\mathbf{k}}_*^\top \mathbf{C}^{-1} \mathbf{k}_*} \\
  &= \left[ \frac{\tilde{\mathbf{k}}_*^\top \mathbf{C}^{-1} \tilde{\mathbf{k}}_*}{\tilde{\mathbf{k}}_*^\top \mathbf{C}^{-1} \mathbf{k}_*} \cdot \mathbf{r}_* \right] \cdot \frac{\tilde{\mathbf{k}}_*^\top \mathbf{C}^{-1}}{\tilde{\mathbf{k}}_*^\top \mathbf{C}^{-1} \tilde{\mathbf{k}}_*} \\
  &= \Delta \mathbf{W}_{\rm ROME}\left( \tilde{\mathbf{k}}_*, \frac{\tilde{\mathbf{k}}_*^\top \mathbf{C}^{-1} \tilde{\mathbf{k}}_*}{\tilde{\mathbf{k}}_*^\top \mathbf{C}^{-1} \mathbf{k}_*} \cdot \mathbf{r}_* \right) , 
 \end{aligned}
\end{equation}

\begin{equation}\label{eq: defense, MEMIT, reorganize}
 \begin{aligned}
  \Delta \mathbf{W}_{\rm defense, MEMIT} &= \mathbf{R} \left( \mathbf{I} + \mathbf{K}^\top \mathbf{C}^{-1} \mathbf{K} \right)^{-1} \mathbf{K}^\top \mathbf{C}^{-1} \mathbf{K} \left( \tilde{\mathbf{K}}^\top \mathbf{C}^{-1} \mathbf{K} \right)^{-1} \tilde{\mathbf{K}}^\top \mathbf{C}^{-1} \\ 
  &= \left[ \mathbf{R} \left( \mathbf{I} + \mathbf{K}^\top \mathbf{C}^{-1} \mathbf{K} \right)^{-1} \mathbf{K}^\top \mathbf{C}^{-1} \mathbf{K} \left( \tilde{\mathbf{K}}^\top \mathbf{C}^{-1} \mathbf{K} \right)^{-1} \left( \mathbf{I} + \tilde{\mathbf{K}}^\top \mathbf{C}^{-1} \tilde{\mathbf{K}} \right) \right] \\ &\cdot \left( \mathbf{I} + \tilde{\mathbf{K}}^\top \mathbf{C}^{-1} \tilde{\mathbf{K}} \right)^{-1} \tilde{\mathbf{K}}^\top \mathbf{C}^{-1} \\ 
  &= \Delta \mathbf{W}_{\rm MEMIT}\left( \tilde{\mathbf{K}}, \mathbf{R} \left( \mathbf{I} + \mathbf{K}^\top \mathbf{C}^{-1} \mathbf{K} \right)^{-1} \mathbf{K}^\top \mathbf{C}^{-1} \mathbf{K} \left( \tilde{\mathbf{K}}^\top \mathbf{C}^{-1} \mathbf{K} \right)^{-1} \left( \mathbf{I} + \tilde{\mathbf{K}}^\top \mathbf{C}^{-1} \tilde{\mathbf{K}} \right) \right) ,  
 \end{aligned}
\end{equation}

\begin{equation}\label{eq: defense, AlphaEdit, reorganize}
 \begin{aligned}
  \Delta \mathbf{W}_{\rm defense, AlphaEdit} &= \mathbf{R} \left( \mathbf{I} + \mathbf{K}^\top \mathbf{P} \mathbf{K} \right)^{-1} \mathbf{K}^\top \mathbf{P} \mathbf{K} \left( \tilde{\mathbf{K}}^\top \mathbf{P} \mathbf{K} \right)^{-1} \tilde{\mathbf{K}}^\top \mathbf{P} \\
  &= \left[ \mathbf{R} \left( \mathbf{I} + \mathbf{K}^\top \mathbf{P} \mathbf{K} \right)^{-1} \mathbf{K}^\top \mathbf{P} \mathbf{K} \left( \tilde{\mathbf{K}}^\top \mathbf{P} \mathbf{K} \right)^{-1} \left( \mathbf{I} + \tilde{\mathbf{K}}^\top \mathbf{P} \tilde{\mathbf{K}} \right) \right] \\
  &\cdot \left( \mathbf{I} + \tilde{\mathbf{K}}^\top \mathbf{P} \tilde{\mathbf{K}} \right)^{-1} \tilde{\mathbf{K}}^\top \mathbf{P} \\
  &= \Delta \mathbf{W}_{\rm AlphaEdit}\left( \tilde{\mathbf{K}}, \mathbf{R} \left( \mathbf{I} + \mathbf{K}^\top \mathbf{P} \mathbf{K} \right)^{-1} \mathbf{K}^\top \mathbf{P} \mathbf{K} \left( \tilde{\mathbf{K}}^\top \mathbf{P} \mathbf{K} \right)^{-1} \left( \mathbf{I} + \tilde{\mathbf{K}}^\top \mathbf{P} \tilde{\mathbf{K}} \right) \right) . 
 \end{aligned}
\end{equation}

This completes our proof. 

\end{proof}

Finally, the following theorem shows that our subspace camouflage defense strategy leaks virtually no further information about $\mathbf{K}$ beyond what is revealed by $\tilde{\mathbf{K}}$, under the assumption that $\mathbf{R}$ is unknown. Specifically, given $\Delta \mathbf{W}_{{\rm defense, }(\cdot)}\left( \mathbf{K}, \tilde{\mathbf{K}}, \mathbf{R} \right)$, 
for almost all $\mathbf{K}^\prime$ of the same shape as $\tilde{\mathbf{K}}$ (detailed in Remark \ref{rem: why almost all}), we can construct an $\mathbf{R}_{(\cdot)}^{\prime}$ such that $\Delta \mathbf{W}_{{\rm defense, }(\cdot)}\left( \mathbf{K}^\prime, \tilde{\mathbf{K}}, \mathbf{R}_{(\cdot)}^{\prime} \right) = \Delta \mathbf{W}_{{\rm defense, }(\cdot)}\left( \mathbf{K}, \tilde{\mathbf{K}}, \mathbf{R} \right)$: 

\begin{theorem}\label{thm: non-recoverability of the key matrix for subspace camouflage} Non-recoverability of the true key matrix for subspace camouflage. 

Suppose Assumption \ref{asm: positive definite covariance} holds. Given any residual matrix $\mathbf{R} \in \mathbb{R}^{d_{\rm out} \times N}$, true key matrix $\mathbf{K} \in \mathbb{R}^{d_{\rm in} \times N}$ and aggregated camouflage key matrix $\tilde{\mathbf{K}} \in \mathbb{R}^{d_{\rm in} \times N}$, suppose both $\tilde{\mathbf{K}}^\top \mathbf{C}^{-1} \mathbf{K}$ and $\tilde{\mathbf{K}}^\top \mathbf{P} \mathbf{K}$ are invertible. Then $\forall \mathbf{K}^\prime \in \mathbb{R}^{d_{\rm in} \times N}$ such that both $\tilde{\mathbf{K}}^\top \mathbf{C}^{-1} \mathbf{K}^\prime$ and $\tilde{\mathbf{K}}^\top \mathbf{P} \mathbf{K}^\prime$ are invertible, there exists $\mathbf{R}_{(\cdot)}^\prime$ such that $\Delta \mathbf{W}_{{\rm defense, }(\cdot)}\left( \mathbf{K}^\prime, \tilde{\mathbf{K}}, \mathbf{R}_{(\cdot)}^{\prime} \right) = \Delta \mathbf{W}_{{\rm defense, }(\cdot)}\left( \mathbf{K}, \tilde{\mathbf{K}}, \mathbf{R} \right)$, where $\mathbf{r}_{\rm ROME}^\prime$,  $\mathbf{R}_{\rm MEMIT}^\prime$ and $\mathbf{R}_{\rm AlphaEdit}^\prime$ are given by: 

\begin{equation}
  \begin{aligned}
    \mathbf{r}_{\rm ROME}^\prime &= \frac{\tilde{\mathbf{k}}_*^\top \mathbf{C}^{-1} \mathbf{k}_*^\prime}{\tilde{\mathbf{k}}_*^\top \mathbf{C}^{-1} \mathbf{k}_*} \cdot \mathbf{r}_* , \\
    \mathbf{R}_{\rm MEMIT}^\prime &= \mathbf{R} \left( \mathbf{I} + \mathbf{K}^\top \mathbf{C}^{-1} \mathbf{K} \right)^{-1} \mathbf{K}^\top \mathbf{C}^{-1} \mathbf{K} \left( \tilde{\mathbf{K}}^\top \mathbf{C}^{-1} \mathbf{K} \right)^{-1} \left( \tilde{\mathbf{K}}^\top \mathbf{C}^{-1} \mathbf{K}^\prime \right) \left( \mathbf{K}^{\prime\top} \mathbf{C}^{-1} \mathbf{K}^\prime \right)^{-1} \left( \mathbf{I} + \mathbf{K}^{\prime\top} \mathbf{C}^{-1} \mathbf{K}^\prime \right) , \\
    \mathbf{R}_{\rm AlphaEdit}^\prime &= \mathbf{R} \left( \mathbf{I} + \mathbf{K}^\top \mathbf{P} \mathbf{K} \right)^{-1} \mathbf{K}^\top \mathbf{P} \mathbf{K} \left( \tilde{\mathbf{K}}^\top \mathbf{P} \mathbf{K} \right)^{-1} \left( \tilde{\mathbf{K}}^\top \mathbf{P} \mathbf{K}^\prime \right) \left( \mathbf{K}^{\prime\top} \mathbf{P} \mathbf{K}^\prime \right)^{-1} \left( \mathbf{I} + \mathbf{K}^{\prime\top} \mathbf{P} \mathbf{K}^\prime \right) . \\
  \end{aligned}
\end{equation}

\end{theorem}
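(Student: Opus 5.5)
The proof is a direct substitution. The key observation is that, by Eq.~(\ref{eq: defense, total, rephrase}), each camouflaged update factors as
\[
\Delta \mathbf{W}_{{\rm defense},(\cdot)}(\mathbf{K}, \tilde{\mathbf{K}}, \mathbf{R}) = \mathbf{F}_{(\cdot)}(\mathbf{K},\mathbf{R})\, \tilde{\mathbf{K}}^\top \mathbf{Q},
\]
with $\mathbf{Q} = \mathbf{C}^{-1}$ for ROME and MEMIT and $\mathbf{Q} = \mathbf{P}$ for AlphaEdit. The right factor $\tilde{\mathbf{K}}^\top\mathbf{Q}$ depends only on $\tilde{\mathbf{K}}$. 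The whole dependence on the true pair $(\mathbf{K},\mathbf{R})$ therefore sits in the left $d_{\rm out}\times N$ coefficient
\[
\mathbf{F}_{(\cdot)}(\mathbf{K},\mathbf{R}) = \mathbf{R}\left(\mathbf{I}+\mathbf{K}^\top\mathbf{Q}\mathbf{K}\right)^{-1}\mathbf{K}^\top\mathbf{Q}\mathbf{K}\left(\tilde{\mathbf{K}}^\top\mathbf{Q}\mathbf{K}\right)^{-1}.
\]
For ROME this reads $\mathbf{r}_*/(\tilde{\mathbf{k}}_*^\top\mathbf{C}^{-1}\mathbf{k}_*)$. It therefore suffices to show that for any admissible $\mathbf{K}'$, the proposed $\mathbf{R}'_{(\cdot)}$ gives $\mathbf{F}_{(\cdot)}(\mathbf{K}',\mathbf{R}') = \mathbf{F}_{(\cdot)}(\mathbf{K},\mathbf{R})$.

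Carry out MEMIT first. Substitute $\mathbf{K}'$ and $\mathbf{R}'_{\rm MEMIT}$ into $\mathbf{F}_{\rm MEMIT}$ and write $\mathbf{G}' = \mathbf{K}'^\top\mathbf{C}^{-1}\mathbf{K}'$. Then
\[
\mathbf{F}(\mathbf{K}',\mathbf{R}') = \mathbf{R}'\,(\mathbf{I}+\mathbf{G}')^{-1}\,\mathbf{G}'\,\bigl(\tilde{\mathbf{K}}^\top\mathbf{C}^{-1}\mathbf{K}'\bigr)^{-1}.
\]
The tail of $\mathbf{R}'$ is $\bigl(\tilde{\mathbf{K}}^\top\mathbf{C}^{-1}\mathbf{K}'\bigr)\mathbf{G}'^{-1}(\mathbf{I}+\mathbf{G}')$, and it cancels against $(\mathbf{I}+\mathbf{G}')^{-1}\mathbf{G}'\bigl(\tilde{\mathbf{K}}^\top\mathbf{C}^{-1}\mathbf{K}'\bigr)^{-1}$ in that order. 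What remains is the head of $\mathbf{R}'$, which is exactly $\mathbf{F}_{\rm MEMIT}(\mathbf{K},\mathbf{R})$. Multiplying on the right by $\tilde{\mathbf{K}}^\top\mathbf{C}^{-1}$ recovers $\Delta\mathbf{W}_{\rm defense,MEMIT}(\mathbf{K},\tilde{\mathbf{K}},\mathbf{R})$.

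The one technical point is that the inverses must exist.
\begin{itemize}
\item $\tilde{\mathbf{K}}^\top\mathbf{C}^{-1}\mathbf{K}'$ is invertible by hypothesis.
\item $\tilde{\mathbf{K}}^\top\mathbf{C}^{-1}\mathbf{K}'$ being invertible forces $\mathbf{K}'$ to have full column rank $N$. Since $\mathbf{C}^{-1}\succ\mathbf{O}$ by Assumption~\ref{asm: positive definite covariance}, $\mathbf{G}'$ is then positive definite, hence invertible.
\item $\mathbf{I}+\mathbf{G}'$ is positive definite, hence invertible.
\end{itemize}

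AlphaEdit follows the same computation with $\mathbf{C}^{-1}$ replaced by $\mathbf{P}$. Here the invertibility of $\mathbf{K}'^\top\mathbf{P}\mathbf{K}'$ is the main obstacle, because $\mathbf{P}$ is only positive semidefinite. I handle it as follows. From $\tilde{\mathbf{K}}^\top\mathbf{P}\mathbf{K}' = (\mathbf{P}\tilde{\mathbf{K}})^\top(\mathbf{P}\mathbf{K}')$, using $\mathbf{P}=\mathbf{P}^2=\mathbf{P}^\top$, invertibility forces $\mathbf{P}\mathbf{K}'$ to have rank $N$. Then $\mathbf{K}'^\top\mathbf{P}\mathbf{K}' = (\mathbf{P}\mathbf{K}')^\top(\mathbf{P}\mathbf{K}')$ is a positive definite Gram matrix. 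Also $\mathbf{I}+\mathbf{K}'^\top\mathbf{P}\mathbf{K}'$ is always invertible.

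ROME is the scalar case $N=1$. We need
\[
\frac{\mathbf{r}'}{\tilde{\mathbf{k}}_*^\top\mathbf{C}^{-1}\mathbf{k}'_*} = \frac{\mathbf{r}_*}{\tilde{\mathbf{k}}_*^\top\mathbf{C}^{-1}\mathbf{k}_*},
\]
which is exactly the stated formula for $\mathbf{r}'_{\rm ROME}$. Its denominator is nonzero because of the invertibility hypothesis.

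Finally, note for the subsequent remark that the admissibility conditions on $\mathbf{K}'$ are non-vanishing of polynomial (determinant) functions of $\mathbf{K}'$. This is why the conclusion holds for almost all $\mathbf{K}'$.
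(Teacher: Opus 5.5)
Your proposal is correct and matches the paper's proof essentially step for step. The paper also inserts $\bigl(\tilde{\mathbf{K}}^\top\mathbf{C}^{-1}\mathbf{K}^\prime\bigr)\bigl(\mathbf{K}^{\prime\top}\mathbf{C}^{-1}\mathbf{K}^\prime\bigr)^{-1}\bigl(\mathbf{I}+\mathbf{K}^{\prime\top}\mathbf{C}^{-1}\mathbf{K}^\prime\bigr)$ and its inverse in front of the common right factor $\tilde{\mathbf{K}}^\top\mathbf{C}^{-1}$ (resp.\ $\tilde{\mathbf{K}}^\top\mathbf{P}$), and it proves the needed invertibility with the same full-column-rank/Gram-matrix argument; the only difference is that it argues via $\mathbf{C}^{-1}\mathbf{K}^\prime$ rather than $\mathbf{K}^\prime$, which is immaterial.
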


\begin{proof}

Since $\tilde{\mathbf{K}}^\top \mathbf{C}^{-1} \mathbf{K}^\prime$ is invertible, $\mathbf{C}^{-1} \mathbf{K}^\prime$ has full column rank, thus $\mathbf{K}^{\prime\top} \mathbf{C}^{-1} \mathbf{K}^\prime = \left( \mathbf{C}^{-1} \mathbf{K}^\prime \right)^\top \mathbf{C} \left( \mathbf{C}^{-1} \mathbf{K}^\prime \right) \succ \mathbf{O}$, further giving that both $\mathbf{K}^{\prime\top} \mathbf{C}^{-1} \mathbf{K}^\prime$ and $\mathbf{I} + \mathbf{K}^{\prime\top} \mathbf{C}^{-1} \mathbf{K}^\prime$ are invertible. Similarly, from $\tilde{\mathbf{K}}^\top \mathbf{P} \mathbf{K}^\prime$ is invertible, $\mathbf{P} \mathbf{K}^\prime$ is full column rank; since $\mathbf{P}$ is an orthogonal projection matrix, $\mathbf{P}^2 = \mathbf{P} = \mathbf{P}^\top$, thus $\mathbf{K}^{\prime\top} \mathbf{P} \mathbf{K}^\prime = \mathbf{K}^{\prime\top} \mathbf{P}^\top \mathbf{P} \mathbf{K}^\prime = \left( \mathbf{P} \mathbf{K}^\prime \right)^\top \left( \mathbf{P} \mathbf{K}^\prime \right) \succ \mathbf{O}$, giving that both $\mathbf{K}^{\prime\top} \mathbf{P} \mathbf{K}^\prime$ and $\mathbf{I} + \mathbf{K}^{\prime\top} \mathbf{P} \mathbf{K}^\prime$ are also invertible. 

By reorganizing the camouflaged weight updates $\Delta \mathbf{W}_{{\rm defense, }(\cdot)}\left( \mathbf{K}, \tilde{\mathbf{K}}, \mathbf{R} \right)$ in Eq. (\ref{eq: defense, total, rephrase}): 

\begin{equation}
  \begin{aligned}
    \Delta \mathbf{W}_{\rm defense, ROME}\left( \mathbf{k}_*, \tilde{\mathbf{k}}_*, \mathbf{r}_* \right) =& \frac{\mathbf{r} \tilde{\mathbf{k}}_*^\top \mathbf{C}^{-1}}{\tilde{\mathbf{k}}_*^\top \mathbf{C}^{-1} \mathbf{k}_*} \\ 
    =& \left[ \frac{\tilde{\mathbf{k}}_*^\top \mathbf{C}^{-1} \mathbf{k}_*^\prime}{\tilde{\mathbf{k}}_*^\top \mathbf{C}^{-1} \mathbf{k}_*} \cdot \mathbf{r}_* \right] \cdot \frac{\tilde{\mathbf{k}}_*^\top \mathbf{C}^{-1}}{\tilde{\mathbf{k}}_*^\top \mathbf{C}^{-1} \mathbf{k}_*^\prime}  \\
    =& \Delta \mathbf{W}_{\rm defense, ROME} \left( \mathbf{k}_*^\prime, \tilde{\mathbf{k}}_*, \mathbf{r}_{\rm ROME}^{\prime} \right) , 
  \end{aligned}
\end{equation}
\begin{equation}
  \begin{aligned}
    &\Delta \mathbf{W}_{\rm defense, MEMIT}\left( \mathbf{K}, \tilde{\mathbf{K}}, \mathbf{R} \right) \\=& \mathbf{R} \left( \mathbf{I} + \mathbf{K}^\top \mathbf{C}^{-1} \mathbf{K} \right)^{-1} \mathbf{K}^\top \mathbf{C}^{-1} \mathbf{K} \left( \tilde{\mathbf{K}}^\top \mathbf{C}^{-1} \mathbf{K} \right)^{-1} \tilde{\mathbf{K}}^\top \mathbf{C}^{-1} \\ 
    =& \left[ \mathbf{R} \left( \mathbf{I} + \mathbf{K}^\top \mathbf{C}^{-1} \mathbf{K} \right)^{-1} \mathbf{K}^\top \mathbf{C}^{-1} \mathbf{K} \left( \tilde{\mathbf{K}}^\top \mathbf{C}^{-1} \mathbf{K} \right)^{-1} \left( \tilde{\mathbf{K}}^\top \mathbf{C}^{-1} \mathbf{K}^\prime \right) \left( \mathbf{K}^{\prime\top} \mathbf{C}^{-1} \mathbf{K}^\prime \right)^{-1} \left( \mathbf{I} + \mathbf{K}^{\prime\top} \mathbf{C}^{-1} \mathbf{K}^\prime \right) \right] \\ \cdot& \left( \mathbf{I} + \mathbf{K}^{\prime\top} \mathbf{C}^{-1} \mathbf{K}^\prime \right)^{-1} \mathbf{K}^{\prime\top} \mathbf{C}^{-1} \mathbf{K}^\prime \left( \tilde{\mathbf{K}}^\top \mathbf{C}^{-1} \mathbf{K}^\prime \right)^{-1} \tilde{\mathbf{K}}^\top \mathbf{C}^{-1} \\
    =& \Delta \mathbf{W}_{\rm defense, MEMIT} \left( \mathbf{K}^\prime, \tilde{\mathbf{K}}, \mathbf{R}_{\rm MEMIT}^{\prime} \right) , 
  \end{aligned}
\end{equation}
\begin{equation}
  \begin{aligned}
    &\Delta \mathbf{W}_{\rm defense, AlphaEdit}\left( \mathbf{K}, \tilde{\mathbf{K}}, \mathbf{R} \right) \\=& \mathbf{R} \left( \mathbf{I} + \mathbf{K}^\top \mathbf{P} \mathbf{K} \right)^{-1} \mathbf{K}^\top \mathbf{P} \mathbf{K} \left( \tilde{\mathbf{K}}^\top \mathbf{P} \mathbf{K} \right)^{-1} \tilde{\mathbf{K}}^\top \mathbf{P} \\ 
    =& \left[ \mathbf{R} \left( \mathbf{I} + \mathbf{K}^\top \mathbf{P} \mathbf{K} \right)^{-1} \mathbf{K}^\top \mathbf{P} \mathbf{K} \left( \tilde{\mathbf{K}}^\top \mathbf{P} \mathbf{K} \right)^{-1} \left( \tilde{\mathbf{K}}^\top \mathbf{P} \mathbf{K}^\prime \right) \left( \mathbf{K}^{\prime\top} \mathbf{P} \mathbf{K}^\prime \right)^{-1} \left( \mathbf{I} + \mathbf{K}^{\prime\top} \mathbf{P} \mathbf{K}^\prime \right) \right] \\ \cdot& \left( \mathbf{I} + \mathbf{K}^{\prime\top} \mathbf{P} \mathbf{K}^\prime \right)^{-1} \mathbf{K}^{\prime\top} \mathbf{P} \mathbf{K}^\prime \left( \tilde{\mathbf{K}}^\top \mathbf{P} \mathbf{K}^\prime \right)^{-1} \tilde{\mathbf{K}}^\top \mathbf{P} \\
    =& \Delta \mathbf{W}_{\rm defense, AlphaEdit} \left( \mathbf{K}^\prime, \tilde{\mathbf{K}}, \mathbf{R}_{\rm AlphaEdit}^{\prime} \right) . 
  \end{aligned}
\end{equation}

This completes the proof.

\end{proof}

\begin{remark}\label{rem: why almost all} Discussion on the mild requirements of $\mathbf{K}^\prime$. 

In Theorem \ref{thm: non-recoverability of the key matrix for subspace camouflage}, we require both $\tilde{\mathbf{K}}^\top \mathbf{C}^{-1} \mathbf{K}^\prime$ and $\tilde{\mathbf{K}}^\top \mathbf{P} \mathbf{K}^\prime$ to be invertible. Since both $\tilde{\mathbf{K}}^\top \mathbf{C}^{-1} \mathbf{K}$ and $\tilde{\mathbf{K}}^\top \mathbf{P} \mathbf{K}$ are invertible, $\tilde{\mathbf{K}}^\top \mathbf{C}^{-1}$ and $\tilde{\mathbf{K}}^\top \mathbf{P}$ have full row rank. Thus, the corresponding two violation sets $\left\{ \mathbf{K}^\prime: \det\left( \tilde{\mathbf{K}}^\top \mathbf{C}^{-1} \mathbf{K}^\prime \right) = 0 \right\}
$ and $\left\{ \mathbf{K}^\prime: \det\left( \tilde{\mathbf{K}}^\top \mathbf{P} \mathbf{K}^\prime \right) = 0 \right\}$ are zero sets of polynomials, and thus have \textit{Lebesgue measure zero}. Therefore, the invertibility conditions fail only for a negligible portion of all possible $\mathbf{K}^\prime$, which justifies the claim about ``almost all" $\mathbf{K}^\prime$. 

\end{remark}

\end{document}